\definecolor{newcolor}{hsb}{0.6,1,0.75}
\setlist{  
  listparindent=\parindent,
  parsep=0pt,
}
\crefname{equation}{}{}
\DeclareRobustCommand\bfseries{%
  \not@math@alphabet\bfseries\mathbf
  \fontseries\bfdefault\selectfont\boldmath}
\theoremstyle{definition}
\declaretheorem[name=Theorem, numberwithin=section]{theorem}
\declaretheorem[name=Lemma,sibling=theorem]{lemma}
\declaretheorem[name=Corollary, sibling=theorem]{corollary}
\declaretheorem[name=Definition, sibling=theorem, style=definition]{definition}
\declaretheorem[name=Conjecture]{conj}
\newcommand{\poly}{\textrm{\upshape poly}}
\newcommand{\negl}{\textrm{\upshape negl}}
\newcommand{\newclass}[2]{\newcommand{#1}{{\text{\upshape\sffamily #2}}\xspace}}
\renewcommand{\P}{{\text{\upshape\sffamily P}}\xspace}
\newclass{\NP}{NP}
\newclass{\coNP}{coNP}
\newclass{\FP}{FP}
\newclass{\TFNP}{TFNP}
\newclass{\TFAP}{TFAP}
\newclass{\PLS}{PLS}
\newclass{\PPA}{PPA}
\newclass{\PPAD}{PPAD}
\newclass{\PPADS}{PPADS}
\newclass{\PPP}{PPP}
\newclass{\PAP}{PAP}
\newclass{\SAP}{SAP}
\newclass{\PWPP}{PWPP}
\newclass{\CLS}{CLS}
\newclass{\EOPL}{EOPL}
\newclass{\SOPL}{SOPL}
\newclass{\UEOPL}{UEOPL}
\newclass{\RAMSEY}{RAMSEY}
\newclass{\BIRAM}{BIRAMSEY}
\newclass{\cA}{A}
\newclass{\cB}{B}
\newclass{\BPP}{BPP}
\newclass{\PLC}{PLC}
\newclass{\UPLC}{UPLC}
\newclass{\PiH}{PiH}
\newclass{\BQP}{BQP}
\newcommand{\newprob}[2]{\newcommand{#1}{{\text{\upshape\scshape #2}}\xspace}}
\newprob{\leaf}{Leaf}
\newprob{\eol}{EoL}
\newprob{\eolLong}{End-of-Line}
\newprob{\sol}{SoL}
\newprob{\solLong}{Sink-of-Line}
\newprob{\iter}{Iter}
\newprob{\sod}{SoD}
\newprob{\sodLong}{Sink-of-Dag}
\newprob{\kkt}{KKT}
\newprob{\pA}{A}
\newprob{\pB}{B}
\newprob{\eopl}{EoPL}
\newprob{\eoplLong}{End-of-Potential-Line}
\newprob{\ueoplLong}{Unique-EoPL}
\newprob{\ueopl}{UEoPL}
\newprob{\eoml}{EoML}
\newprob{\eomlLong}{End-of-Metered-Line}
\newprob{\sopl}{SoPL}
\newprob{\soplLong}{Sink-of-Potential-Line}
\newprob{\pigeon}{Pigeon}
\newprob{\wpigeon}{WeakPigeon}
\newprob{\lonely}{Lonely}
\newprob{\reversiblepigeon}{RPigeon}
\newprob{\reversiblepigeonlong}{Reversible-Pigeon}
\newprob{\factoring}{Factoring}
\newprob{\nash}{Nash}
\newprob{\Or}{Or}
\newprob{\ramsey}{Ramsey}
\newprob{\biramsey}{BiRamsey}
\newprob{\sunflower}{Naive Sunflower}
\newprob{\uplc}{UPLC}
\newprob{\tuplc}{T-UPLC}
\newcommand{\cnf}{\ensuremath{\neg\mathrm{Total}}}
\newcommand{\set}[1]{\ensuremath{\{#1\}}}
\renewcommand{\restriction}{\!\upharpoonright\!}
\newcommand{\RR}{\ensuremath{\mathbb{R}}}
\newcommand{\ZZ}{\ensuremath{\mathbb{Z}}}
\newcommand{\Xcomment}[1]{{}}
\newcommand{\AZC}{{\sc AllZeroColumn}}
\newcommand{\E}{\mathbb{E}}
\newcommand{\psE}{\widetilde\E}
\newcommand{\B}{\{0,1\}}
\newcommand{\ourPigeon}[3]{% 
  #1\textup{\textsc{-Pigeon}}^{#2}_{#3}%
}
\newcommand{\tightPigeon}[1]{%
  #1\textup{\textsc{-Pigeon}}%
}
\newcommand{\pmap}{h}
\newcommand{\ps}{p}
\newcommand{\uEvent}{\mathsf{NonWit}}
\newcommand{\hitI}{\mathsf{Hit\textrm{-}i}}
\newcommand{\hitJ}{\mathsf{Hit\textrm{-}j}}
\setlist[description]{leftmargin=\parindent,labelindent=\parindent}
\begin{document}
% --------------------------------------------------------------------------- %

\mbox{}\vspace{12mm}

\begin{center}
{\huge On Pigeonhole Principles and Ramsey in  $\TFNP$}
\\[1cm] \large
	
\setlength\tabcolsep{2em}
\begin{tabular}{cccc}
Siddhartha Jain&
Jiawei Li&
Robert Robere&
Zhiyang Xun\\[-1mm]
\small\slshape UT Austin &
\small\slshape UT Austin & 
\small\slshape McGill &
\small\slshape UT Austin
\end{tabular}
	
\vspace{6mm}
	
\large
\today

\vspace{6mm}
	
\end{center}

\begin{abstract}
\noindent
% The generalized pigeonhole principle says that if $tN + 1$ pigeons are put into $N$ holes then there must be a hole containing at least $t+1$ pigeons. Let $t$-$\PPP$ denote the class of all total $\NP$-search problems reducible to finding such a $t$-collision of pigeons. We introduce a new hierarchy of classes defined by the problems $t$-\PPP. In addition to being natural problems in \TFNP, we show that classes in and above the hierarchy are related to the notion of multi-collision resistance in cryptography, and contain the problem underlying the breakthrough average-case quantum advantage result shown by Yamakawa \& Zhandry (\textsc{FOCS 2022}).

% Finally, we give lower bound techniques for the black-box versions of $t$-\PPP for any $t$. In particular, we prove that $\ramsey$ is not in $t$-\PPP, for any $t$ that is sub-polynomial in $\log(N)$, in the black-box setting. Goldberg and Papadimitriou conjectured that $\ramsey$ reduces to 2-\PPP, we thus refute it and more in the black-box setting. We also provide an ensemble of black-box separations which resolve the relative complexity of the $t$-$\PPP$ classes with other well-known \TFNP classes.

We show that the $\TFNP$ problem \ramsey is not black-box reducible to $\pigeon$, refuting a conjecture of Goldberg and Papadimitriou in the black-box setting. We prove this by giving reductions to \ramsey from a new family of \TFNP problems that correspond to generalized versions of the pigeonhole principle, and then proving that these generalized versions cannot be reduced to $\pigeon$. Formally, we define $t$-$\PPP$ as the class of total $\NP$-search problems reducible to finding a $t$-collision in a mapping from $(t-1)N+1$ pigeons to $N$ holes. These classes are closely related to multi-collision resistant hash functions in cryptography. We show that the generalized pigeonhole classes form a hierarchy as $t$ increases, and also give a natural condition on the parameters $t_1, t_2$ that captures exactly when $t_1$-$\PPP$ and $t_2$-$\PPP$ collapse in the black-box setting. Finally, we prove other inclusion and separation results between these generalized $\pigeon$ problems and other previously studied $\TFNP$ subclasses, such as $\PLS, \PPA,$ and $\PLC$. Our separation results rely on new lower bounds in propositional proof complexity based on pseudoexpectation operators, which may be of independent interest.
% characterizes whether a reduction exists between the corresponding classes in the black-box setting.
\end{abstract}

%\sid{Do a pass to fix typos pointed out by reviewers.}

\section{Introduction}
The theory of \TFNP is the study of \NP search problems that are \emph{guaranteed} to have solutions. In most problems studied in the literature, this guarantee is usually due to a non-constructive combinatorial lemma \cite{JPY88,Pap94}. Perhaps the most famous example is the Pigeonhole Principle (PHP), which defines the search problem \pigeon: given a polynomial-size circuit encoding a mapping from $N+1$ pigeons to $N$ holes, find two pigeons that are in the same hole. However, many other non-constructive combinatorial principles also play important roles in \TFNP, including variants of the Handshaking Lemma (corresponding to the classes $\PPA$, $\PPAD$) \cite{Pap94, Beame95}, the guaranteed convergence of local search algorithms (corresponding to the class $\PLS$) \cite{JPY88, Pap94}, and the guaranteed convergence of gradient descent (corresponding to the class $\CLS$~\cite{DP11CLS, FGHS23}).
Importantly, \emph{polynomial-time reducibility} between these various search problems corresponds directly to how \emph{relatively constructive} the corresponding combinatorial principles are. 
From this viewpoint, \TFNP can be seen as a kind of ``bounded reverse mathematics'', in which we seek to study what families of non-constructive combinatorial principles can have their witnesses \emph{constructively} reduced to each other.

One problem which has so far resisted classification is the search problem corresponding to \emph{Ramsey's Theorem}. 
In the \ramsey problem, introduced by Krajíček \cite{Kra05}, we are given a polynomial-size circuit $C$ encoding the edge relation of a graph on $N = 2^{n+1}$ vertices, and our goal is to find a clique or independent set of size $n$ on the graph. 
Since the underlying combinatorial principle comes from extremal combinatorics, it is natural to ask if there is a polynomial-time reduction to or from the Pigeonhole Principle, the prototypical example of extremal reasoning.
This seems especially reasonable given that standard proof of Ramsey's theorem is essentially a \emph{recursive} application of the pigeonhole principle\footnote{Start by picking a vertex $v$, and then delete all vertices that are adjacent to $v$ or non-adjacent to $v$, whichever set is larger, and then repeat this process. Continuing in this way we can construct a sequence of $\log n$ vertices, and at least half of these must be either all adjacent or non-adjacent to each other.}.
Whether or not \ramsey is in \PPP --- the \TFNP subclass defined by polynomial-time reductions \pigeon --- has been informally asked many times, was formally conjectured by Goldberg and Papadimitriou \cite{GP17}, and also appeared in Daskalakis's recent ICM plenary lecture~\cite[Open Question 17]{daskalakis2019equilibria}.

\begin{conj}[Goldberg \& Papadimitriou \cite{GP17}]
    \label{conj:ramsey-ppp}
    \ramsey is in \PPP.
\end{conj}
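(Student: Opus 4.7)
The plan is to directly compile the textbook recursive pigeonhole proof of Ramsey's theorem into a many-one reduction to \pigeon. Given a graph $G$ on $N = 2^{n+1}$ vertices, I would construct a polynomial-size circuit $h \colon [M+1] \to [M]$ for some $M = \poly(N)$ with the property that any collision $h(x) = h(y)$ can be decoded into either an $n$-clique or an $n$-independent set of $G$.

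The first step is to choose an encoding of ``partial Ramsey witnesses'' large enough to force a collision by cardinality. The natural candidate is to let each pigeon be an indexed sequence $(v_0, v_1, \ldots, v_k; b_0, \ldots, b_{k-1})$ where $k \le n$, each $b_i \in \{0,1\}$ records whether $v_i$ is adjacent or non-adjacent to $v_{i+1}$, and the $v_i$ trace out the recursive selection in the standard proof. The holes should encode the ``target state'' such a partial witness must reach, so that when the Ramsey recursion on $G$ has no $n$-monochromatic output, the number of realizable states is strictly smaller than the number of partial witnesses. The second step is to define $h$ so that a collision between two distinct partial witnesses directly reveals either a new extension of the current monochromatic chain or, if none exists, an endpoint of it that closes into a full $n$-clique or $n$-independent set. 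In spirit, the entire $\Theta(\log N)$-deep recursion of the standard proof should be compiled into the combinational structure of $h$, rather than unfolded across multiple pigeonhole applications.

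The principal obstacle, and the step I expect to be the crux, is the passage from \emph{iterated} pigeonhole to a \emph{single} pigeonhole. The textbook argument invokes pigeonhole roughly $\log N$ times in sequence, each time on a strictly smaller vertex set conditioned on the outcomes of previous steps. A \pigeon instance provides only one application, so any reduction must arrange that a single collision in $h$ simultaneously certifies progress at every recursive level. The natural encodings above tend to yield collisions that only certify one level — producing, say, a common neighbor of the current chain — leaving the remaining levels unresolved and requiring yet another \pigeon call to continue. Avoiding this collapse of information seems to require a highly non-obvious design in which each hole identifier bundles the full vertex-selection history in a \emph{collision-sensitive} way, so that any pair mapping to it must disagree in a coordinate that pins down the entire Ramsey witness. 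Whether such an encoding can exist at all is essentially what \Cref{conj:ramsey-ppp} asserts, and it is precisely at this step that the approach must either succeed or fundamentally fail.
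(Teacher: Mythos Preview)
Your proposal attempts to \emph{prove} \Cref{conj:ramsey-ppp}, but the paper does not prove this conjecture---it \emph{refutes} it in the black-box setting (\Cref{thm:ramsey-pigeon}). There is no ``paper's own proof'' to compare against; the paper's contribution is precisely to show that no reduction of the kind you are sketching can exist.

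More specifically, the reduction you describe would be a black-box reduction: your circuit $h$ treats the edge relation $C$ of $G$ as an oracle and attempts to translate a single \pigeon collision into a Ramsey witness. The paper proves (via \Cref{thm:ramseyinstantiated} combined with \Cref{thm:general_lb}) that no such polylog-depth decision-tree reduction from $\ramsey$ to $\pigeon$ exists. The mechanism is that $\ramsey$ sits above $\ourPigeon{t}{M}{N}$ for $t$ growing polynomially in $n$, and the paper's collision-free pseudoexpectation argument shows that $\tightPigeon{2}$ cannot simulate a $t$-collision for any $t \geq 3$, regardless of compression rate. So the obstacle you correctly isolate---compiling $\Theta(\log N)$ sequential pigeonhole applications into a single one---is not a technical difficulty to be overcome by a cleverer encoding; it is provably insurmountable with current (black-box) techniques. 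Your final paragraph is honest about this gap, but you should recognize that the paper's main theorem closes the door on the entire approach rather than leaving it open.
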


However, despite the attention given to this problem, actually finding a reduction from $\ramsey$ to $\PPP$ has remained elusive.
Recently, a line of work has weakened this goal, and instead attempted to place \ramsey into \emph{any} natural subclass of \TFNP \cite{Thapen22, Pasarkar2023,Bourneuf23}. 
Of particular note is the work of Pasarkar, Papadimitriou, and Yannakakis~\cite{Pasarkar2023}, who defined a novel new \TFNP subclass called \PLC.
This class exactly captures the kinds of ``recursive applications'' of the pigeonhole principle that are used in the typical proof of Ramsey's Theorem and other extremal combinatorial principles, like the Erd\"{o}s-Rado Sunflower Lemma. 
Pasarkar et al.~showed that both \ramsey and \pigeon were contained in $\PLC$, but left open the question of determining how much extra power $\PLC$ contains over $\PPP$.
% \pigeon is defined to capture as a search problem the following statement: 
% \pigeon is the search problem corresponding to the following lemma: if $M$ pigeons are put into $N$ holes where $N<M$ then there must be one hole with at least 2 pigeons. 
% But one can also try to capture the \emph{generalized} pigeonhole principle: if $M$ pigeons are put into $N$ holes, where $tN < M$ then there must be a hole with at least $t+1$ pigeons. % Further, we can generalize the conjecture to say that \ramsey reduces to these \TFNP problems corresponding to this generalized pigeonhole principle.

%address this conjecture for the hierarchy of problems created by it.  We think our hierarchy of problems is quite natural, and we answer the conjecture in negative for all but one subclass of it.

% To prove \cref{conj:ramsey-ppp} one need to only come up with a reduction. But to disprove this conjecture in full generality would imply $\P \neq \NP$. Therefore, one can only hope to disprove it in the \emph{black-box} setting: where we are given query access to the input.

\subsection{Our Results}

Our main result is to negatively resolve \cref{conj:ramsey-ppp} in the \emph{black-box setting}, where the inputs are provided by oracle queries.
\begin{theorem}
    \label{thm:ramsey-pigeon}
    There is no black-box reduction from \ramsey to \pigeon.
\end{theorem}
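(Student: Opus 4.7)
My plan is to prove \cref{thm:ramsey-pigeon} by interposing the generalized class $t$-$\PPP$ between $\pigeon$ and \ramsey: I would show that for some $t \geq 3$ the class $t$-$\PPP$ reduces (in the black-box sense) to \ramsey, and then show that $t$-$\PPP$ does \emph{not} black-box reduce to $\pigeon$. Combining these two facts immediately yields the separation, since a black-box reduction $\ramsey \leq \pigeon$ would compose with the first reduction to give the second, contradicting the lower bound.

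The upper-bound half, reducing $t$-$\PPP$ to \ramsey, I expect to follow the standard ``recursive pigeonhole'' proof of Ramsey's theorem sketched in the footnote of the introduction. Given an instance $\pmap:(t-1)N+1 \to N$ of $t$-Pigeon, I would build a graph whose edges record, for each pair of pigeons, which of them ``dominates'' the other in a tournament-like way derived from $\pmap$, so that any large clique/independent set in the graph forces a $t$-collision in $\pmap$. The parameters of the reduction (size of the graph versus number of pigeons) must be balanced so that a Ramsey-sized monochromatic set is guaranteed to produce $t$ pigeons that have landed in a common hole. I would tune the construction so this works for the smallest $t$ for which my lower bound goes through.

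The main technical obstacle is the lower bound $t$-$\PPP \not\leq \pigeon$ in the black-box model. Via the standard translation between black-box \TFNP reductions and propositional proof complexity, this is equivalent to proving that the CNF encoding $\neg t\text{-}\mathrm{PHP}$ (``$(t-1)N+1$ pigeons mapped to $N$ holes with no $t$-collision'') has no short refutation from axioms expressing $\neg\mathrm{PHP}$ (the usual bijective/surjective pigeonhole principle), in the relevant proof system (typically Nullstellensatz or a Res-like system matching $\PPP$). To prove such a lower bound I would construct a \emph{pseudoexpectation operator} $\psE$ over assignments to the variables of the combined instance: $\psE$ must assign value $0$ to the clauses of $\neg t\text{-}\mathrm{PHP}$ (so it is consistent with ``the $t$-Pigeon instance has no solution'') while still satisfying, up to low degree, the $\pigeon$-style axioms of any putative reduction. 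The construction will likely be symmetric under the natural permutation action on pigeons and holes, built by combining the uniform distribution on ``almost-injective'' assignments with a correction term that forces the reduction's $\PPP$-axioms to hold in expectation. The delicate step is verifying that this pseudoexpectation fools degree-$d$ certificates for the relevant $d = N^{\Omega(1)}$; this is where I expect the argument to be most intricate and where the hypothesis on $t$ will enter quantitatively.

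Finally, I would assemble the pieces: picking $t$ (and parameters of the Ramsey graph) so that both the reduction $t$-$\PPP \to \ramsey$ and the pseudoexpectation lower bound $t$-$\PPP \not\leq \pigeon$ hold simultaneously, and then deducing \cref{thm:ramsey-pigeon} by the contrapositive composition sketched above. The hardest part is undoubtedly the pseudoexpectation construction; the Ramsey reduction is essentially a formalization of the textbook proof, and the composition is black-box formalism.
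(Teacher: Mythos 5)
Your high-level decomposition --- interpose a generalized pigeonhole problem between $\ramsey$ and $\pigeon$, prove an upper bound into $\ramsey$ and a lower bound against $\pigeon$, and compose --- is exactly the paper's strategy. But both halves of your sketch have problems that, as written, would prevent the argument from going through.

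On the upper bound, the ``recursive pigeonhole'' proof of Ramsey's theorem from the footnote runs in the \emph{wrong direction}: it shows how to extract a clique or independent set by repeatedly invoking pigeonhole, which is a procedure one would use to place $\ramsey$ \emph{inside} an iterated-pigeonhole class (indeed this is morally why $\ramsey \in \PLC$), not to reduce a pigeonhole instance \emph{to} $\ramsey$. Your ``tournament-like'' graph on the pigeons is not specified in a way that makes the correctness argument visible, and it omits the key ingredient the paper actually uses: the reduction in \cref{lem:pigeon2ramsey} starts from a \emph{fixed} $K$-Ramsey graph $G_0$ on the $N$ holes and pulls it back through $h$, putting an edge between pigeons $u,v$ iff $h(u)=h(v)$ or $(h(u),h(v))\in E_0$. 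Then a large monochromatic set in the resulting graph projects under $h$ to a small clique/IS in $G_0$, and an averaging step forces a $t$-collision. The existence of $G_0$ (Erd\H{o}s's bound, or an explicit construction) is what makes the parameters work. Without this pullback-through-a-fixed-Ramsey-graph idea, it is unclear your construction succeeds. Also note the inclusion you get is for $\ourPigeon{t}{M}{N}$ with $M \ge N^{4t}/4^t$, \emph{not} for the tight $\tightPigeon{t}$ defining $t$-$\PPP$; the composition only works because the paper's lower bound (\cref{thm:general_lb}) is robust to the compression rate.

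On the lower bound, there is a more fundamental gap. You frame the argument as a proof-complexity lower bound ``in the relevant proof system (typically Nullstellensatz or a Res-like system matching $\PPP$).'' But no such characterization exists: as the paper points out explicitly, there is currently \emph{no known natural proof system characterizing $\PPP$} or any level of the Pecking Order. (The $\PPADS$/Sherali-Adams correspondence of G\"o\"os et al.\ is for a different class and does not by itself give $\PPP$ lower bounds --- indeed the matching pseudoexpectation, a nonwitnessing pseudoexpectation, exists for $\tightPigeon{2}$ itself, which is $\PPP$-complete.) The paper's actual contribution is to sidestep the lack of a proof system entirely via a new notion, \emph{$(d,t,\varepsilon)$-collision-free pseudoexpectations} (\cref{def:collision-free}): a single pseudoexpectation for the \emph{source} problem that additionally assigns total mass at most $\varepsilon$ to every $t$-witnessing family of conjunctions. \cref{thm:psExp} shows directly that such an operator rules out depth-$d$ reductions to $\ourPigeon{t}{M}{N}$, and \cref{lem:decision-tree-bound} establishes that the standard matching pseudodistribution has this collision-freedom. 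Your sketch of ``uniform on almost-injective assignments plus a correction term'' does not match this; there is no correction term, and the heavy lifting is the combinatorial lemma bounding $\psE[\mathcal{F}]$ for mutually-inconsistent families, not a tuning of moments to ``$\PPP$-axioms.'' Without the collision-freedom machinery (or an equivalent substitute), the pseudoexpectation you describe would only yield a $\PPADS$ separation, which is far weaker than what you need.
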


We note that \emph{all} known upper-bound techniques in \TFNP also work in the black-box setting, so our theorem rules out any approach to the above conjecture using current technology. Any unconditional white-box separation within \TFNP would imply $\P \neq \NP$, thus, one can only hope to prove separations in the black-box setting without a breakthrough in complexity theory.
Also, as an immediate corollary of our previous theorem, we conclude that $\PLC$ strictly contains $\PPP$ with respect to black-box reductions, answering a question of Pasarkar, Papadimitriou, and Yannakakis \cite{Pasarkar2023}.

To prove \cref{thm:ramsey-pigeon}, we introduce a new family of problems in $\TFNP$ that correspond to \emph{generalized} Pigeonhole Principles and systematically study their properties. Whereas the Pigeonhole Principle says that any method of placing $n+1$ pigeons into $n$ holes must result in a collision of \emph{two} pigeons in one hole, the generalized Pigeonhole Principle we study says that any method of placing $t(n+1)$ pigeons into $n$ holes must result in a collision of $t+1$ pigeons in a hole. We can encode this as a total search problem as follows.
\begin{restatable}{definition}{ourpigeon}%($\ourPigeon{t}{M}{N}$)
\label{def:pigeon}
    For any positive integer $n$ let $t(n), M(n)$, and $N(n)$ be integer parameters satisfying $M > (t-1)N$. The $\ourPigeon{t}{M}{N}$ problem is defined as follows.
    
    \begin{description}
        \item[Input] $(n, h)$, where $n$ is given in unary and $h:[M] \rightarrow [N]$ is a map of $M$ pigeons to $N$ holes represented by a $\poly(n)$-size circuit.
        \item[Solutions] A \emph{$t$-collision} in $h$, which is a set of $t$ pigeons that are all assigned to the same hole by the circuit.
    \end{description}
\end{restatable}
%In accordance with convention, we assume $\log(M(n)) = \poly(n)$ and $N(n)$ is super-polynomial in $n$. \david{Some annoying issues here, we may have $\log N$ is $\log^2 n$ if we set $N = n^{\log n}$, so, $\log M$ is the correct estimate to use. On the other hand, if we always set $N = 2^n$, we also lose some generality in our parameter range, especially when we are working with Ramsey.} \sid{Hmm is it cleaner to stick with $N=2^n$ and re-introduce a subscript for Ramsey?}
We note that for this problem to be in \TFNP, $t(n)$ must have at most polynomial growth rate, as otherwise a solution will be too large to verify in polynomial time. 
As an important special case, we use $\tightPigeon{t}$ to indicate $\ourPigeon{t}{M}{N}$ with the tight parameter setting $M = (t-1)N+1$ --- note that $\tightPigeon{2} = \pigeon$, for instance.
Finally, $N = 2^n$ in the typical setting of interest, and so we will use ``$n$'' and ``$\log N$'' interchangeably as parameters.

The $\pigeon$ problem and its $\tightPigeon{t}$ variants are naturally related to cryptography, as solving them efficiently would imply the ability to find collisions in collision-resistant hash functions. These hash functions usually compress the output from say $2n$ to $n$ bits, hence ore relevant here is the `weak' version of $\pigeon$, which defines the class $\PWPP$.
While they have been implicitly studied before by cryptographers (see Related Work in \cref{sec:concurrent} for more details), they are studied systematically\footnote{Also see the recent independent work of \cite{BGS24merrier}, in which these problems are also studied.} here from the perspective of $\TFNP$ for the first time.
To relate these problems to $\ramsey$, we generalize an argument due to Komargodski, Naor, and Yogev \cite{KMY18} to obtain the following:

\begin{restatable}{theorem}{ramseyinstantiated}
    \label{thm:ramseyinstantiated}
%      Whenever $2t(2 \log N - 1) \le \log M$, $\ourPigeon{t}{M}{N}$ can be black-box reduced to $\ramsey$ and its bipartite variant $\biramsey$.

    Whenever $M \ge N^{4t}/4^t$, $\ourPigeon{t}{M}{N}$ can be black-box reduced to $\ramsey$ and its bipartite variant $\biramsey$.
\end{restatable}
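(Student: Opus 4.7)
I plan to generalize the reduction of Komargodski, Naor, and Yogev~[KMY18] from $\pigeon$ (the $t = 2$ case) to $\ramsey$ and $\biramsey$ to the multi-collision setting. I will describe the reduction to $\biramsey$ in detail; the reduction to $\ramsey$ proceeds via the analogous single-copy construction.

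Given the polynomial-size circuit computing $h : [M] \to [N]$, I construct a bipartite graph $G$ on two copies of $[M]$ by placing an edge $(i,j)$ exactly when $h(i) = h(j)$. A single query to $\biramsey$ on $G$ returns a monochromatic $K_{s,s}$ that is either a biclique or an anti-biclique, with $s$ scaling logarithmically in the vertex count. In the biclique case $L \times R$, every cross-pair satisfies $h(l) = h(r)$, so all pigeons of $L \cup R$ share one common hash value, yielding an $s$-collision; since the parameter $M \geq N^{4t}/4^t$ is chosen so that $s \geq t$, this immediately gives a $t$-collision. In the anti-biclique case $L \times R$, the hash sets $h(L)$ and $h(R)$ are disjoint; if either side already contains $t$ pigeons sharing a bucket, I output that $t$-collision, and otherwise each hash value is used at most $t-1$ times on each side, so $|h(L)|, |h(R)| \geq s/(t-1)$ and disjointness forces $2s/(t-1) \leq N$. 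For the reduction to $\ramsey$, I use the analogous equality graph on a single copy of $[M]$: cliques yield $t$-collisions directly, and a large independent set of pigeons with pairwise distinct hashes is split into two halves and handled by the same disjointness argument as for anti-bicliques.

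The main obstacle is matching the explicit parameter bound $M \geq N^{4t}/4^t$. The naive disjointness analysis of the anti-biclique case only rules that case out when $s > (t-1)N/2$, which via standard bipartite Ramsey bounds would require $M$ exponential in $N$, far weaker than the claimed polynomial-in-$N$ dependence. Closing this gap is where real work is needed: I expect the actual reduction uses a more refined construction than a plain equality graph, for instance working over pairs/tuples of pigeons, labeling edges with partial hash information, or invoking the oracle on a carefully chosen subset or product construction, so that both sides of the Ramsey dichotomy provide usable structural information that compounds into a $t$-collision. The exponent $4t$ and the denominator $4^t$ should then emerge from carefully tracking the Ramsey bound $s = \Theta(\log M)$, the number $N$ of available hash values, and the $(t-1)$-factor coming from the pigeonhole step that converts same-bucket pigeons into a $t$-collision. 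Verifying that these constants line up is the central calculation, generalizing the $t = 2$ computation of KMY.
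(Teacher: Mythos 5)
You correctly identify that the plain equality graph on $[M]$ cannot close the loop: the biclique/clique case is trivially fine, but the independent-set case only yields the far-too-weak bound $2s/(t-1)\le N$, which would require $M$ exponential in $N$. That is indeed the gap, and the speculation about ``tuples of pigeons'' or ``labeled edges'' does not lead to the paper's construction. The missing idea is the \emph{graph hash product} of Komargodski--Naor--Yogev: fix an auxiliary Ramsey graph $G_0$ on the $N$ \emph{holes} with no clique or independent set of size $K$ (by Erd\H{o}s's probabilistic bound one can take $K = 2\log N$), and build the $M$-vertex graph $G$ on the pigeons by putting an edge $(u,v)$ whenever $h(u) = h(v)$ \emph{or} $(h(u),h(v))\in E(G_0)$. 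The key structural fact is that if $S\subseteq [M]$ is a clique (resp.\ independent set) in $G$, then the image $h(S)\subseteq [N]$ is a clique (resp.\ independent set) in $G_0$ --- so $|h(S)|<K$ regardless of which side of the Ramsey dichotomy fires. This is exactly what your plain equality graph lacks: in your construction, an independent set projects to distinct holes and gives no collision pressure at all, whereas here the auxiliary graph $G_0$ caps the number of distinct holes that any clique \emph{or} independent set can touch.

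Once $|h(S)|<K$, finding a clique/independent set $S$ of size $t(K-1)$ forces, by pigeonhole, some hole to receive at least $t$ pigeons from $S$, i.e.\ a $t$-collision. To invoke $\ramsey$ (or $\biramsey$) on the $M$-vertex graph $G$ and extract a clique/IS of size $t(K-1)$, one needs $(\log M)/2 \ge t(K-1)$, i.e.\ $\log M \ge 2t(K-1)$. Plugging in $K = 2\log N$ gives $\log M \ge 2t(2\log N - 1) = 4t\log N - 2t$, which is exactly $M\ge N^{4t}/4^t$. (The reduction is black-box even though $G_0$ is non-explicit, since $G_0$ is fixed data about the hole set and does not depend on the oracle $h$; the decision trees for $G$ only query $h(u)$ and $h(v)$.) So your plan has the right skeleton but needs the auxiliary Ramsey graph superimposed on the holes to make the independent-set branch informative; the factor $t(K-1)$ and the Erd\H{o}s value $K=2\log N$ are where the exponent $4t$ and denominator $4^t$ actually come from.
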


We note that Krajíček \cite{Kra05} showed that $\ourPigeon{2}{2N}{N}$ reduces to $\ramsey$, which is generalized by the above theorem.
The previous theorem implies, for instance, that there is a black-box reduction from $\ourPigeon{t}{N^{4t}}{N}$ to $\ramsey$, and so to prove \cref{thm:ramsey-pigeon} we will argue that there is \emph{no} black-box reduction from $\ourPigeon{t}{N^{4t}}{N}$ to $\pigeon$.
In order to do this we develop and apply techniques from \emph{propositional proof complexity}, which are described in more detail in \cref{sec:our-techniques} and may be of independent interest.
For now, we note that $\ourPigeon{t}{N^{4t}}{N}$ is, at first glance, seemingly incomparable with $\pigeon$ in power.
On one hand, $\ourPigeon{t}{N^{4t}}{N}$ is much \emph{weaker} than $\pigeon$, as the number of pigeons is so much greater than the number of holes.
On the other hand, in this problem we are seeking a collision of $t$ pigeons in a hole, and it is not at all clear how to do this when we can only guarantee a collision of \emph{two} pigeons.

\subsubsection{The Pecking Order}

Indeed, we will prove much more than the non-reducibility of $\ourPigeon{t}{N^{4t}}{N}$ to $\pigeon$.
In fact, we exhibit a \emph{hierarchy theorem}: $\ourPigeon{t}{M}{N}$ does not black-box reduce to $\ourPigeon{(t-1)}{M'}{N'}$, \emph{no matter the compression rates of pigeons to holes in either problem}.
Before we formally state our next results, let us first introduce complexity classes capturing these generalized pigeon problems.

\begin{definition}[$t$-\PPP and $t$-\PWPP]\label{def:t_ppp}
    For any function $t(n) \geq 2$, define the classes
    \begin{description}
        \item[{$t$-\PPP}.] All search problems reducible to $\tightPigeon{t}$.
        \item[{$t$-\PWPP}.] All search problems reducible to $\ourPigeon{t}{M}{N}$, with $M = (t-1+c)N$ for constant $c > 0$.
    \end{description}
\end{definition}

% \david{How to ``unbold'' the font for $t$-\PPP in the follow definition?}

Note that we can define the classic \TFNP classes \PPP and \PWPP by taking $t=2$ in \cref{def:t_ppp}. Trivially, we have $t$-\PWPP $\subseteq$ $t$-\PPP for any $t \geq 2$. Furthermore, a simple padding argument (cf. \cref{lem:hierarchy}) shows that these classes form a hierarchy: $t$-\PPP is contained in $(t+1)$-\PPP for each $t$.
We call the hierarchy of classes for \emph{constant} values of $t$ the \emph{Pigeon Hierarchy}, denoted by \PiH.
\begin{definition}[Pigeon Hierarchy]
    $\PiH = \displaystyle \bigcup_{t=2}^{\infty} t\text{-}\PPP$
\end{definition}

Lying atop the Pigeon Hierarchy are the $t$-\PPP classes where $t(n)$ is a \emph{growing} function of the input size.
We isolate two interesting cases here.
The strongest we call \PAP, for \emph{Polynomial Averaging Principle}.
\begin{definition}%[Polynomial Averaging Principle]
    \PAP is the set of all search problems reducible to $n\textrm{-}\PPP$.
\end{definition}

By a fairly simple argument (cf.~\cref{thm:polynomial-robustness}), one can show that $t(n)$-\PPP is contained in $\PAP$ for \emph{any} polynomial function $t(n)$.
Since $t(n)$ must have polynomial growth rate in order for $\tightPigeon{t}$ to be in \TFNP, this means that $\PAP$ is the strongest possible class of generalized pigeon problems.
Similar to this definition, we introduce the class \SAP (for \emph{Subpolynomial Averaging Principle}).

\begin{definition}%[Subpolynomial Averaging Principle]
   \SAP is the set of all search problems reducible to $t(n)\text{-}\PPP$, for some $t(n)$ sub-polynomial\footnote{A function $t(n)$ is said to be sub-polynomial if $t(n) = o(n^c)$ for any $c>0$.} in $n$.
\end{definition}

Intuitively, $\SAP$ contains all problems defined by $\tightPigeon{t}$, but excludes the hardest problems in $\PAP$, which is convenient when stating our results in the strongest form. We call this entire collection of complexity classes the \emph{Pecking Order}.

\subsubsection{Structure of the Pecking Order}

We are able to \emph{completely} characterize the relationships between various $t$-\PPP classes in the black-box setting.
In the black-box setting, instead of providing the inputs succinctly via polynomial-size boolean circuits, we instead think of the inputs as being provided as black-box oracles.
For example, in the black-box version of the $\pigeon$ problem, the map from pigeons to holes is provided by an oracle $f$ which, given the name of the pigeon, outputs the hole that the pigeon maps to.
In general, if $\pA$ is a total search problem defining a $\TFNP$ subclass $\cA$ via black-box reducibility, we will use $\pA^{dt}$ ($\cA^{dt}$, resp.) to denote the black-box versions of this problem and class (we refer to \cref{sec:prelim} for formal definitions).

To start, on the side of separations, we are able to prove that the Pigeon Hierarchy (\PiH) is strict. 
Indeed, we can prove very strong black-box separations between $\ourPigeon{t}{M}{N}$ and $\ourPigeon{(t+1)}{M'}{N'}$. 

\begin{restatable}{theorem}{pihsep}
\label{cor:pih-sep}
    If $t$ is a constant and $M, N, M', N'$ are parameters chosen so that $M' = \poly(M)$, $M \geq tN + 1$, $M' \geq (t-1)N' + 1$,
    then $\ourPigeon{(t+1)}{M}{N}$ does not have an efficient black-box reduction to $\ourPigeon{t}{M'}{N'}$.
    In particular, $(t+1)\text{-}\PPP^{dt} \not \subseteq t\text{-}\PPP^{dt}$ for any constant $t$, and so $\PiH^{dt}$ forms a strict hierarchy in the black-box setting.
\end{restatable}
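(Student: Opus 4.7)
The plan is to establish this separation by combining an adversarial query strategy with the paper's pseudoexpectation-based proof complexity lower bounds. First, I would set up the reduction in the query-complexity framework: a black-box reduction from $\ourPigeon{(t+1)}{M}{N}$ to $\ourPigeon{t}{M'}{N'}$ is a polynomial-query oracle algorithm that, given oracle access to $h: [M] \to [N]$, defines $h': [M'] \to [N']$ via oracle circuits, together with a decoder $D$ that, given any $t$-collision $s$ of $h'$ and further oracle access to $h$, outputs a $(t+1)$-collision in $h$ using polynomially many queries.

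The natural adversarial strategy is to maintain $h$ as a partial function and respond to each query with a hole currently containing strictly fewer than $t$ committed pigeons. Since $N$ is exponential in $\log N$ while the reduction's total query budget is only polynomial, such a hole always exists, so the invariant ``no $t$-collision among queried pigeons'' is maintained throughout. Any candidate $(t+1)$-collision $(i_1, \ldots, i_{t+1})$ output by $D$ is then defeated by case analysis: if $\ge t$ of the $i_k$ are queried, the invariant prevents them from sharing a hole; if $\ge 2$ are uncommitted, the adversary extends $h$ so as to scatter those across distinct holes. A free extension of the rest of $h$ keeps $h$ a valid $\ourPigeon{(t+1)}{M}{N}$ instance, since $M \ge tN+1$ forces a $(t+1)$-collision somewhere regardless.

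The main obstacle is the compressive regime $M \gg N$ (for instance, the parameter regime $M \approx N^{4t}$ needed for the reduction to $\ramsey$). To feed $D$ a valid $t$-collision $s$ in $h'$, one must locate $s$ inside $h'$ --- by pigeonhole this nominally requires computing $h'$ at $(t-1)N' + 1$ points, which may cost $\poly(M)$ queries to $h$ and potentially exceed the adversary's capacity to avoid $t$-collisions when $\poly(M) \gg N$. To handle this, I would translate the hypothetical reduction into a short refutation of the $(t+1)$-Pigeon contradiction in a propositional proof system enriched with $t$-Pigeon axioms, and construct an explicit pseudoexpectation operator $\widetilde{\E}$ on monomials over pigeon--hole indicator variables $x_{i,j}$ such that (i) the functional and pigeon axioms for $h$ are satisfied, (ii) the $t$-Pigeon axioms for any instance $h'$ derivable from $h$ are satisfied, and (iii) the clause asserting a $(t+1)$-collision is falsified. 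A standard pseudoexpectation lower bound of the kind developed earlier in the paper then precludes any short refutation, and hence any such reduction.

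Finally, the corollary that $\PiH^{dt}$ forms a strict hierarchy follows by specializing the theorem to the tight parameter setting $M = tN + 1$ and $M' = (t-1)N' + 1$ with $N' = \poly(N)$: the $\ourPigeon{(t+1)}{tN+1}{N}$ instance lies in $(t+1)\text{-}\PPP^{dt}$ by definition but, by the theorem, admits no black-box reduction to any admissible instance of $\ourPigeon{t}{M'}{N'}$, and hence lies outside $t\text{-}\PPP^{dt}$.
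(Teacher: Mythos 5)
You correctly identify the main obstacle: the naive query adversary breaks down when $M'=\poly(M)\gg N$, because the reduction need only find \emph{some} $t$-collision among the $M'$ decision-tree-defined pigeons $f_i$, and the adversary cannot track which one will be used within a polylogarithmic query budget. Your instinct to pivot to pseudoexpectations is also right. But the pivot itself is where the gap lies. The phrase ``translate the hypothetical reduction into a short refutation \ldots in a propositional proof system enriched with $t$-Pigeon axioms, and a standard pseudoexpectation lower bound \ldots then precludes any short refutation'' assumes machinery that does not exist: no natural proof system is known to characterize \PPP\ or any $t$-\PPP\ (the paper explicitly notes this), so there is no proof-system intermediary to appeal to; the argument must work directly against decision-tree reductions. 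More seriously, a standard nonwitnessing pseudoexpectation --- which is essentially what your conditions (i) and (iii) encode, with (ii) left unspecified --- is provably too weak: the matching pseudoexpectation is already nonwitnessing for $\pigeon$ itself, which is \PPP-complete, so nonwitnessingness alone cannot separate any levels of the Pecking Order.

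The missing idea is the strengthened property of $(d,t,\varepsilon)$-\emph{collision-freedom} (\cref{def:collision-free}): for every family $\mathcal F$ of degree-$\le d$ conjunctions that is $t$-witnessing for the source problem, one must have $\psE[\mathcal F]\le\varepsilon$. \cref{thm:psExp} shows that a $(d,t,\varepsilon)$-collision-free pseudoexpectation with $\varepsilon<M'/N'$ rules out depth-$d$ reductions to $\ourPigeon{t}{M'}{N'}$, via a counting argument: summing the leaf-weights of the $M'$ decision trees over the $N'$ holes gives $M'$ on one side and at most $\varepsilon N'<M'$ on the other. The decisive step (\cref{thm:lb}) is then to verify that the matching pseudoexpectation for $\ourPigeon{(t+1)}{M}{N}$ is $(d,t,t-1)$-collision-free. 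The key observation is that a union of $t$ partial matchings cannot contain a $(t+1)$-collision, so any consistent $t$-subset of a matching-supported family leaves the restricted problem with decision-tree depth $\Omega(N)$ (an adversary can still place any unqueried pigeon into a hole with $\le t-1$ committed pigeons); hence for the family to be $t$-witnessing, every $t$-subset must be inconsistent, and \cref{lem:decision-tree-bound} then bounds $\psE[\mathcal F]\le t-1$ by building a shallow decision tree that ``covers'' $\mathcal F$. Without identifying collision-freedom and proving this bound, the argument does not close.
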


%\david{To be more precise, we don't require $N' = \poly(N)$. What we need is $M' = \poly(M)$. This is subtle, since we also show that $\ourPigeon{3}{N}{n^{\log n}}$ cannot be reduced to \PPP.}

Note that the above separation is invariant of compression rate, and so it even shows the stronger separation $(t+1)$-$\PWPP^{dt} \not \subseteq t$-$\PPP^{dt}$.
We can also prove separation results for non-constant (growing) $t$. 
Before stating this generalization, let us introduce a helpful definition for relating the various collision rates.
\begin{restatable}{definition}{PolynomialClose}\label{def:polynomially close}
    A function $b(n)$ is \emph{polynomially close} to $a(n)$ if there exists a polynomial $p(n)$ such that for any $n$, $b(n) \leq a(p(n))$, and $a(n) \leq b(p(n))$.
\end{restatable}

For example, all functions with polynomial growth rate are polynomially close to each other, while slower-growing functions like poly-logarithms are not.
This definition is useful since it captures the $t$-\PPP classes in the following sense:

\begin{restatable}{theorem}{PolynomialRobustness}
    \label{thm:polynomial-robustness}
    If a function $b(n)$ is polynomially close to $a(n)$, then $a\textrm{-}\PPP = b\textrm{-}\PPP$.
\end{restatable}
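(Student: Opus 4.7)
The plan is to reduce $\tightPigeon{a}$ to $\tightPigeon{b}$; by symmetry of polynomial closeness, the reverse reduction then follows identically and gives the equality $a\textrm{-}\PPP = b\textrm{-}\PPP$. Fix a polynomial $p$ witnessing $a(n) \le b(p(n))$ (WLOG $p(n) \ge n$), let $(n, h)$ be an input to $\tightPigeon{a}$, and write $N = 2^n$, $N' = 2^{p(n)}$, $T = a(n)$, $T' = b(p(n))$, so that $T \le T'$ and $N \le N'$. The source instance is $h : [(T-1)N+1]\to[N]$; the goal is to construct a target instance $h' : [(T'-1)N'+1]\to[N']$ of $\tightPigeon{b}$ at parameter $p(n)$ so that any $T'$-collision of $h'$ converts back to a $T$-collision of $h$.

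I build $h'$ in two elementary padding phases. In \emph{Phase 1} I enlarge the hole set while keeping the collision target at $T$: send the first $(T-1)N + 1$ pigeons of $h'$ through $h$ into the old holes $[N]$, then distribute the next $(T-1)(N' - N)$ pigeons evenly across $[N'] \setminus [N]$, placing exactly $T-1$ per new hole. Each new hole now holds strictly fewer than $T$ pigeons, so any $T$-collision of the Phase 1 map is pinned to $[N]$ and is literally a $T$-collision of $h$. In \emph{Phase 2} I raise the collision target from $T$ to $T'$ at fixed $N'$: add $(T'-T)N'$ further pigeons distributed evenly over all $N'$ holes, exactly $T'-T$ per hole. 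A direct count yields $(T'-1)N' + 1$ pigeons in total, so $h'$ is a legal instance of $\tightPigeon{b}$ at parameter $p(n)$. At every hole $i$ the final count equals the Phase 1 count at $i$ plus $T'-T$, so a $T'$-collision at $i$ forces the Phase 1 count there to be at least $T$, which by the Phase 1 analysis is only possible at some $i \in [N]$ and witnesses a $T$-collision of $h$ among the non-Phase-2 pigeons mapped there.

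The reduction runs in polynomial time: since $b$ is polynomially bounded (otherwise $\tightPigeon{b} \notin \TFNP$) and $p$ is a polynomial, $T' = b(p(n)) = \poly(n)$, pigeon and hole indices have bit length $\poly(n)$, and each evaluation of $h'$ needs only a constant number of range comparisons together with at most one query to the circuit for $h$. Post-processing is immediate: given a $T'$-collision at some hole $i$, discard the $T'-T$ Phase 2 pigeons there (recognizable directly from their index range) and return any $T$ of the remainder as the sought $a(n)$-collision in $h$. The delicate bookkeeping point that makes both phases compose cleanly is that after Phase 1 the new holes are saturated at exactly $T-1$ pigeons, so Phase 2 lifts them only to $T'-1$ — one short of a spurious $T'$-collision — which is exactly why padding the collision target does not introduce false witnesses. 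Swapping the roles of $a$ and $b$ in the same argument, using the symmetric inequality $b(n) \le a(p(n))$, yields $b\textrm{-}\PPP \subseteq a\textrm{-}\PPP$ and completes the proof.
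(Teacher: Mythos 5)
Your proof is correct, and it is a padding argument like the paper's, but the decomposition and the mechanism for enlarging the hole set are genuinely different. The paper factors the work into two steps: it first invokes the hierarchy lemma (\cref{lem:hierarchy}) to pass from $b$-\PPP to $c$-\PPP where $c(n) = a(p(n))$, and then reduces $c$-\pigeon at size $n$ to $a$-\pigeon at size $p(n)$ by taking $N'/N$ disjoint parallel copies of the instance, exploiting the exact equality $c(n) = a(p(n))$ so the collision number does not move. That parallel-copies step has a bookkeeping snag: copying $(c(n)-1)N+1$ pigeons $N'/N$ times gives $(c(n)-1)N' + N'/N$, which overshoots the tight count $(c(n)-1)N'+1$, so the paper switches to an alternate encoding (``find $c(n)-1$ pigeons in hole $1$ or a $c(n)$-collision'') to absorb the surplus. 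Your construction avoids this entirely by padding in place: Phase~1 enlarges the hole set to $N'$ while preserving the tight ``$+1$'' count by saturating each new hole with exactly $T-1$ dummies (so no spurious $T$-collision appears), and Phase~2 inlines the hierarchy lemma's padding to raise the target from $T$ to $T' = b(p(n)) \geq T$, which may be a strict inequality since you work directly with $a(n) \leq b(p(n))$ rather than forcing exact equality via $c$. The paper's version is more modular (the copy construction is a reusable tool) but your two-phase direct padding is more elementary, never leaves the tight encoding, and the observation that saturating new holes at $T-1$ composes cleanly with raising the target to $T'$ is a nice, self-contained way to see why the class is robust to polynomial reparametrization.
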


The proof of \cref{thm:polynomial-robustness} uses a fairly standard \emph{copying argument} to manipulate the parameters, and we leave the details to \cref{sec:prelim:PO}.
By generalizing our separation result for the Pigeon Hierarchy (\cref{cor:pih-sep}), we can prove a converse to \cref{thm:polynomial-robustness}.

\begin{restatable}{theorem}{generalLB}\label{thm:general_lb}
    Let $a(n), b(n)$ be functions such that $a(n)$ is larger than $b(n)$ and is not polynomially close to $b(n)$. Let $M, N, M', N'$ be any parameters chosen so that $M' = \poly(M)$, $M \geq (a(n)-1)N + 1$, $M' \geq (b(n)-1)N' + 1$. Then $\ourPigeon{a(n)}{M}{N}$ does not have an efficient black-box reduction to $\ourPigeon{b(n)}{M'}{N'}$.
\end{restatable}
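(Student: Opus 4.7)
The theorem generalizes the constant-$t$ Pigeon Hierarchy separation \cref{cor:pih-sep} to arbitrary non-polynomially-close $a > b$. My strategy is to bootstrap from \cref{cor:pih-sep} via a padding reduction. Suppose for contradiction that a black-box reduction $R$ from $\ourPigeon{a(n)}{M}{N}$ to $\ourPigeon{b(n)}{M'}{N'}$ exists, with $M' \leq M^c$ and query complexity $\poly(\log M)$. Because $a > b$ and $a$ is not polynomially close to $b$ (\cref{def:polynomially close}), for every polynomial $p$ there are infinitely many $n$ with $a(n) > b(p(n))$. Consider first the (main) case in which $b$ is bounded on an infinite set: fix a constant $t$ large enough that $b(n_*) \leq t$ for some such $n_*$, and such that $a(n_*) \geq t+1$ (which holds since $a > b$ and the non-closeness clause forces $a(n_*)$ to be arbitrarily large on this set).

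\textbf{Construction.} Using $R$ at the single input size $n_*$, I construct a black-box reduction from $\ourPigeon{(t+1)}{\tilde M}{\tilde N}$ to $\ourPigeon{t}{\tilde M'}{\tilde N'}$ by composing two padding operations. On the input side, given an instance $h:[\tilde M] \to [\tilde N]$ with $\tilde M = t\tilde N + 1$, I pad it to $h^{*} : [M] \to [\tilde N]$ with $M = (a(n_*)-1)\tilde N + 1$ by inserting exactly $a(n_*) - 1 - t$ new pigeons into each hole; any $a(n_*)$-collision in $h^{*}$ uses at most $a(n_*) - 1 - t$ padded pigeons, hence contains at least $t+1$ original pigeons in a common hole, yielding the desired $(t+1)$-collision in $h$. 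On the oracle side, when $R$ queries its $b(n_*)$-pigeon oracle on some $h':[M'] \to [N']$, I pad $h'$ with $t - b(n_*) \geq 0$ extra pigeons per hole to obtain $h'' : [M''] \to [N']$ with $M'' \geq (t-1)N' + 1$, query the $t$-pigeon oracle, and return the sub-multiset of original pigeons in the returned $t$-collision, which is a $b(n_*)$-collision since each hole holds at most $t - b(n_*)$ padded pigeons. Parameter tracking gives $\tilde M' = O(M^c) = O((a(n_*)\tilde N)^c)$, which is $\poly(\tilde M)$ once $\tilde N \geq a(n_*)^{1/c}$ (and $a(n_*)$ is a fixed constant after $n_*$ is chosen). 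The query complexity is $\poly(\log M) = \poly(\log \tilde M)$, so the construction satisfies the hypotheses of \cref{cor:pih-sep} and yields a contradiction.

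\textbf{Main obstacle.} The above argument requires $b$ to be bounded by a constant on an infinite set. The principal difficulty is the complementary regime in which $b(n) \to \infty$, so no constant $t$ dominates $b(n_*)$ and the bootstrap to a constant-$t$ hierarchy step breaks down. To handle this regime I would apply the pseudoexpectation-based lower bound underlying \cref{cor:pih-sep} directly, with the collision parameters set to the full functions $a(n)$ and $b(n)$ rather than to $t+1$ and $t$. I expect the quantitative form of that bound to scale with the gap $a(n) - b(p(n))$, where $p$ is the polynomial governing the complexity of $R$ (i.e., roughly $p(n) = n^c \cdot q(n)$); the non-polynomial-closeness hypothesis then guarantees this gap stays strictly positive for infinitely many $n$, forcing the pseudoexpectation-certified query lower bound to exceed $\poly(\log M)$ and completing the contradiction.
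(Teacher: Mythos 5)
The paper's proof of this theorem is a single, uniform pseudoexpectation argument: one takes the matching pseudodistribution (\cref{def:matching-pseudodistribution}) for $\ourPigeon{a(n)}{M}{N}$, applies \cref{lem:decision-tree-bound} to show it is $(d,b,b-1)$-collision-free, and concludes via \cref{thm:psExp}. Your Case~2 is essentially this argument, and you correctly flag where non-polynomial-closeness enters (the reduction may target an instance at size $m = \poly(n)$, so the adversary needs $a(n) > b(m)$, which non-closeness provides for infinitely many $n$) --- but you do not actually carry the argument out, and this is precisely the regime ($b$ growing, e.g.\ the applications to $\ramsey$ vs.\ $\SAP$) that the theorem is really needed for.

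Your Case~1 is a genuinely different route (bootstrapping from \cref{cor:pih-sep} via padding), but as written it has a concrete gap: you propose to "use $R$ at the single input size $n_*$" and simultaneously treat "$a(n_*)$ as a fixed constant" while letting $\tilde N$ vary. But $R$ restricted to one input size $n_*$ is a \emph{single} finite decision-tree reduction on a fixed instance; it does not produce a sequence of reductions for growing $\tilde N$, and \cref{cor:pih-sep} only rules out $\poly(\log)$-complexity \emph{sequences}, not any one reduction at one size. To repair this you must let $n_*$ range (using that $b$, being monotone and bounded, is eventually constant $t$) and track that $a(n_*)$ then grows with $\tilde N$; the parameter tracking still goes through because $a$ is at most polynomial in $n_* = \log\tilde N$. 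You should also pad the oracle side by $t - b(m)$ where $m$ is the target size $R$ maps to (which can be $\poly(n_*)$), not $t - b(n_*)$; with eventually-constant $b$ this is harmless but the distinction matters conceptually and is exactly what breaks in Case~2. Finally, note that \cref{cor:pih-sep} is itself derived from \cref{thm:lb} by the same pseudoexpectation machinery, so the paper's direct route avoids the case split entirely and is shorter.
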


Since the constant function $a(n)=(t+1)$ is not polynomially close to $b(n) = t$ for any constant $t$, this is a strict generalization of \cref{cor:pih-sep}.
Combining \cref{thm:polynomial-robustness} with \cref{thm:general_lb}, we can completely characterize the structure of the Pecking Order with respect to the collision number.

\begin{restatable}{theorem}{posep}
    \label{thm:posep}
    For any two functions $a(n), b(n)$, $a(n)$-$\PPP^{dt} = b(n)$-$\PPP^{dt}$ if and only if $a(n)$ and $b(n)$ are polynomially close.
\end{restatable}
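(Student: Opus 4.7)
The forward direction is immediate from \Cref{thm:polynomial-robustness}. For the converse, suppose $a, b$ are not polynomially close. Then at least one of the relations $a \leq b \circ p$ or $b \leq a \circ q$ fails for every polynomial; by swapping $a$ and $b$ if necessary, I may assume the former fails, i.e., $a \not\leq b \circ p$ for every polynomial $p$. I may also assume both functions are monotone non-decreasing by passing to their running maxima, which preserves the associated $\PPP^{dt}$-classes by a standard padding argument.

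Define the envelope $c(n) := \max(a(n), b(n))$. Then $c \geq b$ pointwise, and because $c \geq a$, no polynomial $p$ satisfies $c \leq b \circ p$ either. Applying \Cref{thm:general_lb} with $c$ in the role of the ``larger'' function and $b$ as the ``smaller'' one, I conclude that $c\text{-}\tightPigeon$ admits no black-box reduction to $b\text{-}\tightPigeon$, and therefore $c\text{-}\PPP^{dt} \not\subseteq b\text{-}\PPP^{dt}$.

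If in addition $b \leq a \circ q$ for some polynomial $q$, monotonicity gives $c(n) \leq a(\max(n, q(n)))$, so $c$ is polynomially close to $a$, and \Cref{thm:polynomial-robustness} yields $c\text{-}\PPP^{dt} = a\text{-}\PPP^{dt}$. Combining this with the previous paragraph produces $a\text{-}\PPP^{dt} \not\subseteq b\text{-}\PPP^{dt}$, and hence inequality of the two classes.

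The main obstacle is the remaining subcase, in which \emph{both} directions fail, i.e., also $b \not\leq a \circ q$ for every polynomial $q$. Here $c$ is polynomially close to neither $a$ nor $b$, and the envelope argument by itself only yields that $c\text{-}\PPP^{dt}$ properly contains both $a\text{-}\PPP^{dt}$ and $b\text{-}\PPP^{dt}$, without separating them from each other. My plan is to handle this case by extracting from the proof of \Cref{thm:general_lb} a strengthened lower bound that does not require pointwise dominance of the larger function: the obstruction to a black-box reduction from $a\text{-}\tightPigeon$ to $b\text{-}\tightPigeon$ is driven by individual input sizes $n$ witnessing $a(n) > b(p(n))$ for the alleged polynomial $p$, so the pointwise hypothesis appears to be used only as a notational convenience. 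With this refinement in hand, $a\text{-}\tightPigeon \notin b\text{-}\PPP^{dt}$ follows directly from $a \not\leq b \circ p$, completing the separation.
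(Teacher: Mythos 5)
Your high-level strategy---deduce the forward direction from \cref{thm:polynomial-robustness} and the converse from \cref{thm:general_lb}---is the same combination the paper itself asserts, in a single remark, as its whole proof. You have correctly spotted a subtlety that the paper's terse argument glosses over: as literally stated, \cref{thm:general_lb} requires the source collision parameter to dominate the target's pointwise, and that is not a consequence of ``not polynomially close'' alone. Your envelope $c = \max(a,b)$ cleanly discharges the one-sided case where $b \leq a \circ q$ for some polynomial $q$, and you are right that it stalls in the two-sided case where neither $a \leq b\circ p$ nor $b \leq a\circ q$ holds for any polynomial. That case can genuinely occur even for monotone, polynomially bounded $a,b$ (e.g.\ two step functions whose jump locations are interleaved with super-polynomial gaps), so it cannot be waved away.

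The gap is that your resolution of the two-sided case is offered as a plan rather than a proof, so as written the argument does not close. The plan is nonetheless correct in spirit, and it is in fact how the paper's invocation of \cref{thm:general_lb} must be read: examining the proof sketch of \cref{thm:general_lb} (and the full proof of \cref{thm:lb} that it generalizes), the matching pseudoexpectation rules out a low-depth reduction at any concrete size $n$ at which $a(n)$ exceeds the target's collision threshold, which for a polylogarithmic-complexity reduction is at most $b(p(n))$ for some fixed polynomial $p$; the pointwise hypothesis is never actually used. Once you carry out that refinement explicitly, the envelope and the case split become unnecessary---$a \not\leq b\circ p$ for every polynomial $p$ directly yields $a\text{-}\PPP^{dt} \not\subseteq b\text{-}\PPP^{dt}$, and the symmetric disjunct covers the rest, giving the converse in two lines. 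So the diagnosis is right and the intended fix is right, but the two-sided case remains unproven in your write-up, and proving it would also let you discard the detour that precedes it.
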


We emphasize that \cref{cor:pih-sep} and \cref{thm:general_lb} do not depend on the specific choice of parameters $M, N, M', N'$, as long as they are non-trivial.    
For example, $\ourPigeon{3}{N^2}{N}$ cannot be reduced to $\pigeon^{N+1}_{N}$, though the latter one has a much smaller compression rate.
Conceptually:
\begin{center}
\emph{We cannot trade a lower compression rate in exchange for more collisions}. 
\end{center}
This favorable property of our structural theorem makes it very convenient for showing separation results for problems from the Pecking Order. Once we can reduce $\ourPigeon{t}{M}{N}$ to a search problem of interest (like $\ramsey$) for \emph{any} value of $M \gg t \cdot N$, we \emph{automatically} separate the search problem from any lower level of the Pecking Order than $t$-\PPP, including \PPP. 

% \david{We should add a remark here that you cannot use less compression rate to trade for more collision numbers. This fact is conceptually very useful for understanding our separation of Ramsey/UPLC from PPP}

% \david{I think it is nice for reader to feel they have learning some conceptual message when reading paper.}

Conversely, we show that one cannot perform a tradeoff in the \emph{other} direction (even in the randomized setting).
Formally speaking, we separate $\PPP$ from $n$-\PWPP for any randomized black-box reduction. 
\begin{restatable}{theorem}{pppnpwpp}\label{thm:lb:compression}
   There is no randomized black-box reduction from $\pigeon$ to $\ourPigeon{n}{M}{N}$ with $M \geq (n-1+c)N$ for a constant $c > 0$.
\end{restatable}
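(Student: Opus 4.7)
The plan is an adversarial oracle argument exploiting the constant-fraction slack in the generalized pigeonhole principle when $M \geq (n-1+c)N$. I would take the hard distribution on $\pigeon$ to be the ``almost-bijection'' distribution: sample a uniformly random bijection $\pi:[N]\to[N]$ and an independent uniformly random secret index $i^* \in [N]$, and set $h(i) = \pi(i)$ for $i \leq N$ and $h(N+1) = \pi(i^*)$. The unique $2$-collision of $h$ is the pair $\{i^*, N+1\}$, and a standard information-theoretic argument shows that $\Omega(N)$ direct queries to $h$ are needed to locate $i^*$ without oracle help; the whole difficulty is to show that the $n$-$\PWPP$ oracle does not substantially improve this.

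For the oracle, on each query specifying a circuit $g : [M'] \to [N']$ with $M' \geq (n-1+c)N'$, I would return an $n$-collision chosen to be ``safe'', in the sense that its existence can be certified without probing $h$ at $i^*$. For each pigeon $j \in [M']$, let $Q(j) \subseteq [N+1]$ be the set of $h$-indices queried when evaluating $g(j)$; polynomial-time-ness of $R$ and $g$ gives $|Q(j)| \leq T = \poly(n) = \poly(\log N)$. For uniformly random $i^*$, the expected number of unsafe pigeons (those with $i^* \in Q(j)$) is at most $M' T / N$, an $o(1)$-fraction of $M'$ since $T/N = o(1/n)$. Thus the count of safe pigeons is at least $(1-o(1))M' \geq (n-1+c/2)N' > (n-1)N'$, and the standard counting bound yields an $n$-collision entirely among safe pigeons, which the adversary returns. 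Crucially, the existence and identity of this collision depend only on $\pi$ restricted to $[N]\setminus\{i^*\}$ (together with $v = h(N+1)$ if already revealed), not on $i^*$ itself.

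The overall argument then follows a standard hybrid/lazy-sampling scheme via Yao's minimax principle. Passing to deterministic reductions, I would lazily sample $\pi$-values only as needed to answer direct queries or internal $g$-evaluations, maintaining a committed set $S \subseteq [N]$ of indices whose $\pi$-value has been fixed. The choice of $i^*$ is deferred until $R$ halts, and then $i^*$ is chosen uniformly from the uncommitted set $U = [N] \setminus S$; by construction the entire transcript is a function of committed values alone, hence independent of the specific $i^* \in U$. So $R$'s output pair matches the true $\{i^*, N+1\}$ with probability at most $O(1/|U|)$. Arranging $|S| \leq \poly(\log N)$ at termination gives $|U| = \Omega(N)$ and success probability $o(1)$, establishing the theorem.

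The main obstacle is keeping $|S|$ small throughout the oracle simulation: to exhibit a safe $n$-collision on each oracle query, the adversary may need to evaluate $g$ on many pigeons, each evaluation potentially committing $\poly(\log N)$ new values of $\pi$, and a naive search could require $\Omega(N)$ new commitments per oracle call, ruining the bound on $|U|$. I would overcome this either by a randomized ``subsample-and-test'' strategy, using a second-moment calculation on the abundance of $\geq cN'$ $n$-collisions in $g$ to show that a random sample of $\poly(\log N)$ pigeons already contains an $n$-collision with constant probability, or by translating the separation into a propositional proof-complexity statement and appealing to the paper's new pseudoexpectation framework to derive the bound abstractly. I expect this step to be the technical crux, while the surrounding hybrid analysis is fairly standard.
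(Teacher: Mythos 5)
Your high-level outline captures the two key ideas the paper also uses: a hard distribution on $\pigeon$ in which the unique collision is a hidden index $i^*$, and the observation that an adversarial $\ourPigeon{n}{M}{N}$ oracle can return a ``safe'' (in the paper's language, \emph{non-witnessing}) $n$-collision whose constituent pigeon-decision-trees never query $i^*$, and whose existence is guaranteed by exactly the Markov/counting calculation you describe (this is \cref{lem:claim1} in the paper). Up to this point you are aligned with the paper.

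The gap is at what you yourself flag as the ``technical crux,'' and it is a real gap: the lazy-sampling framing you propose is the wrong mechanism, and neither of your suggested patches closes it. Your framing requires the simulator to \emph{compute} a safe $n$-collision while committing only $\poly(\log N)$ values of $\pi$, but an $n$-collision among $M' = (n-1+c)N'$ pigeons mapped by depth-$d$ decision trees may not be locatable by any subsample of size $\poly(\log N)$ (when the holes are close to balanced, the number of pigeons needed to guarantee an $n$-collision scales polynomially in $N'$, not polylogarithmically, so the ``subsample-and-test'' idea fails). The confusion is that in a many-one decision-tree lower bound the adversary is not a query-bounded entity that must ``find'' the collision by probing $h$ -- the adversary is part of the analysis and may inspect $h$ in full. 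The paper exploits this by \emph{not} lazy-sampling at all: sample $h^{(1)}$ completely, let $o^*$ be the lexicographically first non-witnessing solution (full knowledge used here), then resample the location of the hidden solution by a ``second dice'' $j^*$ to form $h^{(2)}$ (\cref{lemma:useless_h1h2} shows $o^*$ remains non-witnessing for $h^{(2)}$ with overwhelming probability). Since $h^{(2)}$ is marginally distributed as the hard distribution, and $o^*$ and the transcript of $g_{o^*}$ are essentially insensitive to $j^*$, the reduction outputs $j^*$ with probability only $1/N + \negl$. For non-constant-depth $g_o$ the paper iterates this via \cref{lem:hit}, peeling off one level of $g_o$ at a time and telescoping $\ps_d \leq d\cdot\negl + \ps_0$; your proposal only gestures at this with ``each evaluation potentially committing $\poly(\log N)$ new values'' without giving the inductive argument. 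Also note the theorem (and the paper's definitions of decision-tree reduction) concern \emph{many-one} black-box reductions with a single $\ourPigeon{n}{M}{N}$ call; your ``each oracle query'' phrasing suggests you were analyzing a Turing reduction, which is strictly stronger and is explicitly left open (see the paper's first open problem and the discussion of \cite{Li23tfap}).
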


Complementary to \cref{thm:general_lb}, this result can be interpreted as saying: 
\begin{center}
    \emph{We cannot trade more collisions in exchange for an (extremely) low compression rate.}
\end{center}
In a concurrent work by Li~\cite{Li23tfap}, \cref{thm:lb:compression} is generalized to total search problems with many solutions on any given input, like $\PWPP$ and variants. We refer to \cref{sec:concurrent} for further related discussion.

We finally remark that our separations between the Pecking Order classes can be viewed as a first step towards a notorious problem in cryptography: determining whether there is a \emph{black-box construction} of \emph{multi-collision resistant hash-functions} from \emph{collision-resistant hash-functions} (cf.~\cref{sec:concurrent}). In our language, this (very roughly) translates to proving or disproving the existence of a \emph{randomized} black-box Turing reductions from $\ourPigeon{t}{M}{N}$ to $\pigeon$. Our main separation result can be seen as a confirmation that there is no \emph{deterministic} black-box reduction, and \cref{thm:lb:compression} is a weak (many-one) separation in the converse direction. 
However, we do not obtain a randomized separation in this paper, and leave it open for future work.

% In \cref{thm:polynomial-robustness}, we show that if $a(n)$ and $b(n)$ are polynomially close functions then $\tightPigeon{a}$ and $\tightPigeon{b}$ define the same complexity class.

% This means that we can equivalently define $\PAP$ as all problems reducible to $\tightPigeon{\poly(\log(N))}$.

% To state our results in the strongest possible way, it will also be convenient to define the class $\SAP$ --- for \emph{subpolynomial averaging principle} --- to be all problems reducible to $\tightPigeon{t}$ for some $t(n) = t(\log N) = (\log N)^{o(1)}$.

% For convenience, we use the general term ``Pecking Order'' to refer to the entire array of classes from $\PPP$ to $\PAP$.

\subsubsection{Ramsey and the Pecking Order}

By employing our separations for Pecking Order classes, we give strong lower bounds for the $\ramsey$ problem in the black-box setting, as well as for its bipartite variant $\biramsey$, where we are given a bipartite graph on $(N,N)$ vertices, and have to output either a $(\log N)/2$-biclique or a $(\log N)/2$-bi-independent set. In particular, combining \cref{thm:ramseyinstantiated} with \cref{thm:general_lb} we obtain the following black-box separation, which is significantly stronger than \cref{thm:ramsey-pigeon}.
%\david{Use $\log N$ or $n$? Seems that we have to do so for this theorem?}
%First, we reduce $\ourPigeon{t}{M}{N}$ to \ramsey for certain parameters $t, M, N$. Our reduction is a direct generalization of the reduction by Komargodski, Naor and Yogev~\cite{Komargodski2017}, and it also holds for the $\biramsey$ problem, 

%For instance, we can reduce $\ourPigeon{3}{N^{12}}{N}$ to \ramsey and \biramsey. When combined with our structure theorem (\cref{cor:pih-sep}), 
%we therefore prove $\ramsey \not \in \PPP$ in the black-box setting, resolving \cref{conj:ramsey-ppp}. Furthermore, even $\ourPigeon{n}{N^{5n}}{N}$ can be reduced to \ramsey and \biramsey; thus, by \cref{thm:general_lb}, we can separate them from any lower level of the Pecking Order than $\PAP$.
% In other words, $\ramsey^{dt}$ and $\biramsey^{dt}$ do not reduce to $\tightPigeon{t}$ for any $t(\log N) = (\log N)^{o(1)}$.

% This also shows that $\biramsey \notin \PPP$ in the black-box setting. 
\begin{restatable}{theorem}{ramseyppp}
\label{cor:ramsey-ppp}      
$\ramsey^{dt}, \biramsey^{dt} \not \in \SAP^{dt}$.
In particular, $\ramsey^{dt}, \biramsey^{dt} \not \in \PPP^{dt}$.
\end{restatable}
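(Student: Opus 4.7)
The plan is to combine \cref{thm:ramseyinstantiated}, which embeds $\ourPigeon{t}{M}{N}$ into $\ramsey$ (and $\biramsey$) whenever $M \geq N^{4t}/4^t$, with the black-box lower bound \cref{thm:general_lb}, which forbids any reduction from $\ourPigeon{a}{M}{N}$ to $\ourPigeon{b}{M'}{N'}$ when $a$ and $b$ are not polynomially close. If $\ramsey$ were in $\SAP^{dt}$, stacking these reductions would produce a reduction between pigeon problems whose collision parameters are too far apart, contradicting \cref{thm:general_lb}.

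Concretely, suppose for contradiction that $\ramsey^{dt} \in \SAP^{dt}$, so there exists a sub-polynomial function $b(n)$ and a black-box reduction $\ramsey \to \tightPigeon{b}$. Fix $a(n) = n$ and set $N = 2^n$, $M = \lceil N^{4a(n)}/4^{a(n)} \rceil$, so that $\log M = \Theta(n^2)$ is polynomial in $n$. By \cref{thm:ramseyinstantiated}, there is a black-box reduction $\ourPigeon{a(n)}{M}{N} \to \ramsey$, and composing with the assumed reduction yields a black-box reduction $\ourPigeon{a(n)}{M}{N} \to \tightPigeon{b(m)}$, where $m = \poly(n)$ is the size of the resulting pigeon instance, with parameters $N' = 2^m$ and $M' = (b(m)-1)N' + 1$.

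To invoke \cref{thm:general_lb}, reindex by the source size: let $\tilde b(n) := b(m(n))$, so we have a reduction $\ourPigeon{a(n)}{M}{N} \to \ourPigeon{\tilde b(n)}{M'}{N'}$. Since $b$ is sub-polynomial and $m = \poly(n)$, $\tilde b$ is also sub-polynomial, whereas $a(n) = n$ is polynomial; in particular $a$ dominates $\tilde b$ for large $n$ and the two are not polynomially close in the sense of \cref{def:polynomially close}. The side conditions are immediate: $M$ trivially exceeds $(a(n)-1)N + 1$, $M' = (\tilde b(n)-1)N' + 1$ by construction, and because each reduction blows up instance size only polynomially we have $\log M' = \poly(\log M)$, so $M'$ is polynomially bounded in $M$ as required by the theorem. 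This contradicts \cref{thm:general_lb}. The identical argument using the $\biramsey$ reduction of \cref{thm:ramseyinstantiated} handles $\biramsey$. The ``in particular'' statement then follows because $\PPP = 2\text{-}\PPP \subseteq \SAP$, since the constant function $2$ is sub-polynomial.

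The main subtlety is the bookkeeping around parameter scaling: one must verify that the ``$M' = \poly(M)$'' side condition of \cref{thm:general_lb} survives composition of two polynomial-size black-box reductions, and that ``sub-polynomial composed with polynomial remains sub-polynomial'' so that the non-closeness hypothesis of the lower bound is preserved when we reindex by the source instance size. All other steps are direct applications of the previously established upper and lower bounds.
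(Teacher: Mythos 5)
Your proposal takes essentially the same route as the paper: combine \cref{thm:ramseyinstantiated} (pigeon $\to$ Ramsey) with \cref{thm:general_lb} (no cross-level pigeon reductions) and derive a contradiction from the assumption $\ramsey \in \SAP^{dt}$. The paper's version is terser — it simply notes that some polynomial $p(n)$ works and invokes the two theorems — whereas you spell out the choice $a(n)=n$, the reindexing $\tilde b(n)=b(m(n))$, and the sub-polynomial-composed-with-polynomial argument, which is a helpful expansion of what the paper leaves implicit.

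One slip worth flagging: when verifying the side condition of \cref{thm:general_lb}, you write that $\log M' = \poly(\log M)$ implies $M' = \poly(M)$. This inference is false in general (e.g., $\log M' = (\log M)^2$ gives $M' = M^{\log M}$, quasi-polynomial). In fact, composing two $\poly(\log)$-complexity decision-tree reductions only controls $\log M'$ polylogarithmically in $M$, not $M'$ polynomially in $M$. This does not sink the argument — the actual obstruction used in the proof of \cref{thm:lb}/\cref{thm:general_lb} is the constraint $(t-1)d^2 \le N/2$ on the reduction depth, which is satisfied here and places no constraint on $M'$; the stated ``$M'=\poly(M)$'' hypothesis in \cref{thm:general_lb} is, if anything, misleadingly strong and is not verified in the paper's own proof of this corollary either. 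But as written your justification of that side condition is a non sequitur and should be corrected (e.g., by appealing directly to the depth constraint, or by noting that the black-box model permits quasi-polynomial blowup in $M'$ and that the lower-bound argument tolerates it).
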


In the reverse direction, we show that \biramsey fits into the Pecking Order with a slightly weaker parameter. Together with \cref{cor:ramsey-ppp}, we give an almost tight characterization of \biramsey using the Pecking Order.

\begin{restatable}{theorem}{BiRamseyinPAP}(Generalization of \protect{\cite[Theorem~3.10]{Komargodski2017}})
\label{lem:biramsey-pap}
    $\frac{n - \log n}{2}\text{-}\biramsey \in \PAP$.
\end{restatable}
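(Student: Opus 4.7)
The plan is to reduce $\frac{n-\log n}{2}\text{-}\biramsey$ directly to $\tightPigeon{n}$, which by definition of $\PAP$ is the containment we want. The underlying idea is to ``flatten'' the recursive pigeonhole argument in the standard proof of Ramsey's theorem into a single application of $n$-collision pigeonhole.

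Concretely, given a bipartite graph $G = (L, R, E)$ with $|L| = |R| = N = 2^n$, I would fix $k := n - \log n$ arbitrary left vertices $v_1, \dots, v_k \in L$ (say, the first $k$ in some canonical ordering), and define a hash $h : R \to \{0,1\}^k$ by setting $h(r)_i = 1$ iff $(v_i, r) \in E$. The range has size $2^k = N/n$, and the domain has size $N > (n-1)(N/n) + 1$ for $N$ large enough, so this is a valid instance of $\ourPigeon{n}{N}{N/n}$ and reduces to $\tightPigeon{n}$ by a trivial padding/restriction argument.

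Querying the $n$-$\PPP$ oracle then returns $n$ right vertices $r_1, \dots, r_n$ all sharing the same hash string $s \in \{0,1\}^k$. Let $I := \{i : s_i = 1\}$ and $J := [k] \setminus I$. Since $|I| + |J| = k = n - \log n$, one of them has size at least $(n-\log n)/2$. In the first case $\{v_i : i \in I\}$ together with $\{r_1, \dots, r_n\}$ forms a complete bipartite subgraph, from which I extract a bi-clique of size $(n-\log n)/2$ on each side (using that $n \geq (n-\log n)/2$); in the second case, $\{v_i : i \in J\}$ and $\{r_1, \dots, r_n\}$ form an empty bipartite subgraph and I extract a bi-independent set of the same size.

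There is no serious obstacle here: the proof is essentially the observation that classical Ramsey-via-pigeonhole is \emph{already} a reduction to $n$-collision pigeonhole, provided one phrases the $k$ recursive halvings as a single hash into $\{0,1\}^k$. The only minor bookkeeping is matching the exact parameter regime of $\tightPigeon{n}$ (domain size exactly $(n-1)(N/n)+1$), which is handled by padding and by the fact that polynomial-closeness of collision parameters is absorbed into $\PAP$ via \cref{thm:polynomial-robustness}.
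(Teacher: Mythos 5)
Your proposal is correct and is essentially the paper's own proof: both constructions define the same hash $h : R \to \{0,1\}^{n-\log n}$ by recording adjacency to a fixed set of $n - \log n$ left vertices, obtain an $n$-collision $r_1,\dots,r_n$, and then use a majority argument on the $n - \log n$ fixed left vertices to extract a $\frac{n-\log n}{2}$-biclique or bi-independent set, finishing with the same appeal to \cref{thm:polynomial-robustness} to absorb $\ourPigeon{n}{N}{N/n}$ into $\PAP$.
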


% \david{We include the result of PLC/UPLC with Ramsey, but Ramsey is in section 4, and PLC/UPLC is in section 6. Should we also move PLC/UPLC part to section 4?}
% \david{The following paragraph is a little bit repetitive with the intro. Reorganize this}
%As introduced before, %\ramsey has long been considered a ``rogue" problem in \TFNP and efforts have been made to put it in a natural complexity class. Progress was made on this question recently by 
% Pasarkar, Papadimitriou and Yannakakis \cite{Pasarkar2023} defined new a subclass of \TFNP called Polynomial Long Choice (\PLC) which guarantees totality by an \emph{iterated} application of the pigeonhole principle. 

% \david{I don't think we should put this paragraph here. Maybe in Section 5.2, or section 2.1, or simply remove it.}
% Their problem can be seen as a game played between an \textsf{explorer} and an \textsf{adversary} on a finite universe $U$ where the \textsf{explorer} picks an element and the \textsf{adversary} tries to restrict the choices by forbidding subsets of $U$. The game ends when the \textsf{explorer} has found $n+1$ elements not forbidden by the \textsf{adversary}. The problem is total since the \textsf{explorer} can always succeed if they manage to pick the partition with the majority of vertices. Crucially, the \textsf{adversary} is \emph{adaptive}: the forbidden sets depend on the choices made by the \textsf{explorer}. 

We also reveal an interesting connection between the Pecking Order and the \emph{Polynomial Long Choice} class (\PLC) introduced by \cite{Pasarkar2023}. Recall that \cite{Pasarkar2023} has shown that $\ramsey \in \PLC$, therefore, by \cref{cor:ramsey-ppp}, we already separate $\PLC$ from $\PPP$ in the black-box setting, resolving one of their open questions.
However, we can get even stronger results by considering a subclass of \PLC called \UPLC~\cite{Pasarkar2023} (Unary \PLC, see \cref{def:uplc}), which is the \emph{non-adaptive} variant of \PLC. In particular, we can show that $\ourPigeon{n/2}{N}{\sqrt{N}}$ reduces to \UPLC (\cref{lem:pwpp-uplc}), and therefore we can separate $\UPLC$ from $\SAP$ (and of course, $\PPP$) in the black-box setting using our separation results.

\begin{restatable}{theorem}{uplcpap}
\label{cor:uplc-ppp}
    $\UPLC^{dt} \not\subseteq \SAP^{dt}$.
\end{restatable}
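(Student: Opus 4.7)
The plan is to derive this as a direct corollary of two earlier results: the containment $\ourPigeon{n/2}{N}{\sqrt{N}} \in \UPLC$ (asserted via \cref{lem:pwpp-uplc} in the paragraph preceding the statement) and the general black-box separation \cref{thm:general_lb}. Concretely, I would exhibit $\ourPigeon{n/2}{N}{\sqrt{N}}$ as a witness problem that lies in $\UPLC^{dt}$ but not in $\SAP^{dt}$.

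First I would check that $\ourPigeon{n/2}{N}{\sqrt{N}}$ is a well-defined total problem. Writing $N = 2^n$ so that $\sqrt{N} = 2^{n/2}$, the totality requirement $M > (t-1)N_{\text{holes}}$ reads $2^n > (n/2 - 1)\,2^{n/2}$, which is satisfied for all sufficiently large $n$. The containment in $\UPLC^{dt}$ then follows from \cref{lem:pwpp-uplc}.

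Second, I would rule out membership in $\SAP^{dt}$. A problem in $\SAP^{dt}$ reduces to $\ourPigeon{b(n)}{M'}{N'}$ for some sub-polynomial $b(n)$ and some valid parameters $M' \geq (b(n)-1)N' + 1$; by padding and \cref{thm:polynomial-robustness} we may assume $M' = \poly(N)$. I would invoke \cref{thm:general_lb} with $a(n) = n/2$, upper parameters $(M,N_{\text{holes}}) = (N,\sqrt{N})$, and this arbitrary $b(n)$. The side conditions are easy: $a(n) > b(n)$ for large $n$ since $b(n) = o(n)$. The crucial condition is that $n/2$ is not polynomially close to $b(n)$. Suppose for contradiction that $n/2 \leq b(p(n))$ for some polynomial $p(n) \leq n^d$. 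Setting $m = p(n)$ we obtain $b(m) \geq m^{1/d}/2$ for an infinite sequence of $m$. But sub-polynomiality gives $b(m) = o(m^{1/(2d)})$, contradicting this lower bound for large $m$. Hence \cref{thm:general_lb} applies and $\ourPigeon{n/2}{N}{\sqrt{N}}$ does not black-box reduce to $\ourPigeon{b(n)}{M'}{N'}$; since $b$ was arbitrary, the problem is outside $\SAP^{dt}$, completing the separation.

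I do not anticipate a significant obstacle in this argument, as the heavy lifting has already been carried out in \cref{thm:general_lb} via the pseudoexpectation-based proof complexity lower bounds developed earlier in the paper. The only thing to be careful about is the parameter bookkeeping: one must confirm that reducibility to an arbitrary element of $b(n)$-$\PPP$ can indeed be reformulated as reducibility to some $\ourPigeon{b(n)}{M'}{N'}$ with $M' = \poly(N)$ so that \cref{thm:general_lb} is applicable in its stated form, which is handled by \cref{thm:polynomial-robustness} together with the standard copying/padding arguments the paper uses to move between tight and slack versions of the generalized pigeon problem.
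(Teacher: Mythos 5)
Your proposal is correct and matches the paper's approach: both exhibit $\ourPigeon{n/2}{N}{\sqrt{N}}$ as a witness via \cref{lem:pwpp-uplc} and then rule out reducibility to any subpolynomial level of the Pecking Order using the pseudoexpectation separation machinery. The only cosmetic difference is that the paper cites \cref{thm:lb} (implicitly relying on the $t$-$\PPP$ hierarchy to reduce the general subpolynomial case to the $(n/2-1)$ case), whereas you invoke the more directly applicable \cref{thm:general_lb} and verify explicitly that $n/2$ is not polynomially close to any subpolynomial $b(n)$; this is a perfectly valid and arguably cleaner citation, and your bookkeeping of the totality constraint and the ``not polynomially close'' condition is careful and correct.
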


% \david{Mention that we put \UPLC in $n$-PWPP, and then by our structure result ...}
Conversely, we show in \cref{lem:uplc-pap} that $\UPLC \subseteq n\textrm{-}\PWPP$, and thus by our separation regarding to the compression rate (\cref{thm:lb:compression}), we have that $\PPP^{dt} \not \subseteq \UPLC^{dt}$, resolving another open question of Pasarkar, Papadimitriou, and Yannakakis.

\begin{restatable}{theorem}{uplcsep}
\label{cor:uplc-plc}
    $\PPP^{dt} \nsubseteq \UPLC^{dt}$. Consequently, $\PLC^{dt} \nsubseteq \UPLC^{dt}$.
\end{restatable}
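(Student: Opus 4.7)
The plan is to derive the theorem by chaining two results stated elsewhere in the paper: the inclusion $\UPLC^{dt} \subseteq n\textrm{-}\PWPP^{dt}$ promised in \cref{lem:uplc-pap}, together with the compression-rate lower bound \cref{thm:lb:compression}, which forbids any randomized black-box reduction from $\pigeon$ to $\ourPigeon{n}{M}{N}$ whenever $M \geq (n-1+c)N$ for a fixed constant $c>0$.

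For the first inequality, I would argue by contradiction. Suppose $\PPP^{dt} \subseteq \UPLC^{dt}$. Composing this with $\UPLC^{dt} \subseteq n\textrm{-}\PWPP^{dt}$ yields $\PPP^{dt} \subseteq n\textrm{-}\PWPP^{dt}$, which, unwinding the definition of $n\textrm{-}\PWPP$, produces a black-box reduction from $\pigeon$ to some $\ourPigeon{n}{M}{N}$ instance with $M \geq (n-1+c)N$ for a constant $c>0$. This directly contradicts \cref{thm:lb:compression}, so $\PPP^{dt} \nsubseteq \UPLC^{dt}$. For the stated consequence, recall that Pasarkar, Papadimitriou, and Yannakakis \cite{Pasarkar2023} showed the black-box inclusion $\pigeon \in \PLC$, hence $\PPP^{dt} \subseteq \PLC^{dt}$. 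If we had $\PLC^{dt} \subseteq \UPLC^{dt}$, composing would give $\PPP^{dt} \subseteq \UPLC^{dt}$, contradicting what we just proved; therefore $\PLC^{dt} \nsubseteq \UPLC^{dt}$.

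The substantive content of the proof is thus packaged into the two imported ingredients rather than into the composition above. The principal obstacle is \cref{lem:uplc-pap}: given a $\UPLC$ instance defined by $n$ rounds of non-adaptive ``long choice'' commitments, one must construct a map $h\colon [M]\to[N]$ with $M \geq (n-1+c)N$ whose $n$-collisions decode to valid $\UPLC$ witnesses. The compression budget is tight, since \cref{thm:lb:compression} itself stops applying if $M$ is allowed to grow faster than linearly in $N$, so the encoding must use each pigeon economically and cannot afford significant slack. The other ingredient, \cref{thm:lb:compression}, is proved using the pseudoexpectation-based proof-complexity machinery described in \cref{sec:our-techniques}, which is where the genuine lower-bound content lives; once both lemmas are in hand, the present theorem is a one-line corollary.
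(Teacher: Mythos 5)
Your proof takes exactly the paper's route: combine $\UPLC^{dt} \subseteq n\textrm{-}\PWPP^{dt}$ (\cref{lem:uplc-pap}) with \cref{thm:lb:compression} to get $\PPP^{dt} \nsubseteq \UPLC^{dt}$, then use $\PPP \subseteq \PLC$ from \cite{Pasarkar2023} to deduce $\PLC^{dt} \nsubseteq \UPLC^{dt}$. The chaining argument is correct.

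Two peripheral remarks in your writeup are off, though neither affects the logic. First, \cref{thm:lb:compression} is \emph{not} proved by the pseudoexpectation machinery of \cref{sec:lb:collisions}; it is proved by a direct probabilistic argument in \cref{sec:lb:compression} using a hard distribution of permutations, Yao's minimax, and a ``second dice'' swap trick. The pseudoexpectation technique is used for the collision-number separations (\cref{thm:lb}, \cref{thm:general_lb}, \cref{thm:plspap}), not the compression-rate separation. Second, your claim that ``\cref{thm:lb:compression} itself stops applying if $M$ is allowed to grow faster than linearly in $N$'' is backwards: the theorem is stated for all $M \geq (n-1+c)N$, with no upper bound on $M$, since a larger $M$ only makes the target problem easier and hence the lower bound only stronger. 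The reduction in \cref{lem:uplc-pap} happens to land at $M = n \cdot N'$ (i.e., $c = 1$), which sits comfortably in the covered regime; there is no delicate budget to respect.
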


We also observe that $\UPLC$ does not capture the full strength of the non-adaptive iterative pigeonhole principle. We define a new problem $\tuplc$ (\cref{def: tuplc}), which is similar to $\uplc$, but has the strongest possible parameters. We show that $\tuplc$ is indeed $\PAP$-complete (\cref{lem: tuplc pap}), therefore, the non-adaptive iterative pigeonhole principle is equivalent to the generalized pigeonhole principle.

Taken together, our results provide a nearly complete picture (see \cref{fig:tfnp} for a summary) of the relative complexities of the Pecking Order, $\PLC$, $\UPLC$, $\ramsey$, and $\biramsey$ in the black-box setting.

\begin{figure}[btph]
    \centering
    \begin{tikzpicture}[scale=1.1]
\tikzset{inner sep=0,outer sep=3}

% Set bounding box
%\node (a) at (-7.1,0) {\mbox{}};
%\node (b) at (7.1,0) {\mbox{}};

\tikzstyle{a}=[inner sep=6pt, inner ysep=6pt,outer sep=0.5pt,
draw=black!20!white, fill=Cerulean!2!white, very thick, rounded corners=6pt, align=center]

\begin{scope}[yscale=1.145]
\large
%\node[a] (EOPL) at (0,-1) {$\EOPL$};
%\node[a] (SOPL) at (-5,1) {$\SOPL$};
%\node[a] (PPAD) at (5,1) {$\PPAD$};
%\node[a] (PPADS) at (0,2.5) {$\PPADS$};
%\node[a] (PLS) at (-5,4) {$\PLS$};
\node[a] (PLS) at (-3.5,3) {$\PLS$};
%\node[a] (PPAD) at (8,0) {$\PPAD$};
\node[a] (PPA) at (9, 3) {$\PPA$};
\node[a] (PPP) at (4,0) {$\PPP$};
\node[a] (3PPP) at (4,1.5) {3-\PPP};
\node[a] (4PPP) at (4,3) {4-\PPP};
\node[rotate=90] (Dots) at (4,4.5) {\ldots};
\node[a] (PiH) at (4,6) {$\PiH$};
\node[a] (SAP) at (4,7.5) {$\SAP$};
\node[a] (PAP) at (4,9) {$\PAP$};
%\node[a] (PPA) at (5,4) {$\PPA$};
\node[a] (UPLC) at (0,0) {$\UPLC$};
\node[a] (PLC) at (0, 6) {$\PLC$};
\node (BiRamsey) at (7, 3) {$\biramsey$};
\node (Ramsey) at (-1.5,3) {$\ramsey$};
\end{scope}

\path[-{Stealth[length=6pt]},line width=.6pt,gray]
%(EOPL) edge (SOPL)
%(EOPL) edge (PPAD)
%(SOPL) edge (PLS)
%(SOPL) edge (PPADS)
%(PPAD) edge (PPADS)
%(PPAD) edge (PPA)
%(PPADS) edge (PPP)
%(PPAD) edge (PPA)
(BiRamsey) edge (PAP)
(Ramsey) edge (PLC);
%(PLS) edge (PLC);

\path[-{Stealth[length=6pt]},line width=.6pt,YellowOrange]
%(EOPL) edge (SOPL)
%(EOPL) edge (PPAD)
%(SOPL) edge (PLS)
%(SOPL) edge (PPADS)
%(PPAD) edge (PPADS)
%(PPAD) edge (PPA)
%(PPADS) edge (PPP)
(UPLC) edge (PAP)
(PPP) edge (PLC)
(PPP) edge (3PPP)
(3PPP) edge (4PPP)
(4PPP) edge (Dots)
(Dots) edge (PiH)
(PiH) edge (SAP)
(SAP) edge (PAP)
(UPLC) edge (PLC);

\small
\hypersetup{hidelinks}
\tikzset{new/.style={-{Stealth[length=6pt]},dashed,line width=1pt,YellowOrange}}

%\tikzset{old/.style={-{Stealth[length=6pt]},dashed,line width=0.6pt,Gray}}
\draw[new,bend right=4]
(UPLC) edge
node[pos=0.64,sloped,above,inner sep=2pt]{\cref{cor:uplc-ppp}}
(SAP);

\draw[new]
(Ramsey) edge
node[pos=0.65,sloped,above,inner sep=2pt]{\cref{cor:ramsey-ppp}}
(SAP);

\draw[new]
(BiRamsey) edge
node[midway,sloped,below,inner sep=2pt]{\cref{cor:ramsey-ppp}}
(SAP);

%\draw[new]
%(PLS) edge
%node[midway,sloped,below,inner sep=2pt]{\cref{thm:plspap}}
%(PAP);

\draw[new]
(PPA) edge
node[pos=0.75,sloped,above,inner sep=2pt]{\cref{thm:plspap}}
(PAP);

\draw[new]
(PPP) edge
node[midway,sloped,below,inner sep=2pt]{\cref{cor:uplc-plc}}
(UPLC);

\draw[new,bend left=14]
(PLS) edge
node[midway,sloped,above,inner sep=2pt]{\cref{thm:plspap}}
(PAP);

\normalsize
\tikzset{model/.style={color=MidnightBlue}}

\node [model, right=0 of PPP] {$=$ 2-\PPP};

\end{tikzpicture}
    \caption{Complexity classes and problems defined by Generalized Pigeonhole Principles and Ramsey. An arrow $\cA\rightarrow\cB$ means $\cA\subseteq \cB$. An {\color{YellowOrange} orange} arrow $\cA~{\color{YellowOrange} \rightarrow} ~\cB$ means $\cA \subseteq \cB$ and $\cB \not\subseteq \cA$ in the black-box setting. A~dashed arrow~$\cA\dashrightarrow\cB$ means $\cA\not\subseteq \cB$ in the black-box setting. All black-box separations in this figure are contributions of this work, labelled by the corresponding theorem. \\
    We refer to the tower of $t$-\PPP classes as the Pecking Order, while the union of $t$-\PPP for all constant $t$ is referred to as the Pigeon Hierarchy (\PiH). We note that while our result $\biramsey \not\in \SAP$ (in the black-box setting) applies for the standard parameter regime, $\biramsey \in \PAP$ is for a slightly smaller parameter.}
    \label{fig:tfnp}
\end{figure}

\subsubsection{$\PLS$, $\PPA$, and the Pecking Order}

Given these results, it is also quite natural to try and relate the Pecking Order classes to other well-studied and important $\TFNP$ subclasses.
We focus on the relationships with the classes $\PLS$ and $\PPA$, which are the two of the three strongest $\TFNP$ subclasses out of the ``original five'' $\TFNP$ subclasses defined by Papadimitriou \cite{Pap94} (the third being $\PPP$).
The seminal work of Beame et. al. \cite{Beame95} has shown that the important $\TFNP$ subclass $\PPA$, embodying the \emph{Handshaking Lemma}, is not contained in $\PPP$ in the black-box setting.
The recent breakthrough of \cite{Goeoes2022} showed that the $\TFNP$ subclass $\PLS$, embodying problems with (not necessarily efficient) local search heuristics, is \emph{also} not contained in $\PPP$ in the black-box setting.
Using our techniques, we improve these results, showing that neither of these classes are even contained in $\PAP$ --- the highest level of the Pecking Order:

\begin{restatable}{theorem}{plspap}
    \label{thm:plspap}
    $\PLS^{dt} \not\subseteq \PAP^{dt}$ and $\PPA^{dt}\not \subseteq \PAP^{dt}$.
\end{restatable}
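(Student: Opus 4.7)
The plan is to lift the known black-box separations $\PLS^{dt} \not\subseteq \PPP^{dt}$ and $\PPA^{dt} \not\subseteq \PPP^{dt}$ (due to G\"o\"os et al.~\cite{Goeoes2022} and Beame et al.~\cite{Beame95}) to separations from every level of the Pecking Order. By \cref{thm:polynomial-robustness}, $\PAP^{dt}$ coincides with $t(n)$-$\PPP^{dt}$ for any polynomial $t(n)$, so it suffices to exhibit, for each such $t$, a hard $\PLS$ (resp.\ $\PPA$) instance that does not black-box reduce to $\tightPigeon{t}$.

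The core tool is the standard correspondence between black-box $\TFNP$ reductions and Nullstellensatz-style refutations witnessed by pseudoexpectation operators $\psE$. The prior separations against $\PPP$ already supply a high-degree $\psE$ that satisfies both the axioms of a hard $\PLS$/$\PPA$ instance and the pigeonhole axioms while falsifying the solution polynomial. The idea is to extend $\psE$ into a joint pseudoexpectation over the original oracle together with an auxiliary random map $h:[M]\to[N]$ whose fibers each have size at most $t-1$, so that $h$ is a valid $\tightPigeon{t}$ instance with no $t$-collision. The natural candidate is a product construction: sample the hard $\PLS$/$\PPA$ instance from the existing hard distribution, independently sample $h$ uniformly among $(t-1)$-to-$1$-type maps, and form the joint pseudoexpectation. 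One then verifies that this joint operator annihilates every $t$-collision monomial of $h$ as well as every $\PLS$/$\PPA$ axiom polynomial, while continuing to fail the solution axioms of the original hard problem.

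The main obstacle is bounding the degree $d$ up to which this construction survives. Since the $\tightPigeon{t}$ axioms have width $t$, a decision tree of depth $d$ reading $h$ touches at most $d$ pigeon-to-hole edges; provided $dt \ll N$, most holes remain untouched and the corresponding axioms stay ``uncorrelated'' from the $\PLS$/$\PPA$ oracle under $\psE$. Formalizing this requires a random-restriction / switching-style argument on the pigeon side, showing that after restricting by any low-depth decision tree over $h$, the remaining axioms still admit the product pseudoexpectation, so that the joint object remains a valid $\psE$ up to degree $d = N^{\Omega(1)}/t$. Because $t$ is polynomial in $n = \log N$, this degree is still superpolynomial, ruling out every efficient black-box reduction. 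Applying this to each polynomial $t$ and invoking \cref{thm:polynomial-robustness} yields $\PLS^{dt}, \PPA^{dt} \not\subseteq \PAP^{dt}$, as claimed.
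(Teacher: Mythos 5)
Your proposal has a genuine gap, and the root of it is that the object you want to construct is not the right one for a many-one black-box reduction. In a depth-$d$ reduction from $R$ (say, $\sod$ or $\leaf$) to $\tightPigeon{t}$, the pigeon map $h$ is \emph{not} an independent auxiliary input: it is computed from $R$'s oracle by the decision trees $f_i$ of the reduction. There is thus no joint variable set over which to form a product pseudoexpectation, and no "independently sample $h$" step. The lower-bound object must live over $R$'s input variables alone and reason about the \emph{images} of the reduction's decision trees; this is exactly what the paper's notion of a $(d,t,\varepsilon)$-collision-free pseudoexpectation (\cref{def:collision-free}) and its use in \cref{thm:psExp} accomplish. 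On top of this structural mismatch, the specific sampling step you propose is vacuous: there is no map $h:[(t-1)N+1]\to[N]$ whose fibers all have size $\le t-1$ — that is precisely the pigeonhole principle. To "fake" such an object one must pass to a \emph{pseudo}distribution (the paper uses the matching pseudodistribution, \cref{def:matching-pseudodistribution}, over the pigeon variables — but those are the variables of the $\tightPigeon{t+1}$ problem, not of $R$), and then one still needs an argument linking it to the reduction.

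What is actually missing from your proposal is the mechanism that converts the known $\PPADS^{dt}$ lower bound for $\PLS$/$\PPA$ into a collision-free pseudoexpectation. The paper does this via \emph{gluability}: for $\sod$ and $\leaf$ one can "complete" any small conjunction by a short decision tree so that a conjunction either explicitly witnesses a solution, or is provably far from witnessing, and moreover any $t$ consistent non-witnessing completions jointly leave the problem hard (\cref{lem:glueable} and its $\leaf$ analogue). Given any $\varepsilon$-nonwitnessing pseudoexpectation (which exists because $\PLS,\PPA\not\subseteq\PPADS^{dt}$, by \cref{thm:sa-theorem,thm:HKT24}), gluability lets one split a $t$-witnessing family into the completed witnessing part (controlled by $\varepsilon$) and a mutually $t$-wise inconsistent part (controlled by \cref{lem:decision-tree-bound}), yielding $(d,t,t-1+o(1/N))$-collision-freedom (\cref{thm:gluable-implies-pap}). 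Your "random-restriction / switching" sketch is not a substitute for this: the relevant quantity is a sum of pseudoexpectations over an exponentially large family of paths mapped to the same hole, and the bound comes from the combinatorial structure of $t$-wise inconsistency, not from showing most holes are "untouched." So while your high-level goal (lift the $\PPP$ separations to $\PAP$ via pseudoexpectations) matches the paper's, the specific construction you propose is both ill-defined and missing the gluability/collision-freedom machinery that carries the proof.
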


To prove these results we crucially employ our proof complexity techniques, combined with a generalization of the technical notion of \emph{gluability} \cite{Goeoes2022}, used in the previous black-box separations between $\PPP$ and $\PLS$.
We refer to \cref{sec:other-classes} for technical details.

\subsubsection{Quantum Complexity and the Pecking Order}

Finally, we make an observation relating the Pecking Order to a problem in quantum complexity.
It has remained an open problem to get a separation between $\BQP$ and $\BPP$ with respect to a \emph{random oracle}, explicitly posed by Aaronson and Ambainis \cite{Aaronson2011}. They even identified a technical barrier to proving such a separation, called the Aaronson-Ambainis conjecture, which has evaded attacks by experts in Boolean function analysis for almost a decade. %\david{Can we make this paragraph more succinct by removing the last two sentences? For people who would possibility care about this result, we don't need to write } \sid{agreed} Most natural candidates for getting such a separation are provably false \cite{BBBV97,AS04,Yuen14,Zhandry15}. Such a separation is implied by an \emph{average-case} separation between quantum and randomized query complexity.

Recently, a breakthrough result was shown in this area by Yamakawa-Zhandry \cite{Yamakawa2022} who obtained such a separation for \emph{search} problems, evading the technical barrier that existed for decision problems. They also noted that their problem can be modified to show a separation between quantum and randomized query complexity in \TFNP. 
We prove the totality of their problem using the averaging principle, and thus place it in \PAP. This gives the first inclusion of the Yamakawa-Zhandry's Problem in a natural subclass of \TFNP.
% We show that their problem lies in the Pecking Order,

\begin{restatable}{theorem}{yzpap}
    \label{lemma:yz}
    Yamakawa-Zhandry's Problem is contained in \PAP.
\end{restatable}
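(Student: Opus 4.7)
The plan is to recast the totality proof of Yamakawa--Zhandry's problem as a black-box reduction to $\tightPigeon{t}$ for some polynomially-bounded $t$, which by \cref{thm:polynomial-robustness} places the problem in $\PAP$. Recall the YZ setup: given oracle access to subsets $S_1, \ldots, S_N \subseteq \Sigma$ of appropriate density, one must exhibit a codeword $c \in \mathcal{C}$ of a fixed folded Reed--Solomon code satisfying $c_i \in S_i$ for every coordinate $i$. Totality is proven in \cite{Yamakawa2022} by a counting/averaging argument exploiting the uniform-marginal property of Reed--Solomon codes: the fraction of codewords bad at any fixed coordinate is small, so a quantitative union bound yields strictly fewer globally bad codewords than $|\mathcal{C}|$, leaving at least one satisfying codeword.

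The first step is to translate this counting into a compact map $h$ suitable for a generalized pigeonhole instance. Indexing pigeons by coefficient vectors of $\mathcal{C}$, I would set $h(c) = (i, c_i)$ where $i$ is the first coordinate with $c_i \notin S_i$, and $h(c) = *$ whenever $c$ is already a YZ-solution. The uniform-marginal property gives an explicit polynomial upper bound $K$ on the preimage size of each non-$*$ hole $(i, \sigma)$. I would then choose $t = K + 1$ and verify that the pigeonhole inequality $M > (t-1)N'$ is satisfied in the YZ parameter regime, so that $\tightPigeon{t}$ produces a $t$-collision in $h$. Since no non-$*$ hole can receive $t$ pigeons, this collision is forced to lie in the $*$-fibre, and any pigeon therein is itself a YZ-solution.

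The main obstacle is the parameter alignment: one must check that the uniform-marginal fibre bound $K$, the pigeon-to-hole compression ratio $M/N'$, and the collision parameter $t$ are simultaneously polynomial in $n$, which amounts to repackaging the inequality at the heart of YZ's totality proof. The oracle-circuit implementation of $h$ is routine --- it performs polynomial-degree Reed--Solomon arithmetic on the input coefficient vector to evaluate $c$ at each coordinate, and issues one oracle query per coordinate to identify the first failure, all expressible by a polynomial-size oracle circuit. With these components in place, the construction delivers a $\tightPigeon{t}$ instance with $t = \poly(n)$ from which YZ-solutions are extracted in polynomial time, and so $\mathrm{YZ} \in \PAP$.
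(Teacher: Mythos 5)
Your proposal does not match the paper's proof, and more importantly, it contains a genuine gap that I don't see how to repair along the lines you've sketched.

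The first problem is that you are working with the wrong formulation of the total Yamakawa--Zhandry problem. You describe the task as: given sets $S_1,\dots,S_n\subseteq\Sigma$, find a \emph{single} codeword $c$ with $c_i\in S_i$ for all $i$. But the total version used in the paper (\cref{def:YZ}) is different: the input is a map $f:\Sigma^n\to\B^n$ together with a fixed family $(h_k)$ of $\lambda$-wise independent hash functions, and a solution consists of a hash key $k$ \emph{and} $t$ codewords $c^{(1)},\dots,c^{(t)}$ with $f(c^{(i)})\oplus h_k(c^{(i)})=0^n$ for every $i$. The key $k$ and the $\lambda$-wise independence are not incidental---they are precisely what makes the problem total without any density promise on the input, and they never appear in your reduction. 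Without them the problem you describe is not obviously total for an arbitrary input circuit.

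The second and more serious problem is the claim that the uniform-marginal property gives a polynomial bound $K$ on the fibres of your ``first failure'' map $h(c)=(i,c_i)$. This is false: by uniform marginals alone, $|h^{-1}(i,\sigma)|\lesssim|\mathcal{C}|/|\Sigma|$, which is exponentially large (the code satisfies $|\mathcal{C}|\ge 2^{2n}$ and $|\Sigma|=2^{\poly(n)}$, and the code has dimension greater than one). Consequently a $t$-collision for polynomial $t$ returned by $\tightPigeon{t}$ can easily land in a non-$*$ fibre, and your argument that it is ``forced to lie in the $*$-fibre'' breaks down. Even setting that aside, the number of non-$*$ holes is about $n|\Sigma|$, so the required inequality $M>(t-1)N'$ with $M=|\mathcal{C}|$ and polynomial $t$ is a nontrivial parameter constraint that you'd have to verify rather than take for granted.

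The paper's proof takes an entirely different route. It first abstracts the structure of the ECC away (Definition~\ref{def:azc}, \AZC{}), and then defines pigeons to be pairs $(j,k)$ where $j$ is a column index and $k$ is a hash key making column $j$ of $F\oplus H_k$ all zeros, and holes to be the keys $k$ themselves, with the map $(j,k)\mapsto k$. The crucial counting step uses $\lambda$-wise independence with $\lambda>n$ to say that \emph{each} column $j$ contributes exactly $|\mathcal{H}|/2^n$ pigeons, hence $M=m\cdot|\mathcal{H}|/2^n\ge t|\mathcal{H}|=tN$ because $m\ge 2^n t$. A $t$-collision then produces $t$ indices $j_1,\dots,j_t$ paired with a common key $k$, which is exactly a solution of \AZC{} (and hence of YZ). This argument makes essential use of both ingredients your proposal drops: the hash family (as the hole set) and the $\lambda$-wise independence (to count pigeons exactly, so the pigeonhole inequality holds unconditionally).

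In short: your first-failure reduction is a sensible instinct for many TFNP problems, but here the fibres it creates are too large for it to work, and you've also lost the key/hash structure that the totality of the problem (and the paper's reduction) depends on.
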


Furthermore, the proof of \cref{lemma:yz} suggests that the Yamakawa-Zhandry's problem necessarily corresponds to finding a $\poly(n)$-collision. Given our structure theorem of the Pecking Order (\cref{thm:general_lb}), one may suspect that Yamakawa-Zhandry's problem is not contained in any lower level of the Pecking Order, e.g., \PPP.

Finally, we remark that \PAP is a rather loose upper bound for the Yamakawa-Zhandry's problem. The \emph{structure} of the Yamakawa-Zhandry's problem, imposed by a specific choice of \emph{error-correcting code} (ECC) in its definition, is lost during our reduction. The structure of the ECC is essential for its quantum speed-up, considering the fact that even \PWPP cannot be solved efficiently by the quantum algorithm in the black-box setting~\cite{AS04}.

\subsection{Our Techniques}
\label{sec:our-techniques}

Our main separation results all follow from our black-box separations between $\ourPigeon{(t+1)}{M}{N}$ and $\ourPigeon{t}{M'}{N'}$, which in turn are proved by developing lower bound tools in \emph{propositional proof complexity}.
The connection between logic and \TFNP has long been acknowledged \cite{Buss94,Maciel00,Morioka01,Thapen02,Buresh-OppenheimM04,Jer09,KLNT11,BussJohnson12,Skelley11,BeckBuss14,Jer16,Thapen22}, and the first paper to prove oracle separations for \TFNP \cite{Beame95} also invoked a Nullstellensatz lower bound for their result. 
Recently, \emph{equivalences} have been established between complexity measures of certain proof systems and the complexity of black-box reductions to corresponding subclasses of \TFNP. 
The first example of this was G\"o\"os, Kamath, Robere and Sokolov \cite{GPRS19}, followed by G\"o\"os et al.~\cite{Goeoes2022} who proved equivalences for the prior natural-studied subclasses of $\TFNP$ and used these characterizations to provide the final remaining black-box separations between these subclasses.
This was followed by Buss, Fleming and Impagliazzo \cite{Buss23} who showed that this equivalence holds for every \TFNP problem --- in a certain sense --- although the corresponding proof system they construct is somewhat artificial, and hard to analyze directly.

Our techniques follow in this vein, as our lower-bound tools are directly inspired from propositional proof complexity.
However, a major place where we depart is that there is currently no known natural proof system characterizing \emph{any} of the classes in the Pecking Order, including $\PPP$! 
This means that we do not have any lower-bound tools from proof complexity to borrow directly, and must instead develop new ones.

The tools we develop to prove our lower bounds are generalizations of \emph{pseudoexpectation operators}, which have been instrumental in proving lower bounds for both the Sherali-Adams and the Sum-of-Squares proof system \cite{Fleming19}. 
Roughly speaking, a pseudoexpectation operator is an operator on polynomials that is indistinguishable from a true expectation operator, provided that we are only allowed to examine \emph{bounded moments} of the distribution. In particular, it is possible to define pseudoexpectation operators that can \emph{fool} a bounded adversary into thinking that there is a distribution over inputs to a total search problem $R$ that do not witness any solutions (which is, of course, absurd). These operators are instrumental for understanding both Sherali-Adams and the Sum-of-Squares proof system, and have broad connections to the theory of approximation algorithms (see \cite{Fleming19} and the references therein for further details). Thus, our paper provides a new link between pseudoexpectation operators and the complexity of the \emph{pigeonhole principle}.

Concurrent work of Fleming, Grosser, Pitassi, and Robere \cite{FlemingGPR23} introduced a new type of pseudoexpectation operator called a \emph{collision-free pseudoexpectation} that is tailored for proving lower bounds against black-box $\PPP$.
Collision-freedom is a technical condition that is difficult to summarize without introducing a significant amount of notation (cf.~\cref{sec:lb:collisions}).
For now, we will say that the original notion of collision-free pseudoexpectation operators only applied to reductions to $\ourPigeon{2}{N+1}{N}$ — that is, only for collisions of two pigeons, in the tight compression regime from $N+1$ pigeons to $N$ holes.
In our work, we introduce a broad generalization of collision-freedom that works for both an \emph{arbitrary} number of collisions \emph{and} an arbitrary compression rate.
All of our main lower bound results follow designing generalized collision-free pseudoexpectations, combined with reductions placing various other problems inside the Pecking Order. 
To be specific, for separations \emph{between} classes inside the Pecking Order (cf.~\cref{thm:general_lb}), we design collision-free pseudoexpectations directly and prove that they satisfy the required properties.
For our separations between the classes $\PLS$ and $\PPA$, we provide a more general approach.
Instead, we consider a general property (\emph{gluability}, introduced by \cite{Goeoes2022}), and show that any pseudoexpectation for a problem satisfying this property is \emph{automatically} collision-free for very strong parameters.
This is a strong technical improvement (and conceptually different) application of gluability than had been applied by the earlier work of \cite{Goeoes2022}.
We refer to \cref{sec:lb:collisions} and \cref{sec:other-classes} for further details on our pseudoexpectation technique.

\subsection{Related Works}\label{sec:concurrent}
In this section we discuss the relationship of our work with concurrent work. We also discuss the how our black-box separations fit in with related work in cryptography.

\paragraph{Relationship with the work of Fleming, Grosser, Pitassi, and Robere \cite{FlemingGPR23}.} In \cref{sec:lb:collisions} we introduce a new type of pseudoexpectation operator called a \emph{$(d, t, \varepsilon)$-collision-free pseudoexpectation} (cf.~\cref{def:collision-free}), and show that the construction of such operators implies lower bounds for $t$-$\PPP^{dt}$. This definition is a generalization of the notion of collision-free pseudoexpectation operators introduced in the concurrent work of Fleming, Grosser, Pitassi, and Robere. (To be precise, the original notion of a collision-free pseudoexpectation corresponds to the parameter setting $(d, 2, 1)$ in the above definition). The lower bounds presented in this paper are orthogonal to the results of \cite{FlemingGPR23}. The main result in the concurrent work is to give a black-box separation between $\PPP^{dt}$ and its Turing closure $\FP^{\PPP^{dt}}$, a result which is incomparable to the results proved in our work.

\paragraph{Relationship with the work of Li \cite{Li23tfap}.}

This work precedes the work of Li~\cite{Li23tfap}, which generalizes the proof of \cref{thm:lb:compression} into a framework that separates two types of \TFNP problems. In particular, \cite{Li23tfap} defines a new (semantic) subclass of \TFNP called \TFAP to captures \TFNP problems with \emph{abundant} solutions. 
\cite{Li23tfap} also introduces a notion called \emph{semi-gluablility}. The main result in \cite{Li23tfap} is that there is no black-box reduction from any ``semi-gluable'' problems to any \TFAP problems, and such separation could be extended to randomized reductions in most cases. % It is straightforward to show $n$-\PWPP is in \TFAP, and $\PPP$ is \emph{semi-gluable}
% which contains classes like \PWPP and $n$-\PWPP; he also introduces a new notion called ``semi-gluable'', which captures classes like \PPP, \PPAD and even \UEOPL. The main result in \cite{Li23tfap} is that there is no black-box reduction from any ``semi-gluable'' problems to any \TFAP problems, and such separation could be extended to rule out randomized reductions in most cases. Therefore, \cref{thm:lb:compression} can be conclude by this framework.

In this paper, we present a direct proof of \cref{thm:lb:compression} in \cref{sec:lb:compression}, which is simpler than the proof of the main statement in Li~\cite{Li23tfap}, while sharing several key ideas with Li~\cite{Li23tfap}. % Another advantages of our direct proof of \cref{thm:lb:compression} is that it implies a quantitatively stronger statement which rules out any depth-$d$ randomized black-box reduction with $d = O(N^{0.49})$. This may be of interest in the cryptography setting.
Related work in bounded arithmetic has been done by M{\"{u}}ller~\cite{Muller21}.

\paragraph{Relationship with the work of Bennett, Ghentiyala, and Stephens{-}Davidowitz~\cite{BGS24merrier}.} The concurrent work \cite{BGS24merrier} defines the classes in the Pecking Order using different notation, and prove results including black-box separations. We view our black-box separations (\cref{thm:general_lb}) as conceptually stronger, since the statement does not depend on the compression rate of the two problems. In contrast,
the black-box separation from Bennett et al. builds upon the work of Rothblum \& Vasudevan \cite{RN22} and their technique only works with certain compression rates. For example, their technique cannot be used to infer lower bounds for $\ramsey$. They also study the relationship between the Pecking Order and certain coding and lattice problems, which has no overlap with our work.

\paragraph{The cryptographic picture.}

The search problems corresponding to the weak pigeonhole principles naturally show up in cryptography: given a hash function which compresses the input, how hard is it to find a collision? This is known as collision resistance hash function (CRH), which essentially corresponds to the average-case hardness of class \PWPP . Recently there has been work investigating a generalization of this to \emph{multi-collision resistance hash function (MCRH)} \cite{Joux04Multicollision, Bit18, KMY18, RN22}, which naturally corresponds to $n$-\PWPP defined in the Pecking Order. We refer readers to \cite{BGS24merrier} for a more comprehensive list of references.
% \david{Add more reference, e.g. let reader to read \cite{BGS24merrier} for a more comprehensive summary. Should cite katerina's thesis here.}

It has remained open to show that there is no \emph{fully black-box construction}\footnote{Roughly speaking, a \emph{fully black-box construction} of primitive $A$ using primitive $B$ in the cryptography setting is a uniform randomized black-box Turing reduction from the computational problem of $B$ to the computational problem of $A$ in average-case.} of CRH using MCRH.\footnote{To the best of our knowledge, this is still an open problem. We were notified by the authors of \cite{KMY18} that there was an error in their proof that claimed to give a fully black-box separation between MCRH and CRH.}. Our separation results in the Pecking Order (\cref{thm:general_lb}) make the first step towards showing a \emph{fully black-box separation} between MCRH and CRH.

% \david{TODO: Reorganize this!}
Our randomized separation of $\PPP$ from $n$-$\PWPP$ (\cref{thm:lb:compression}) may also be of interest from a cryptography perspective.
Berman et al.~\cite{BDRV18} and Komargodski et al.~\cite{KMY18} showed that there is no fully black-box construction of MCRH (corresponds to $t$-\PWPP)
using one-way permutation (corresponds to \PPP). Technically, their separation result is not comparable to ours.

Both \cite{BDRV18} and \cite{KMY18} use an indirect approach to rule out the fully black-box construction. In particular, they present a fully black-box construction of \emph{constant-round} statistically hiding commitment schemes using MCRH, and combine with a lemma from Haitner et al.~\cite{HHRS15} showing that there is no fully black-box construction of constant-round statistically hiding commitment schemes from one-way permutation. In contrast, our proof is a direct one, which may offer more insights into the distinction between the one-way permutation and the MCRH.
    % \item The reduction (i.e., the construction) in the fully black-box construction uses a (uniform) polynomial-time Turing machine; while our result also rules out (non-uniform) decision tree reductions. On the other hand, in the cryptography setting, it could be a Turing reduction (multi-query), while we only consider many-one reductions (single query) in this work.
    % It might be of interest to give a direct proof to rule out the fully black-box construction from OWP to MCRH in the cryptographic setting by generalizing our technique.

\subsection{Open problems}

% \david{Rewrite this section into a Discussion}

In our opinion, a fine-grained study of extremal combinatorics problems with respect to the two generalizations of \PPP (\PLC and Pecking Order) merits further research. The connections to cryptography, and quantum complexity also raise several intriguing questions.
Here we list some open problems.

% \david{Survey some bounded arithmetic paper on Ramsey, and say something from the perspective of reverse mathematics. E.g., it is impossible to prove the standard Ramsey theory from PHP in certain proof system / theory. It will be great if we can rule out Ramsey from the Turing closure of \PPP. We also give evidence that \ramsey is not same as \biramsey computationally. So meta-methematically, proving bounds for \biramsey and \ramsey requires different technique.}
% \david{Add an open problem of extending our separation to Turing reduction.}

\begin{enumerate}
    \item Can we strengthen our black-box separations to further rule out Turing reductions? Meta-mathematically, this could be interpreted as the (standard) pigeonhole principle is not strong enough for proving the Ramsey theorem.\footnote{For the interested reader, we can state this formally in the language of bounded arithmetic as follows: $\ramsey$ is not uniformly Turing-reducible to $\pigeon$ in the black-box model if and only if Ramsey's Theorem is logically independent from the Pigeonhole Principle over the bounded arithmetic theory $\forall\mathsf{S}_2(\mathsf{PV}(\alpha))$ \cite{Muller21}.}
    \item Can we show a black-box separations between \PAP  and \PLC? This would separate the power of the ``iterated pigeonhole'' argument captured by $\PLC$ from the generalized pigeonhole principle.
    \item Does \ramsey lie in \PAP? Recall that with a slight loss in parameter \biramsey lies in \PAP (\cref{lem:biramsey-pap}).
    \item The definition of \PWPP is robust to different compression rate~\cite{Jer16}, e.g., both $\ourPigeon{2}{2N}{N}$ and $\ourPigeon{2}{{N^2}}{N}$ are \PWPP-complete. However, this is not known for  $t$-\PWPP for any $t > 2$~\cite{Joux04Multicollision}. Partial progress has been made in \cite{Bit18, BGS24merrier}.
    % \david{mention the result in \cite{BGS24merrier} paper}
    Either prove that these classes collapse under different compression rates or prove black-box separations.
    \item Is $\UPLC = n\textrm{-}\PWPP$? Given \cref{lem:pwpp-uplc,lem:uplc-pap}, this would be true if $n$-\PWPP is robust for a certain range of compression rates.
    \item Can we extend our separation results in Pecking Order to a fully black-box separation between the multi-collision-resistant hash function (MCRH) and the standard collision-resistant hash function (CRH)?
    \item Is there a natural \TFNP subclass --- simpler than the Yamakawa-Zhandry's problem --- in the Pecking Order which is contained in Total Function \BQP?
\end{enumerate}

\paragraph{Paper Organization}

% \david{Double check this in the end.}
In \cref{sec:prelim} we introduce necessary preliminaries for the paper, including definitions of the main $\TFNP$ search problems under consideration, the necessarily definitions for black-box $\TFNP$, and the proofs of two basic structural properties of the Pecking Order.
In \cref{sec:ramsey}, we characterize the complexity of $\ramsey$ and $\biramsey$ by proving \cref{thm:ramsey-pigeon} and relating them to the Pecking Order.
Then, in \cref{sec:lb:collisions}, we introduce and develop our lower-bound technique of $(d,t,\varepsilon)$-collision-free pseudoexpectations, and use it to prove \cref{thm:general_lb}, the main structure theorem of the Pecking Order.
% We also directly prove randomized separations between $\PPP$ and problems in Pecking Order that have non-trivial compression.
In \cref{sec:lb:compression}, we prove the randomized separations between $\PPP$ and any problem in Pecking Order that has non-trivial compression rate.
Finally, \cref{sec:other-classes} proves our other inclusions in and separations from the Pecking Order, including proving that $\PLS^{dt}, \PPA^{dt} \not \subseteq \PAP^{dt}$, our results relating to $\PLC$ and $\UPLC$, and also our results about the Yamakawa-Zhandry's problem.

\section{Preliminaries}\label{sec:prelim}

Unless stated otherwise, $N = 2^n$. We also assume all the integer-valued functions defined in this paper are monotone and can be calculated efficiently.

\subsection{Ramsey Theory}\label{sec:prelim:ramsey}

Ramsey theory is most generally defined to be a branch of combinatorics that studies how large some structure must be such that a property holds, which is usually the presence of a substructure. This branch was started by the study of the \emph{Ramsey number}, which we define below.

\begin{definition}
    $R(s,t)$ is defined to be the smallest integer $n$ such that for any graph on $n$ vertices, there is an independent set of size $s$ or a clique of size $t$.
\end{definition}

\begin{theorem}[Ramsey's theorem \cite{Ramsey1929}]
    $R(s,t)$ is finite for every pair of integers $s,t$. 
\end{theorem}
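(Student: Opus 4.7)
The plan is to prove finiteness of $R(s,t)$ by induction on $s+t$, establishing the recursive inequality
\[
R(s,t) \leq R(s-1,t) + R(s,t-1) \qquad \text{for } s,t \geq 3,
\]
together with the base cases $R(s,2) \leq s$ and $R(2,t) \leq t$. This is precisely the ``recursive application of the pigeonhole principle'' foreshadowed in the footnote of \cref{sec:prelim:ramsey}, and it will give the quantitative bound $R(s,t) \leq \binom{s+t-2}{s-1}$ as a by-product.

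For the base cases, I would observe that in any graph on $s$ vertices, either there exists at least one edge---yielding a clique of size $2$---or the graph has no edges at all, in which case the entire vertex set is an independent set of size $s$. The case $R(2,t) \leq t$ is symmetric.

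For the inductive step, fix $s,t \geq 3$, set $n = R(s-1,t) + R(s,t-1)$ (finite by the inductive hypothesis), and let $G$ be an arbitrary graph on $n$ vertices. Pick any vertex $v$ and partition $V(G)\setminus\{v\}$ into its neighborhood $A$ and non-neighborhood $B$, so that $|A| + |B| = n - 1$. The pigeonhole principle then forces either $|A| \geq R(s,t-1)$ or $|B| \geq R(s-1,t)$. In the first case, applying the inductive hypothesis to the subgraph induced on $A$ either produces an independent set of size $s$ (and we are done) or a clique of size $t-1$, which extends to a clique of size $t$ by adjoining $v$. The second case is symmetric: the subgraph on $B$ contains either a clique of size $t$ or an independent set of size $s-1$, and the latter extends via $v$ to an independent set of size $s$.

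Since this is a classical result, there is no real obstacle; the only care needed is to verify that the pigeonhole step is tight (it is, because $|A|+|B| = n-1$ and $n-1 \geq (R(s-1,t)-1) + (R(s,t-1)-1) + 1$), and to start the induction cleanly at $s = 2$ or $t = 2$ rather than at $s = t = 1$, where the notion of ``clique'' or ``independent set'' of size $1$ would trivialize the statement.
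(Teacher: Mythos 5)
Your proof is correct and is the classical Erd\H{o}s--Szekeres argument: base cases $R(s,2)\le s$, $R(2,t)\le t$, the recursion $R(s,t)\le R(s-1,t)+R(s,t-1)$ via a vertex-split into neighborhood and non-neighborhood, and the resulting bound $R(s,t)\le\binom{s+t-2}{s-1}$. The paper itself offers no proof of this theorem --- it is stated with a bare citation to Ramsey (1929) as a known classical result --- so there is no paper-internal argument to diff against. The closest thing to a proof in the paper is the footnote in the introduction, which sketches the ``diagonal'' greedy variant (repeatedly keep the majority side among a vertex's neighbors/non-neighbors, collect $\log n$ pivot vertices, then apply pigeonhole to the signs); that gives $R(k,k)\le 4^k$ in one shot, whereas your two-parameter recursion is slightly finer and yields the binomial bound. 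Both are the same ``recursive pigeonhole'' idea, so your proposal is fully consistent with the spirit of the paper's footnote and is an entirely standard, valid proof of the stated theorem.
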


Finding upper and lower bounds on $R(s,t)$ has been a big program in the combinatorics community. But recently there has been interest in building \emph{explicit} graphs with no clique or independent set of size $K$ (which witness lower bounds for $R(K,K)$). These are motivated by connections to pseudorandomness. Constructing a pseudorandom object called a two-source \emph{disperser} is essentially equivalent to constructing explicit bipartite Ramsey graphs, and getting better parameters has seen a long line of work \cite{ChorGoldreich88,Cohen16,Chattopadhyay19,Li2023}. This program makes progress on answering an old question of Erd\H{o}s \cite{Erds1947SomeRO}: can we construct explicit $O(\log n)$-Ramsey graphs on $n$ vertices? We formally define this object below.

\begin{definition}[$K$-Ramsey]
    A graph on $N$ vertices is said to be $K$-Ramsey if it does not contain a clique or independent set of size $K$.
\end{definition}

\begin{definition}[Bipartite $K$-Ramsey]
    A bipartite graph on $(N,N)$ vertices is said to be $K$-BiRamsey if it does not contain a biclique or independent set of size $(K,K)$.
\end{definition}

We recall a recent result of Li \cite{Li2023}.

\begin{theorem}[\cite{Li2023}]
\label{thm:explicit-ramsey}
    There exists a constant $c > 1$ such that for every integer $N$ there exists a (strongly) explicit construction of a bipartite $K$-Ramsey graph on $N$ vertices with $K = \log^c N$.
\end{theorem}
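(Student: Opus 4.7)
The plan is to reduce the construction of bipartite Ramsey graphs to the construction of a two-source disperser, and then construct a two-source extractor with optimal entropy requirements (which implies a disperser). Formally, identify any function $f : [N] \times [N] \to \{0,1\}$ with the bipartite graph whose edges are $\{(x,y) : f(x,y) = 1\}$. A biclique (resp.\ bi-independent set) of size $K \times K$ in this graph is exactly a pair of sets $A, B \subseteq [N]$ of size $K$ on which $f$ is constantly $1$ (resp.\ constantly $0$). Thus if $f$ is a two-source disperser with min-entropy threshold $k = \log K$, the associated graph is $K$-BiRamsey. The theorem therefore reduces to exhibiting an explicit, polynomial-time-computable two-source disperser $f: [N] \times [N] \to \{0,1\}$ for min-entropy $k = c \log n$, where $n = \log N$.

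The construction follows the Chattopadhyay--Zuckerman framework as sharpened by Li. First I would build a \emph{non-malleable extractor} $\mathsf{nmExt}$ for min-entropy $O(\log n)$: a function such that $\mathsf{nmExt}(X, Y)$ is close to uniform and independent of $\mathsf{nmExt}(X, Y')$ whenever $X$ has high min-entropy, $Y$ is uniform, and $Y' = \mathcal{A}(Y)$ for any adversarial function $\mathcal{A}$ with no fixed points. Second, use $\mathsf{nmExt}$ together with inner product-type extractors to build an \emph{advice correlation breaker}: a primitive that, given a handful of correlated ``short advice strings'', breaks their correlations and outputs near-independent bits. Third, bootstrap these ingredients to produce a two-source extractor on two independent weak sources $X, Y$ of min-entropy $c \log n$ each: condense $Y$ into a somewhere-random source, extract advice strings via $\mathsf{nmExt}(X, \cdot)$, break correlations using the advice correlation breaker, and finally apply a strong seeded extractor. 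Viewing the output bit of this two-source extractor as an edge indicator yields the required explicit $K$-BiRamsey graph with $K = n^c = (\log N)^c$.

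The main obstacle is pushing the min-entropy requirement down to $O(\log n)$. Earlier constructions of non-malleable extractors worked only for entropy $\Omega(n/\mathrm{polylog}\, n)$ or higher, because the error from correlation-breaking steps accumulates, and each recursive call to a seeded extractor loses a constant factor in entropy. Overcoming this requires a very careful recursive construction of non-malleable extractors whose seed length, error, and entropy loss are all simultaneously $O(\log n)$, together with an advice correlation breaker whose advice length is doubly logarithmic in $N$ rather than polylogarithmic. Once these parameters are aligned, the composition goes through and yields the stated explicit bipartite Ramsey graph with $K = \mathrm{polylog}(N)$.
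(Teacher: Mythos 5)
This theorem is not proved in the paper; it is imported as a black-box citation to Li's 2023 work on two-source extractors, and the paper uses only its statement (to instantiate \cref{lem:pigeon2ramsey} with an explicit Ramsey graph). So there is no in-paper proof to compare against.

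That said, your sketch is a faithful high-level account of what Li actually does: identify bipartite $K$-Ramsey graphs with two-source dispersers for min-entropy $\log K$, build low-entropy non-malleable extractors, compose them into advice correlation breakers with very short advice, and assemble these into a two-source extractor (hence disperser) for sources of min-entropy $O(\log\log N)$, which translates to $K = \mathrm{polylog}(N)$. One small caution on parameters: the reduction to Ramsey needs the disperser to be defined on sources over $[N]$ with min-entropy $k = \log K$, i.e.\ $k = c\log\log N$, not $c\log n$ as an absolute quantity; you have the right thing in mind since you set $n = \log N$, but be careful to keep the two ``$n$''s (number of vertices $N$ vs.\ input length $n = \log N$) cleanly separated, since the asymptotically optimal min-entropy is $O(\log n) = O(\log\log N)$. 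Beyond that, the sketch is not a proof --- the hard content is precisely the simultaneous control of seed length, error, and entropy loss in the recursive non-malleable extractor, which you correctly flag as the obstacle but do not resolve --- and for the purposes of this paper it does not need to be: citing Li's theorem, as the authors do, is the appropriate move.
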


We now define \TFNP problems corresponding to these extremal combinatorics results.

\begin{definition}[$K$-$\ramsey$]
    For this problem, the input specifies a graph on $N = 2^n$ vertices; the parameter $K(n)$ is a function satisfying $R(K,K) \leq N$.
    \begin{description}
        \item[Input] A pair $(n,C)$, where $C: \B^n \times \B^n \mapsto \B$ is a circuit of $\poly(n)$ size.
        \item[Solutions] We have two kinds of solutions. One is a certificate that $C$ is not a valid {\bf encoding} of a simple undirected graph: $u$ for which $C(u,u)=1$ (self-loops) or $u,v$ for which $C(u,v) \neq C(v,u)$ (directed edge). Else, we want to find $K$ indices $V = \{v_1, v_2 \ldots v_{K}\}$ which form a {\bf clique} or {\bf independent set}.
    \end{description}
\end{definition}

Note that we must choose $K$ such that $R(K,K) \leq N$ in order for the problem to be total. An analogous condition holds for the bipartite analogue defined below below. We use $R_b(s,t)$ to denote the bipartite analogue of the Ramsey number.

\begin{definition}[$K$-$\biramsey$] For this problem, the input specifies a bipartite graph with $N$ vertices on each side, where $N = 2^n$; the parameter $K(n)$ is a function satisfying $R_b(K,K) \leq N$.

\begin{description}
    \item[Input] A pair $(n,C)$, where $C: \B^n \times \B^n \mapsto \B$ is a circuit of $\poly(n)$ size.
    \item[Solutions] $(K,K)$ indices $U = \{u_1, u_2 \ldots u_{K}\}$ and $V = \{v_1, v_2 \ldots v_{K}\}$ such that $(U, V)$ forms either a {\bf biclique} or an {\bf independent set}.
\end{description}
\end{definition}

We drop the parameter $K$ when we mean $K=n/2$. This is the existence result (up to logarithmic factors) guaranteed by the original theorem by Ramsey~\cite{Ramsey1929}. Note that recently the first constant factor improvement in this existential result was obtained by Campos et al.~\cite{campos2023exponential}, which corresponds to an exponential improvement in the Ramsey number. We also omit the size parameter $N$ when it is clear in the context.

\subsection{Decision Tree \TFNP}\label{sec:prelim:tfnp_dt}

\begin{definition}
 A \emph{query total search problem} is a sequence of relations $\textsc{R} = \set{R_n \subseteq \B^{n} \times O_n}$, where $O_n$ are finite sets, such that for all $x \in \B^n$ there is an $o \in O_n$ such that $(x, o) \in R_n$.
 A total search problem $\textsc{R}$ is in $\TFNP^{dt}$ if for each $o \in O_n$ there is a decision tree $T_o$ with depth $\poly(\log n)$ such that for every $x \in \B^n$, $T_o(x) = 1$ iff $(x, o) \in \textsc{R}$.
\end{definition}

In general, we use $DT(\textsc{R})$ to denote the query complexity of a search problem $\textsc{R}$.
To simplify the presentation and the relationship between the black-box model and the white-box model, we adhere to several conventions:
\begin{itemize}
    \item For problems with a non-binary input alphabet, we simulate it with the usual binary encoding. For instance, in the pigeonhole principle, a mapping of a pigeon to a hole (i.e.~a pointer in the range $[n]$) can be simulated by a $\log n$ bit binary encoding. All problems discussed in this paper have $\poly(n)$ alphabet size, and can therefore be simulated with $O(\log n)$ overheads.
    \item The problem $R_n$ is permitted to have input bits on the order of $\poly(n)$.
    \item We sometimes abuse the notation by calling an individual relation $R_n$ a search problem, rather than a whole sequence $\mathrm{R} = (R_n)$.
    %\item  A decision tree is called \emph{low-depth} if it has depth $\poly(\log N)$.
    \item The superscript “dt” is omitted when referring to $\TFNP^{dt}$ problems if the context makes it clear.
\end{itemize}

For instance, when $\ourPigeon{t}{M}{N}$ problem is defined in the black-box model, the polynomial-size circuit $\pmap$ encoding the mapping of pigeons to holes is replaced by an oracle that queries the pigeons and maps them to holes.
In the black-box model we also have an appropriate version of reducibility between search problems, where the reductions are computed by low-depth decision trees.
We introduce this next.

\begin{definition}[Decision Tree Reduction]
    A \emph{decision tree reduction} from relation $R \subseteq \{0, 1\}^n \times O$ to $Q \subseteq \{0, 1\}^m \times O'$ is a set of depth-$d$ decision trees $f_i: \{0, 1\}^n \rightarrow \{0,1\}$ for each $i \in [m]$ and $g_o: \{0,1\}^n \rightarrow O$ for each $o \in O'$ such that for any $x \in \{0,1\}^n$,
    \[(x, g_o(x)) \in R  \Leftarrow (f(x), o) \in Q,\]
    where $f(x) \in \{0,1\}^m$ has $f_i(x)$ as the $i$-th bit.
    The \emph{depth} of the reduction is $d$, and the \emph{size} of the reduction is $\log m$.
    The \emph{complexity} of the reduction is $d + \log m$, and we write $Q^{dt}(R)$ to denote the minimum complexity of a decision-tree reduction from $R$ to $Q$, or $\infty$ if one does not exist.

    We extend these notations to sequences in the natural way.
   	If $R$ is a single search problem and~$\textsc{Q} = (Q_m)$ is a sequence of search problems, then we denote by $\textsc{Q}^{dt}(R)$ the minimum of $Q^{dt}_m(R)$ over all $m$.
	If $\textsc{R} = (R_n)$ is also a sequence, then we denote by $\textsc{Q}^{dt}(\textsc{R})$ the function $n\mapsto \textsc{Q}^{dt}(R_n)$. 
    A $\TFNP^{dt}$ problem $\textsc{R}$ can be \emph{black-box reduced} to $\TFNP^{dt}$ problem $\textsc{Q}$, written $\textsc{R} \leq_m \textsc{Q}$, if there is a $\poly(\log(n))$-complexity decision-tree reduction from $\textsc{R}$ to $\textsc{Q}$.
\end{definition}

In general, if a syntactic $\TFNP$ subclass $\cA$ is defined by polynomial-time black-box reductions to a complete problem $\pA$, we will use $\cA^{dt}$ to denote the class of query total search problems that can be efficiently black-box reduced to $\pA$.
The next theorem motivates the decision-tree setting: constructing separations in the decision tree setting implies the existence of generic oracles separating the standard complexity classes.
    %\david{You Are Right!}

\begin{theorem}[\cite{Beame95}, Informal]\label{thm:beame}
    For two syntactical \TFNP subclasses $\mathsf{A, B}$, $\mathsf{A}^{dt} \nsubseteq \mathsf{B}^{dt}$ \emph{implies} the existence of a (generic) oracle $O$ separating $\mathsf{A}$ from $\mathsf{B}$, i.e., there is no black-box reduction from $\mathsf{A}$ to $\mathsf{B}$.
\end{theorem}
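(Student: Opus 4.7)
The plan is to lift the black-box separation to a generic oracle separation via a standard diagonalization argument, combined with the observation that any polynomial-time reduction relativized to an oracle induces, on instances embedded length-wise in the oracle, a low-depth decision tree. Let $\pA, \pB$ be complete problems for $\cA, \cB$, and suppose $\textsc{R} = (R_n)$ is a $\TFNP^{dt}$ sequence with $\textsc{R} \in \cA^{dt}$ but $\textsc{R} \notin \cB^{dt}$. Unpacking the hypothesis: there is a polylog-depth decision tree reduction from $\textsc{R}$ to $\pA$, and for every polynomial $p$ there are infinitely many $n$ for which no decision tree reduction of complexity $p(\log n)$ maps $R_n$ to $\pB$; in particular, for each such $n$ and each candidate reduction of that complexity, there is a bad input $x_n \in \{0,1\}^n$ on which the reduction fails.

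The oracle $O$ is built by diagonalizing over an enumeration $M_1, M_2, \ldots$ of polynomial-time oracle Turing machines paired with polynomial bounds. I would choose a sparse length schedule $n_1 < n_2 < \cdots$ and, at scale $n_i$, encode into $O$ a hard input $x_{n_i}$ for $R_{n_i}$ that defeats the $i$-th candidate $\pB$-reduction, leaving the rest of $O$ unused (or padded) so that different scales do not interfere. The intended witness is the oracle search problem $\textsc{P}^O$: ``on input $1^{n}$, find a solution to $R_n$ applied to the portion of $O$ encoded at scale $n$.'' Membership of $\textsc{P}^O$ in $\cA^O$ is then immediate, since the polylog-depth decision tree reducing $\textsc{R}$ to $\pA$ translates directly into a polynomial-time machine with oracle access to $O$ that reduces $\textsc{P}^O$ to $\pA$.

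For the non-inclusion, suppose for contradiction that some polynomial-time oracle machine $M_i$ reduces $\textsc{P}^O$ to $\pB$ relative to $O$. On input $1^{n_i}$, $M_i$ runs in time polynomial in $n_i$, hence makes at most polylogarithmically many queries into the scale-$n_i$ portion of $O$ whose answers determine the output; fixing the other scales, this computation collapses to a decision tree of depth $\poly(\log n_i)$ on $x_{n_i}$ that reduces $R_{n_i}$ to $\pB$. But $x_{n_i}$ was chosen precisely to defeat every such reduction of this complexity, a contradiction. Hence $\textsc{P}^O \in \cA^O \setminus \cB^O$, giving the desired oracle separation.

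The main obstacle is the bookkeeping of the oracle layout and the uniform simulation argument: one must arrange the encoding so that the ``scale $n$'' portion of $O$ is queried exactly when the reduction is asked about length $n$, the sparsity of $(n_i)$ must be large enough that the running time of $M_i$ on $1^{n_i}$ does not let it probe later (harder) scales, and the polynomial overheads from simulating an oracle Turing machine by a decision tree must stay within the polylog complexity window used in the hypothesis $\textsc{R} \notin \cB^{dt}$. These are the standard technical ingredients of the generic oracle construction of \cite{Beame95}; once set up, the contradiction is syntactic.
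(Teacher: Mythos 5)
The paper does not actually give a proof of this statement: it is marked as ``Informal'' and attributed to Beame et al.~\cite{Beame95}, so there is nothing in the paper to compare your argument against line by line. That said, your sketch is the standard diagonalization that lifts a decision-tree separation to an oracle separation, which is indeed the argument from \cite{Beame95}: pick $R \in \cA^{dt}\setminus\cB^{dt}$ (WLOG $R = \pA$ itself, which slightly simplifies the membership direction you worry about), build the oracle scale-by-scale, at stage $i$ choose a hard input that defeats the reduction induced by the $i$-th polynomial-time oracle machine, and use a sparse schedule so stages do not interfere. The key observation --- that a polynomial-time oracle machine, restricted to a single scale of the oracle, induces a decision tree of depth polylogarithmic in the number of oracle bits at that scale, which is exactly the depth regime against which the hypothesis $R\notin\cB^{dt}$ gives hard inputs --- is stated and is correct.

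Two points worth tightening. First, \cite{Beame95} actually carry out a forcing argument that produces a \emph{Baire-generic} (comeager) family of separating oracles, which is slightly stronger than the single oracle your diagonalization builds; the informal statement parenthesizes ``generic'' precisely because the distinction is cosmetic for the purposes of this paper, but if you want to match the cited result you would phrase the construction as a forcing/genericity argument rather than a plain diagonalization. Second, your bookkeeping of sizes is internally inconsistent: you write that $M_i$ runs in time polynomial in $n_i$ yet makes only polylogarithmically many queries and induces a $\poly(\log n_i)$-depth tree. The correct accounting is that the white-box input is $1^m$ for a parameter $m$, the oracle block at scale $m$ has $2^{\Theta(m)}$ bits, $M_i$ runs in time $\poly(m)$ and hence induces a $\poly(m)$-depth decision tree on the $2^{\Theta(m)}$-bit block, and $\poly(m) = \poly(\log(\text{block size}))$ is exactly the complexity window in which the decision-tree lower bound gives a bad input. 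Once you fix which symbol tracks the parameter $m$ versus the oracle-block size, your argument goes through.
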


We will also deal with \emph{randomized} reductions in \TFNP. We define these formally below.

\begin{definition}[Randomized Decision Tree Reduction]\label{def:rand_reduct}
    A \emph{randomized decision tree reduction} from relation $R \subseteq \{0, 1\}^n \times O$ to $Q \subseteq \{0, 1\}^m \times O'$ is a distribution $D$ of depth-$d$ decision tree reductions from $R$ to $Q$, such that for any $x \in \{0,1\}^n$, 
    \[\Pr_{((f_i), (g_o)) \sim D } [(x, g_o(x)) \in R  \Leftarrow (f(x), o) \in Q] \geq \frac{1}{2}.\]
\end{definition}

\subsection{Basic Structural Properties of the Pecking Order}\label{sec:prelim:PO}

\Xcomment{
    \begin{lemma}\label{lem:hierarchy}
        $t$-\PPP $\subseteq$ $(t+1)$-\PPP. More generally, $a\textrm{-}\PPP \subseteq b\textrm{-}\PPP$ if $a(n) \leq b(n)$ for all $n$.
    \end{lemma}
    
    \begin{proof}
        Given a $\tightPigeon{t}$ instance $(n, h)$ ($(t-1)N+1$ pigeons and $N$ holes), we reduce it to a $\tightPigeon{(t+1)}$ instance $(n, h')$ by adding $N$ dummy pigeons. The $i$-th dummy pigeon is mapped to the $i$-th hole for any $i \in [N]$; the mapping for other pigeons is left unchanged. It is easy to verify that any $(t+1)$-collision in $h'$ is a $t$-collision of $h$. 
    
        The general case follows from the same trick of adding dummy pigeons.
    \end{proof}
}

%This proof can be generalized to show that $a$-\PWPP $\subseteq$ $b$-\PWPP when $a(n) \leq b(n)$. % Also, by the same tricks of adding dummy pigeons, we can show that $a\textrm{-}\PPP \subseteq b\textrm{-}\PPP$ if $a(n) \leq b(n)$ for all $n$.

% Now we show that a partial converse of \cref{lem:hierarchy} also holds, which states that if two functions $a(n)$, $b(n)$ are ``relatively close'', then $a$-\PPP is equivalent to $b$-\PPP. This illustrates that for non-constant values of $t$ the problems do not form a strict hierarchy. 

% \robert{Sticking this here for now.}
% Now we show that a partial converse of \cref{lem:hierarchy} also holds, which states that if two functions $a(n)$, $b(n)$ are ``relatively close'', then $a$-\PPP is equivalent to $b$-\PPP.
% To state this formally, we first introduce a definition:
%  We call this class \PAP for ``Polynomial Averaging Principle''.

By a simple padding argument, $t$-\PPP forms a hierarchy when $t$ increases.

\begin{lemma}\label{lem:hierarchy}
    $t$-\PPP $\subseteq$ $(t+1)$-\PPP. More generally, $a\textrm{-}\PPP \subseteq b\textrm{-}\PPP$ if $a(n) \leq b(n)$ for all $n$.
\end{lemma}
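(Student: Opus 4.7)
The plan is to prove both statements by a direct padding argument: given a $\tightPigeon{a}$ instance, we pad it with carefully placed ``dummy'' pigeons to produce a $\tightPigeon{b}$ instance in which any $b$-collision must contain at least $a$ non-dummy pigeons, and hence corresponds to an $a$-collision in the original instance.

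Concretely, suppose we are given a $\tightPigeon{a}$ instance $(n, h)$ with $h : [(a-1)N+1] \to [N]$. I would construct a map $h' : [(b-1)N+1] \to [N]$ as follows. The first $(a-1)N+1$ pigeons are mapped exactly as under $h$. The remaining $(b-1)N+1 - ((a-1)N+1) = (b-a)N$ pigeons are split into $b-a$ groups of $N$ ``dummy'' pigeons each; within each group, the $i$-th dummy pigeon is mapped to hole $i$. Thus every hole receives exactly $b-a$ dummy pigeons. The circuit computing $h'$ is easily built from the circuit for $h$ with a small overhead, and $h'$ is indeed a $\tightPigeon{b}$ instance.

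Given a $b$-collision $S \subseteq [(b-1)N+1]$ in $h'$, all $b$ pigeons in $S$ map to a common hole. That hole receives exactly $b-a$ dummies, so $|S|$ contains at most $b-a$ dummy pigeons; the remaining $|S \cap [(a-1)N+1]| \geq b - (b-a) = a$ pigeons are from the original instance and all map to the same hole under $h$, giving an $a$-collision in $h$. This yields a polynomial-time many-one reduction $\tightPigeon{a} \leq_m \tightPigeon{b}$, so $a\text{-}\PPP \subseteq b\text{-}\PPP$. The special case $b = a+1$ gives the first statement.

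There is no real obstacle here: the construction is transparent and the verification that a $b$-collision in $h'$ forces an $a$-collision in $h$ is immediate from pigeonhole on the dummies. The only minor care required is to check that $h'$ has the correct number of pigeons, namely $(b-1)N+1$, which it does by design. In the excerpt's general-function version, we also need to ensure that $b(n)$ can be computed efficiently and that $a(n) \leq b(n)$ holds pointwise so the reduction is uniform in $n$; both are guaranteed by the standing assumption at the start of \cref{sec:prelim} that all integer-valued functions are monotone and efficiently computable.
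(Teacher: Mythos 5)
Your proof is correct and uses the same padding-with-dummy-pigeons argument as the paper; you simply spell out the general $a \le b$ case explicitly (with $b-a$ dummy groups, one per hole per step), whereas the paper only writes out the $t \to t+1$ case and says the general case follows by the same trick.
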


\begin{proof}
    Given a $\tightPigeon{t}$ instance $(n, h)$ ($(t-1)N+1$ pigeons and $N$ holes), we reduce it to a $\tightPigeon{(t+1)}$ instance $(n, h')$ by adding $N$ dummy pigeons. The $i$-th dummy pigeon is mapped to the $i$-th hole for any $i \in [N]$; the mapping for other pigeons are left unchanged. It is easy to verify that any $(t+1)$-collision in $h'$ is a $t$-collision of $h$. 
    
    The general case follows from the same trick of adding dummy pigeons.
\end{proof}

The same proof can be generalized to show that $a$-\PWPP $\subseteq$ $b$-\PWPP when $a(n) \leq b(n)$. 

% Also, by the same tricks of adding dummy pigeons, we can show that $a\textrm{-}\PPP \subseteq b\textrm{-}\PPP$ if $a(n) \leq b(n)$ for all $n$.

% We are now ready to state our main results.

% Our main technical contribution is the structure theorem for the Pecking Order, namely \cref{cor:pih-sep} and \cref{thm:general_lb}. The results about \ramsey (\cref{cor:ramsey-ppp}) and \UPLC (\cref{cor:uplc-ppp} and \cref{cor:uplc-plc}) then follow from reductions. A high-level overview of our results is depicted in \cref{fig:tfnp}.
% We also relate interesting problems from cryptography and quantum computing to the Pecking Order, which further justify that our notion of the Pecking Order is fundamental.
%From our perspective, defining the hierarchy and finding these reductions is also one of our main contributions. 
% Finally, we answer other natural questions about this hierarchy. Namely, we show that $\PLS$ or $\PPA$ are not contained in the Pecking Order (\cref{thm:plspap}). % We also make progress on the cryptographically relevant question of ruling out randomized reductions \cref{thm:lb:compression}.

\PolynomialClose*

\PolynomialRobustness*

\begin{proof}
    Let $p(n)$  be a polynomial such that $b(n) \leq a(p(n))$ for all $n$. Let $c(n) \coloneqq a(p(n))$. \cref{lem:hierarchy} implies that $b$-\PPP $\subseteq$ $c$-\PPP. By the symmetry of the statement, it suffices to show that $c$-\PPP $\subseteq$ $a$-\PPP.

    Take a $c$-\pigeon instance $x \coloneqq (n,\pmap)$. For technical convenience, we use an alternate encoding where $M = N \cdot (c(n)-1)$, and the goal is to find either $(c(n)-1)$ pigeons mapped to $1$ or a $c(n)$-collision. It is easy to show the equivalence between this alternate version and the original version.
    
    We now construct an $\tightPigeon{a}$ instance $x' \coloneqq (h', n')$, where $n' = p(n)$.  
    Let $N' \coloneqq 2^{n'}, M' \coloneqq a(n') \cdot N'$ be the number of holes and pigeons that are supposed to be in $x'$. $x'$ is simply chosen as $N'/N$ disjoint copies of $x$ in parallel.
    It is easy to verify that the number of holes in $x'$ is $N \cdot (N'/N) = N'$, and the number of pigeons in $x'$ is \[ M \cdot (N'/N) = (c(n)-1) \cdot N' = (a(n')-1) \cdot N' = M'.\]

    Formally, we index each hole in $x'$ by a pair $(i,l) \in [N] \times [N' / N]$; each pigeon in $x'$ is also indexed by a pair $(j,l) \in [M] \times [N' / N]$.
    Now, $\pmap' : [M] \times [N' / N]  \mapsto [N] \times [N' / N]$  is defined as \[
        \pmap'(j, l) := (\pmap(j), l), \forall (j,l) \in [M] \times [N' / N].
    \]

    Note that any collision in $x'$ must come from a single copy of $x$, because different copies of $x$ are disjoint. Also given that $a(n') = c(n)$, any solution of $x'$ immediately corresponds to a solution of $x$. This concludes the correctness of our reduction.

\end{proof}

One can easily generalize the proof of \cref{thm:polynomial-robustness} to show that $a\textrm{-}\PWPP = b\textrm{-}\PWPP$ when $a(n), b(n)$ are polynomially close. Note that all polynomial functions are polynomially close to each other. Therefore, $t$-\PPP is equivalent to $\PAP$ for any choice of $t(n) = \Theta(\poly(n))$.

\section{Ramsey and the Pecking Order}
\label{sec:ramsey}

In this section, we show that problems in the Pecking Order can be reduced to $\ramsey$ and $\biramsey$ (\cref{lem:pigeon2ramsey}).
As a consequence of the generality of parameters in the structural theorem of the Pecking Order (\cref{thm:general_lb}), we immediately get that $\ramsey^{dt}$ is not in  $\PPP^{dt}$ (\cref{thm:ramseyinstantiated}). In the reverse direction, we show that $\biramsey$ with a slightly weaker parameter is included in $\PAP$ (\cref{lem:biramsey-pap}); combined with \cref{thm:ramseyinstantiated}, we get an almost tight characterization of $\biramsey$ using the Pecking Order.

We start with reducing problems in the Pecking Order to \ramsey, which directly generalize a reduction that appeared in \cite{Komargodski2017}.
For ease of discussion, we only state the following result in terms of \ramsey, yet the same argument also holds true for \biramsey.

\begin{lemma}[Generalization of \protect{\cite[Theorem~3.1]{Komargodski2017}}]
    \label{lem:pigeon2ramsey}
    Suppose there exists a graph on $N$ vertices with no $K$ clique or independent set. Then $\ourPigeon{t}{M}{N}$ can be black-box reduced to $t(K-1)$-$\ramsey$ when $2t(K-1) \le \log M$.
\end{lemma}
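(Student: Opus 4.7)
The plan is to generalize the KNY-style reduction (which handles $t=2$) to arbitrary $t$, by constructing a graph on the pigeons whose clique/independent-set structure is governed by the explicit $K$-Ramsey graph $G$ on the holes, and then extracting either a $t$-collision or an impossible $K$-clique/IS in $G$ from any large clique/IS returned by the Ramsey oracle.

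First I would define the instance of $t(K-1)$-$\ramsey$. Let $G$ be the fixed $K$-Ramsey graph on $[N]$ guaranteed by hypothesis, and let $h:[M]\to[N]$ be the input to $\ourPigeon{t}{M}{N}$. Build a graph $G'$ on the vertex set $[M]$ by setting, for $u\neq v$,
\[
(u,v)\in E(G') \iff h(u)\neq h(v) \text{ and } (h(u),h(v))\in E(G),
\]
with the convention that pairs having $h(u)=h(v)$ are declared non-edges (any fixed rule works, but the reduction only uses behaviour on pairs with distinct hash values). The circuit encoding $G'$ simply evaluates the circuit for $h$ on both inputs and then uses the hard-wired description of the explicit graph $G$; symmetry and absence of self-loops are built in by construction. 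The parameter condition $2t(K-1)\le \log M$ ensures that identifying $[M]$ with $\{0,1\}^{n'}$ for $n' = \log M$ yields a valid $t(K-1)$-$\ramsey$ instance, since the requested clique/IS size $t(K-1)$ is at most $n'/2$.

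Next I would argue the extraction step. Suppose the $\ramsey$ oracle returns a set $U=\{u_1,\ldots,u_{t(K-1)}\}$ that is a clique or an independent set of $G'$. Partition $U$ by hash value and do a pigeonhole case split on the number of distinct values appearing in $\{h(u_1),\ldots,h(u_{t(K-1)})\}$. If at most $K-1$ distinct values appear, then some hole receives at least $\lceil t(K-1)/(K-1)\rceil = t$ pigeons from $U$, which I output as a $t$-collision. Otherwise at least $K$ distinct values appear, in which case I pick one pigeon per distinct hash value to obtain $K$ pigeons of $U$ with pairwise distinct hash values; since these pigeons still form a clique or independent set of $G'$, and since on distinct-value pairs the edge relation of $G'$ agrees with that of $G$ via $h$, the corresponding $K$ hash values form a clique or independent set of size $K$ in $G$ — contradicting the $K$-Ramsey property of $G$, so this branch is vacuous.

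The two steps above constitute a deterministic, decision-tree-depth $O(t\cdot K)$ reduction (each bit of the oracle's output is recovered by two queries to $h$), so the reduction has the quality required to live in the black-box setting. The only genuine obstacle I anticipate is the bookkeeping for the edge-set when $h(u)=h(v)$: one needs to be confident that the proof does not secretly use edges among equal-hash pigeons. I sidestep this by only invoking the $G'\leftrightarrow G$ correspondence on distinct-hash representatives in the case analysis, so the convention chosen for equal-hash pairs is irrelevant. The reduction is tight in the sense that the calculation $\lceil t(K-1)/(K-1)\rceil = t$ is why the clique/IS demand is set to exactly $t(K-1)$; any smaller choice would fail the pigeonhole in Case 1.
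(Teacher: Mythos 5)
Your proof is correct and follows essentially the same approach as the paper: both construct the hash-product graph on the $M$ pigeons using the explicit $K$-Ramsey graph on the $N$ holes, then argue that the hash image of the oracle's returned clique/IS must be small in $G$, yielding a $t$-collision by averaging. The one cosmetic difference is the convention for equal-hash pairs — the paper declares them \emph{edges}, which makes the paper's clique case the live one and the IS case vacuous, while you declare them \emph{non-edges}, which flips which branch is vacuous; the paper then collapses your two-case analysis into a single unified "$h(S)$ is a clique (resp.~IS) in $G$, so $|h(S)|<K$" step, but the substance is identical.
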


\begin{proof}
    Given a graph $G_0(V_0, E_0)$ on $N$ vertices such that it does not contain a $K$ clique or independent set, we will build an instance $G$ given by $(\log M,E)$ of $t(K-1)$-$\ramsey$ from an instance $C$ of $\ourPigeon{t}{M}{N}$.

    Let $\pmap : [M] \mapsto [N]$ be an instance of $\ourPigeon{t}{M}{N}$. We will define a $t(K-1)$-$\ramsey$ instance $G$ using the \emph{graph hash product} from \cite{Komargodski2017}. Let $G = G_0 \otimes \pmap = (V, E)$ be a graph on $V = [M]$ such that \[
        (u, v) \in E \iff \pmap(u) = \pmap(v) \text{ or } (\pmap(u), \pmap(v)) \in E_0.
    \]
    
    Now we prove that the solution $S$ returned by $t(K-1)$-$\ramsey$ witnesses a $t$-collision in $\pmap$. Let $S'$ be set $\{\pmap(u) \mid u \in S\} \subseteq V(G_0)$. By the definition of $G$, $S'$ is a clique if $S$ is a clique, and $S'$ is an independent set if $S$ is an independent set. Therefore, $S'$ is either a clique or an independent set. 

    Since $G_0$ does not contain a $K$ clique or independent set, we have $|S'| < K$. Given $|S| = t(K-1)$, by an averaging argument $S$ must witness a $\frac{t(K-1)}{K-1} = t$ collision in $\pmap$.

\end{proof}

Using the probabilistic method, Erd\"{o}s \cite{Erds1947SomeRO} shows that there exists a graph on $N$ vertices with no clique or independent set of size $K = 2\log N$. This gives us the following theorem as a corollary:

\ramseyinstantiated*
We also note that by \cref{thm:explicit-ramsey}, we have that for some $c > 1$, we have an explicit efficient black-box reduction from ${t\text{-}\pigeon_N^M}$ to ${\ramsey}$ whenever $2t(\log^c N-1) \le \log M$. As a consequence of these reductions, we get two corollaries of \autoref{thm:general_lb} regarding the place of $K$-\ramsey in \TFNP. Below, we assume \cref{thm:general_lb}, which is proved in \cref{sec:lb:collisions}.

\ramseyppp*
\begin{proof}

\autoref{thm:ramseyinstantiated} shows that there exists some polynomial $p(n)$ such that $\ourPigeon{p(n)}{M}{N}$ reduces to $\ramsey_M$ for large enough $M$. However, by \autoref{thm:general_lb}, we know that there's no black-box reduction from $\ourPigeon{p(n)}{M}{N}$ to $\ourPigeon{t(n)}{M'}{N'}$ for any subpolynomial $t(n)$. This implies that \ramsey is not in $t(n)$-$\PPP^{dt}$ for any subpolynomial $t(n)$.
\end{proof}

Actually, by being a little more careful with the parameters, we can prove a stronger version of the theorem following the same argument, which shows that even $\log^c(N)\text{-}\ramsey^{dt}$ 
 is not in $\SAP^{dt}$.

        \Xcomment{
        \begin{theorem}
            For every constant $0 < c < 1$, $\log^c(N)\text{-}\ramsey^{dt} \notin \SAP^{dt}$.
        \end{theorem}
        
        \begin{proof}
        % By way of contradiction, let us say 
        \autoref{thm:ramseyinstantiated} actually shows that $\ourPigeon{n}{M}{N}$ reduces to $n(2n-1)\text{-}\ramsey$ with an instance size $\log M = n^{3/c}$. In this parameter setting, $n(2n-1)\text{-}\ramsey$ could be further reduced to $\log^c(M)\text{-}\ramsey$. \sid{I find this hard to parse.}
        
        Therefore, we can finish the proof by invoking \autoref{thm:general_lb} which shows there is no efficient black-box reduction from $\ourPigeon{n}{M}{N}$ to $\ourPigeon{t(n)}{M'}{N'}$ for any subpolynomial $t(n)$.
        \end{proof}
        
        We remark again that the theorems presented above also work for \biramsey.}

Finally, we show that with a slight loss in the parameter, \biramsey fits into the Pecking Order. 

\BiRamseyinPAP*
\begin{proof}
    Given a bipartite graph $G = ([N], [N], E)$, we say $E(x, y) = 1$ if $(x, y) \in E$ and $E(x, y) = 0$ otherwise. We construct an $n\text{-}\pigeon^{N}_{N/n}$ instance using function $h : [N] \mapsto \B^{n - \log n}$ defined by \[
        h(y) := (E(x, y))_{x \in [n - \log n]}.
    \]
    Our goal is to prove that we can efficiently find a clique or independent set of size $(n - \log n)/{2}$ from an $n$-collision in $h$. Let $y_1, \dots, y_n \in [N]$ be an $n$-collision in $h$. Then by the definition of $h$, we have for each $x \in [n - \log n]$, $E(x, y_1) = E(x, y_2) = \dots = E(x, y_n)$. 
    
    At least half of the $x$'s in $[n - \log n]$ will have the same value for $E(x, y_1)$, and we let these indices be $x_1 < \ldots < x_{(n - \log n)/{2}}$. This gives us that for all $i, j \in [(n - \log n) / 2]$, $E(x_i, y_j) = E(x_1, y_j) = E(x_1, y_1)$. Therefore, $(\set{x_1, \dots, x_{(n - \log n)/ 2}}, \{y_1, \dots, y_{(n - \log n)/ 2}\})$ is either a biclique or an independent set.

\end{proof}

\paragraph{Query complexity of \ramsey.} Besides the relative complexity of \ramsey, one might also ask questions about its query complexity in various models (deterministic, randomized, quantum). To find a clique or independent set of size $k$, it certainly suffices to query all the vertices of an arbitrary subgraph of size $R(k,k)$. This gives us an upper bound on the deterministic query complexity of $\ramsey$ 
% \david{We should not use a subscript $n$ here? We didn't define this notation in prelim I think?}
of $\binom{R(n/2,n/2)}{2}$. Plugging in the best known upper bound on the diagonal Ramsey number \cite{campos2023exponential}, we get an upper bound on $N^{2-\epsilon}$ for a small constant $\epsilon$. On the lower bound front, we can infer that the quantum query complexity of $\ramsey$ is at least $N^{1 - o(1)}$ due to the reduction in \cref{lem:pigeon2ramsey} combined with the quantum query lower bound by Liu and Zhandry \cite{Liu19}. Since the best lower bound we have on $R(t,t)$ is $2^{\widetilde{O}(t/2)}$ \cite{Erds1947SomeRO}, improving even the deterministic lower bound significantly beyond $\Omega(N)$ would give a \emph{new combinatorics result}! The best deterministic lower bound known is due to Conlon et al.~\cite{Conlon19}. We note this as an exciting approach to getting better lower bounds for the Ramsey numbers. % \sid{Move to Open Problems?}

\section{Structure of the Pecking Order}\label{sec:lb:collisions}

% In this section, we study the structure of the Pecking Order with respect to two major factors --- the number of collisions and the compression rate.

% \subsection{Separations via the Number of Collisions}

In this section we prove \cref{cor:pih-sep}, our separation of the Pigeon Hierarchy, restated here for convenience.

\pihsep*

As we discussed in the introduction, our lower bounds are proved using (generalizations of) tools from propositional proof complexity, and in particular the theory of \emph{pseudoexpectation operators}.
Our main theorem is proved by generalizing the notion of \emph{collision-free pseudoexpectation operators}, designed by Fleming, Grosser, Pitassi, and Robere to give a black-box separation between $\PPP^{dt}$ from its Turing-closure, to the entire Pecking Order \cite{FlemingGPR23}.

\paragraph{Pseudoexpectation Operators.}

First we introduce the notion of a \emph{pseudoexpectation operator}, and for this we need to recall some basic results about multilinear polynomials.
Let $x_1, x_2, \ldots, x_n$ be a family of $\set{0,1}$-valued variables. 
We consider real-coefficient polynomials $p \in \RR[x_1, \dots, x_n]$ over these variables.
All polynomials that we consider are \emph{multilinear}, meaning the individual degree of any variable is at most $1$.
The algebra of multilinear polynomials is described by the quotient ring $\RR[x_1, \dots, x_n]/\langle x_i^2 - x_i\rangle_{i=1}^n$.
Formally, the addition of two multilinear polynomials is still a multilinear polynomial, but, if we multiply two multilinear polynomials then, after multiplication, we drop the exponents of all variables to $1$.
For example, $(x+y)(x+y) = x + 2xy + y$, as multilinear polynomials.

For any $S, T \subseteq [n]$ with $S \cap T = \emptyset$, define the polynomial \[ C_{S, T} := \prod_{i \in S} x_i \prod_{j \in T} (1-x_j). \] 
Note that for $\set{0,1}$-assignments the polynomial $C_{S, T}$ encodes the truth value of the conjunction $\bigwedge_{i \in S} x_i \land \bigwedge_{j \in T} \overline x_j$, and thus we will refer to $C_{S,T}$ as a ``conjunction'' in an abuse of notation.
If $R_n \subseteq \B^n \times O$ is a query total search problem, and $o \in O$, then a conjunction $C$ \emph{witnesses} the solution $o$ if $C(x) = 1 \Rightarrow (x,o) \in R_n$ for every $x \in \B^n$.
Similarly, we say a conjunction $C$ \emph{witnesses} $R_n$ if it witnesses some solution to $R_n$.
Each conjunction $C = C_{S,T}$ is naturally associated with a partial restriction $\rho(C) \in \set{0,1,*}$ defined by setting 
\[ \rho(C)_i = \begin{cases}
    1 & i \in S, \\
    0 & i \in T, \\
    * & \text{otherwise}.
\end{cases}\]

\begin{definition}
    \label{def:pseudoexpectation}
    Let $n \geq d$ be positive integers.
    Let $\mathcal{P}_{n,d}$ be the collection of all degree $\leq d$ multilinear polynomials over variables $x_1, \dots, x_n$.
    An operator $\psE: \mathcal{P}_{n,d} \rightarrow \RR$ is a \emph{degree-$d$ pseudoexpectation operator} if it satisfies the following three properties:
    \begin{itemize}
        \item {\bf Linearity.} $\psE$ is linear.
        \item {\bf Normalized.} $\psE[1] = 1$.
        \item {\bf Nonnegativity.} $\psE[C] \geq 0$ for all degree $\leq d$ conjunctions $C$.
    \end{itemize}
    Furthermore, if $R \subseteq \B^n \times O$ is a query total search problem, then $\psE$ is \emph{$R$-Nonwitnessing} if it additionally satisfies the following property:
    \begin{itemize}
        \item {\bf $R$-Nonwitnessing.} $\psE[C] = 0$ for any conjunction $C$ witnessing $R$.
    \end{itemize}
    If $\mathcal{F}$ is any sequence of degree $\leq d$ multilinear polynomials, then we write $\psE[\mathcal{F}] := \sum_{p \in \mathcal{F}} \psE[p]$.
\end{definition}

Pseudoexpectation operators were originally introduced to prove lower bounds on the degree of Sherali-Adams refutations for unsatisfiable CNF formulas \cite{Fleming19}.
Often the easiest way to construct a pseudoexpectation is to construct an object called a \emph{pseudodistribution} instead.
We introduce pseudodistributions next:

\begin{definition}
    Let $x_1, \ldots, x_n$ be a set of $\B$-valued variables, and let $d \leq n$.
    A \emph{degree-$d$ pseudodistribution} over these variables is a family of probability distributions \[ \mathcal{D} = \set{\mathcal{D}_S : S \subseteq [n], |S| \leq d}, \] such that the following properties hold:
    \begin{itemize}
        \item For each set $S \subseteq [n]$, $|S| \leq d$, $\mathcal{D}_S$ is supported on $\set{0,1}^S$, interpreted as boolean assignments to variables in $\set{x_i : i \in S}$.
        \item For each $S, T \subseteq [n]$, $|S|, |T| \leq d$, we have $\mathcal{D}_S^{S \cap T} = \mathcal{D}_T^{S \cap T} = \mathcal{D}_{S \cap T}$, where $\mathcal{D}_A^{B}$ for $B \subseteq A$ is the marginal distribution of variables in $B$ with respect to $\mathcal{D}_A$. 
    \end{itemize}
    %Furthermore, if $R \subseteq \B^n \times O$ is a query total search problem then $\mathcal{D}$ is a \emph{pseudodistribution for $R$} if no distribution $\mathcal{D}_S$ in the pseudodistribution is supported on a witnessing partial assignment for $R$.
\end{definition}

The following standard lemma allows us to construct pseudoexpectations from pseudodistributions. In fact, the two objects are equivalent, but we will not need the converse construction in this paper.

\begin{lemma}[\cite{Fleming19}]
    Let $\mathcal{D}$ be a degree-$d$ pseudodistribution over variables $x_1, \dots, x_n$. The operator $\psE$ defined by \[ \psE\left[\prod_{i \in S} x_i\right] = \Pr_{y \sim \mathcal{D}_S}[\forall i \in S: y_i = 1]\] and extended to all multilinear polynomials by linearity, is a degree-$d$ pseudoexpectation.
\end{lemma}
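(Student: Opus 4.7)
The plan is to verify the three defining properties of a degree-$d$ pseudoexpectation (linearity, normalization, nonnegativity on conjunctions) for the operator $\psE$. Linearity is built into the definition. Normalization follows from the defining formula at $S = \emptyset$: the empty AND is vacuously satisfied with probability $1$ under $\mathcal{D}_\emptyset$, so $\psE[1] = 1$. Before proceeding I would also note that the operator is well-defined, since every multilinear polynomial in $\mathcal{P}_{n,d}$ has a unique representation as a linear combination of squarefree monomials of degree at most $d$, and extending by linearity is therefore unambiguous.

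The main content is to verify nonnegativity on an arbitrary conjunction $C = C_{S,T}$ with $|S|+|T| \leq d$. First, expand the factors $(1-x_j)$ for $j \in T$ using distributivity and the fact that $S \cap T = \emptyset$, obtaining the identity
\[
C_{S,T} \;=\; \sum_{T' \subseteq T} (-1)^{|T'|} \prod_{i \in S \cup T'} x_i.
\]
Applying $\psE$ term by term yields
\[
\psE[C_{S,T}] \;=\; \sum_{T' \subseteq T} (-1)^{|T'|} \Pr_{y \sim \mathcal{D}_{S \cup T'}}\!\bigl[\,\forall i \in S \cup T': y_i = 1\bigr].
\]
The key step is to invoke the marginal consistency condition. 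Since $|S \cup T| \leq d$, the distribution $\mathcal{D}_{S \cup T}$ is part of the pseudodistribution, and for every $T' \subseteq T$ its marginal onto $S \cup T'$ coincides with $\mathcal{D}_{S \cup T'}$. Hence every probability in the sum above can be rewritten as an event under the single distribution $\mathcal{D}_{S \cup T}$.

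Once this substitution is made, inclusion-exclusion runs in reverse: the alternating sum $\sum_{T' \subseteq T} (-1)^{|T'|}\Pr_{y \sim \mathcal{D}_{S \cup T}}[\forall i \in S \cup T':\, y_i = 1]$ equals exactly $\Pr_{y \sim \mathcal{D}_{S \cup T}}[\forall i \in S:\, y_i = 1 \text{ and } \forall j \in T:\, y_j = 0]$, which is precisely the probability that $C_{S,T}$ evaluates to $1$ under $\mathcal{D}_{S \cup T}$, and this is nonnegative. The only real obstacle is bookkeeping: one must check that the marginal compatibility condition is exactly what is needed to fuse the various distributions $\mathcal{D}_{S \cup T'}$ into the single distribution $\mathcal{D}_{S \cup T}$, and that the degree bound $|S \cup T| \leq d$ makes this distribution available. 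There is no substantive difficulty beyond that.
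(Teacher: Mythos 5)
Your proof is correct, and it is the standard argument for this folklore fact. The paper itself does not prove this lemma---it simply cites it to \cite{Fleming19}---so there is nothing to compare against directly, but your verification of the three axioms is exactly what a complete proof requires. The one step worth flagging as the genuine content is the one you correctly identified: the marginal-consistency condition, applied with $A = S \cup T$ and $B = S \cup T'$ (so that $A \cap B = S \cup T'$), is precisely what lets you replace each $\Pr_{y \sim \mathcal{D}_{S \cup T'}}[\cdot]$ by $\Pr_{y \sim \mathcal{D}_{S \cup T}}[\cdot]$ and thereby collapse the alternating sum to a single honest probability via inclusion--exclusion; the degree bound $|S \cup T| \leq d$ is needed so that $\mathcal{D}_{S \cup T}$ actually exists in the family. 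Your remark on well-definedness (uniqueness of the squarefree-monomial representation) is a minor point most sources omit, but it is correct and does no harm.
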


The following pseudodistribution, and its accompanying pseudoexpectation, is the central pillar of all lower bounds in this paper.
This example is, in fact, one of the classical examples of a pseudodistribution \cite[Lemma 2]{Magen08}.

\begin{definition}[Matching Pseudodistribution]
    \label{def:matching-pseudodistribution}
    Consider $\ourPigeon{t}{M}{N}$ with $M \geq (t-1)N + 1$ pigeons and $N$ holes, and let $d \leq N/2$.
    The \emph{degree-$d$ matching pseudodistribution} for this instance is the following pseudodistribution.
    For any set of input variables $S$ in $\ourPigeon{t}{M}{N}$, let $p(S)$ denote the set of pigeons mentioned among variables in $S$.
    For each subset $S$ of variables with $|S| \leq d$, define the distribution $\mathcal{D}_S$ by sampling a uniformly random matching from the pigeons in $p(S)$ to $|p(S)|$ holes, and assigning the variables in $S$ according to this matching. Further define $\mathcal{D} = \set{\mathcal{D}_S : |S| \leq d}$ to be the collection of all such distributions.
\end{definition}

The above construction indeed defines a pseudodistribution, as shown in \cite{Magen08}.
To put it simply, if we sample a uniformly random matching from $t$ pigeons to $t$ holes and then marginalize one pigeon out, then the result is a uniformly random matching from $t-1$ pigeons to $t-1$ holes.

\paragraph{Lower Bounds for the Pecking Order.}
We are now ready to prove the main lower bound result of this paper. 
As we have mentioned above, degree-$d$ pseudoexpectation operators were originally introduced to prove lower bounds for Sherali-Adams refutations. 
In order to do this, we must construct high-degree pseudoexpectation operators which are additionally \emph{nonwitnessing}, meaning that $\psE[C] = 0$ whenever $C$ is a conjunction witnessing a solution to our search problem $R$.
A recent work \cite{Goeoes2022} has shown that a query total search problem $R$ is in $\PPADS^{dt}$ if and only if an unsatisfiable CNF formula $\cnf(R)$ expressing the totality of $R$ has low-degree unary Sherali-Adams refutations. 
This means that constructing a nonwitnessing pseudoexpectation for $R$ automatically implies that $R \not \in \PPADS^{dt}$.
However, $R$ can admit a nonwitnessing pseudoexpectation but still lie in $\PPP^{dt}$ or higher up in the Pecking Order --- for example, the matching pseudoexpectation is a nonwitnessing pseudoexpectation for $\tightPigeon{2}$, which is the complete problem for $\PPP$.

Therefore, to prove lower bounds for higher levels of the Pecking Order, we must strengthen the definition of a pseudoexpectation.
To introduce this strengthening, we need the following auxiliary definition.

\begin{definition}
    \label{def:witnessing-family}
    Let $R \subseteq \B^n \times O$ be a query total search problem, and let $\mathcal{F}$ be a family of degree-$d$ conjunctions over input variables of $R$.
    For $t \geq 2$, the family $\mathcal{F}$ is said to be \emph{$t$-witnessing for $R$} if for any subset $\mathcal{S} \subseteq \mathcal{F}$ with $|\mathcal{S}| = t$, either $\prod_{C \in \mathcal{S}} C \equiv 0$, or $DT(R \restriction \rho) \leq d$, where $\rho$ is the concatenation of $\rho(C)$ for all $C \in \mathcal{S}$, and $D$ is some universal constant.
    In other words, either every subset of $t$ conjunctions is inconsistent, or, there is a shallow decision tree solving the restricted problem $DT(R \restriction \rho)$, where $\rho$ is the union of partial assignments corresponding to the $t$ conjunctions. We say that $\mathcal{F}$ is \emph{$t$-witnessing} if the problem $R$ is clear from context.
\end{definition}

\begin{definition}
    \label{def:collision-free}
    Let $R \subseteq \B^n \times O$ be a total query search problem, let $d, t$ be positive integers, and let $\varepsilon > 0$ be a real parameter.
    Let $\psE$ be a degree $D \geq d$ pseudoexpectation operator.
    Then $\psE$ is \emph{$(d,t,\varepsilon)$-collision-free for $R$} if it satisfies the following property:
    \begin{itemize}
        \item {\bf $t$-Collision-Freedom.} $\displaystyle \psE[\mathcal{F}] \leq \varepsilon$, for every $t$-witnessing family $\mathcal{F}$ of degree $\leq d$ conjunctions.
    \end{itemize}
\end{definition}

The notion of a collision-free pseudoexpectation was introduced by Fleming, Grosser, Pitassi, and Robere \cite{FlemingGPR23} in the special case where $t = 2, \varepsilon = 1$, in order to separate $\PPP$ from its Turing closure in the black-box setting.
The above definition generalizes this notion to arbitrary size-$t$ collisions. 
As we will see, $t$-Collision Freedom is the additional property that is required of a pseudoexpectation in order to rule out membership of problems in $t$-$\PPP$.
The following theorem generalizes the same theorem for $M = N+1, t = 2, \varepsilon = 1$, proved by \cite{FlemingGPR23}.

\begin{theorem}
\label{thm:psExp}
    Let $R \subseteq \B^n \times O$ be a query total search problem.
    Let $M, N, t$ be positive integers with $M \geq (t-1)N + 1$, and let $0 \leq \varepsilon < M/N$ be any real parameter.
    If there is a $(d,t, \varepsilon)$-collision-free pseudoexpectation operator for $R$ then there is no depth-$d$ decision-tree reduction from $R$ to $\ourPigeon{t}{M}{N}$.
\end{theorem}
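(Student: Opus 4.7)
The plan is to argue by contradiction. Suppose a depth-$d$ decision-tree reduction from $R$ to $\ourPigeon{t}{M}{N}$ exists, consisting of decision trees $\{f_i\}_{i \in [M]}$ that compute the pigeon-to-hole assignment $h(i)$ together with solution-recovery decision trees $\{g_{S, j}\}$ indexed by a $t$-subset $S$ of pigeons and a hole $j$. For every pigeon $i \in [M]$ and hole $j \in [N]$, expand the indicator $\mathbb{1}[h(i) = j]$ as a sum of pairwise-inconsistent conjunctions of degree at most $d$, one for each accepting leaf of the relevant tree; call this polynomial $P_{i,j}$. Then group the resulting conjunctions by target hole: let $\mathcal{F}_j := \bigcup_{i \in [M]} \{C : C \text{ appears in the expansion of } P_{i,j}\}$, viewed as a multiset.

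The whole proof rests on two observations about the $\mathcal{F}_j$'s. First, since any assignment sends pigeon $i$ to exactly one hole, the identity $\sum_{j \in [N]} P_{i,j} \equiv 1$ holds, so $\sum_j \psE[P_{i,j}] = \psE[1] = 1$ for every $i$, and consequently $\sum_{i,j} \psE[P_{i,j}] = M$. Second, I claim each $\mathcal{F}_j$ is $t$-witnessing for $R$. Consider any $t$ conjunctions $C_1, \dots, C_t \in \mathcal{F}_j$: if two of them come from the same pigeon, they correspond to distinct paths in the same decision tree and are therefore inconsistent, so $\prod_k C_k \equiv 0$; otherwise they come from $t$ distinct pigeons $i_1, \dots, i_t$ and their combined partial assignment $\rho$ forces $h(i_1) = \dots = h(i_t) = j$. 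Under $\rho$, the depth-$d$ recovery tree $g_{\{i_1, \dots, i_t\}, j}$ evaluated on any completion $x \supseteq \rho$ outputs a valid solution to $R$, so $DT(R \restriction \rho) \leq d$, verifying the $t$-witnessing property.

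Combining these, applying $t$-collision-freedom to each $\mathcal{F}_j$ gives $\sum_i \psE[P_{i,j}] = \psE[\mathcal{F}_j] \leq \varepsilon$, and summing over $j \in [N]$ yields $M \leq N\varepsilon$, contradicting $\varepsilon < M/N$. The main bookkeeping subtlety will be setting up the pairwise-disjoint DNF expansion of $P_{i,j}$ at the right degree so that distinct conjunctions for the same pigeon are genuinely mutually exclusive (this is what makes the first case of the $t$-witnessing check go through) and so that the recovery tree's depth bound transfers cleanly to a bound on $DT(R \restriction \rho)$. Modulo this technicality, the argument is a direct generalization of the $(t,M) = (2, N+1)$ case proved in \cite{FlemingGPR23}.
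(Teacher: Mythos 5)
Your proposal matches the paper's proof essentially step for step: expand each pigeon's tree into its accepting conjunctions, observe the total pseudoexpectation mass is $M$, group conjunctions by target hole to form families that are $t$-witnessing (inconsistent if two come from the same pigeon, otherwise forcing a $t$-collision and hence small restricted decision-tree depth), apply collision-freedom to get $M \leq \varepsilon N$, and contradict $\varepsilon < M/N$. The only cosmetic difference is that you phrase the expansion via the indicator polynomials $P_{i,j}$ while the paper sums directly over leaves of $T_i$; these are the same decomposition.
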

\begin{proof}
    For the sake of intuition, we first prove this for the case of $(d, 2, \varepsilon)$-pseudoexpectation operators, corresponding to the classical $\TFNP$ class $\PPP$, but the argument easily generalizes to arbitrary $t, N,$ and $M \geq (t-1)N + 1$.
    Let us assume by way of contradiction that there is a degree-$d$ decision-tree reduction from $R$ to $\ourPigeon{2}{N+1}{N}$ for some $N$, and let $\psE$ denote the $(d,2,\varepsilon)$-collision-free pseudoexpectation for $R$.
    Let $T_1, T_2, \ldots, T_{N+1}$ denote the depth-$d$ decision trees in the $\ourPigeon{2}{N+1}{N}$ instance produced by the reduction mapping the pigeons to holes.
    First, for any decision tree $T_i$ let $L(T_i)$ denote the leaves of $T_i$, and, for any leaf $\ell$ of $T_i$, let $C_\ell$ denote the conjunction obtained by multiplying the literals along the path to $\ell$.
    An easy induction on the depth of the tree shows that for every tree $T_i$, \[\sum_{\ell \in L(T_i)} \psE[C_\ell] = 1.\]

    Now for any hole $h \in [N]$, let $\mathcal{F}_h$ denote the set of all conjunctions $C_\ell$ that correspond to paths of any decision tree among $T_1, \dots, T_{N+1}$ such that the leaf $\ell$ of that path is labelled with $h$.
    Since this instance of $\ourPigeon{2}{N+1}{N}$ is obtained via a depth-$d$ reduction from $R$, for any pair of conjunctions $C, D \in \mathcal{F}_h$, either $C$ and $D$ are inconsistent, or, $DT(R \restriction \rho(CD)) \leq d$ since a collision of two pigeons implies that we can recover a solution of $R$ after at most $d$ more queries.
    It follows that the family $\mathcal{F}_h$ is $2$-witnessing, in the language of \cref{def:witnessing-family}.

    Since $\mathcal{F}_h$ is $2$-witnessing for each hole, it follows that $\psE[\mathcal{F}_h] \leq \varepsilon$ for every $h \in [N]$ since $\psE$ is $(d,2,\varepsilon)$-collision-free.
    Now, observe that \[\sum_{i=1}^{N+1} \sum_{\ell \in L(T_i)} \psE[C_\ell] = \sum_{h=1}^N \psE[\mathcal{F}_h],\] since each of the $N+1$ pigeons are mapped to exactly one hole under a total assignment to the variables.
    This means that \[ N+1 = \sum_{i=1}^{N+1} \sum_{\ell \in L(T_i)} \psE[C_\ell] = \sum_{h =1}^N \psE[\mathcal{F}_h] \leq \varepsilon N < N+1,\]
    which is a contradiction.

    To generalize this for arbitrary $t$, we instead consider reductions to $\ourPigeon{t}{M}{N}$ and substitute $(d,2,\varepsilon)$-witnessing with $(d,t,\varepsilon)$-witnessing in the above proof. Since the instance of $\ourPigeon{t}{M}{N}$ is obtained by depth-$d$ reduction from $R$, it follows now that for every set of $t$ distinct conjunctions $C_1, C_2, \ldots, C_t \in \mathcal{F}_h$, either $\prod_{i=1}^t C_i$ is inconsistent, or, $DT(R \restriction \rho(C_1C_2 \cdots C_t)) \leq d$, since a collision of $t$ pigeons implies that we can recover a solution of $R$ after at most $d$ more queries. Therefore, $\mathcal{F}_h$ is now $t$-witnessing, and so $\psE[\mathcal{F}_h] \leq \varepsilon$.
    We now have \[M = \sum_{i=1}^M \sum_{\ell \in L(T_i)} \psE[C_\ell] = \sum_{i=1}^N \psE[\mathcal{F}_h] \leq \varepsilon N < M,\] a contradiction.
\end{proof}

The previous theorem gives us a powerful method to prove lower bounds against levels of the Pecking Order.
In particular, we will be able to show that the \emph{matching pseudoexpectation} is $t$-collision-free against $\ourPigeon{(t+1)}{M}{N}$ for \emph{all} $t$.
The main observation here is that the union of $t$ matchings from $M$ to $N$ has a collision of size at most $t$, and therefore cannot witness a $(t+1)$-collision of pigeons.
This means that any $t$-subset of conjunctions from a $t$-witnessing family of matchings is inconsistent.
The next technical lemma shows how to upper-bound the weight of such inconsistent families.

\begin{lemma}
    \label{lem:decision-tree-bound}
   Let $x_1, \dots, x_n$ be a set of variables, and let $t, d$ be chosen so that $(t-1)d^2 \leq n$.
   Let $\mathcal{F}$ be any family of degree $\leq d$ conjunctions over these variables, such that for every subset $\mathcal{S} \subseteq \mathcal{F}$ with $|\mathcal{S}| = t$, $\prod_{C \in \mathcal{S}} C \equiv 0$.
   If $\psE$ is a degree $D \geq (t-1)d^2$ pseudoexpectation operator, then $\psE[\mathcal{F}] \leq t-1$.
\end{lemma}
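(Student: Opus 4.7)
The plan is to exhibit a nonnegative decomposition
\[
(t-1) - \sum_{C \in \mathcal{F}} C \;=\; \sum_{\ell} \alpha_\ell\, C_\ell
\]
with $\alpha_\ell \ge 0$ and each $C_\ell$ a conjunction of degree at most $(t-1)d^2$. Because $\psE$ is linear, normalized, and nonnegative on conjunctions of degree $\le D \ge (t-1)d^2$, such a decomposition immediately gives $\psE\bigl[(t-1) - \sum_{C \in \mathcal{F}} C\bigr] \ge 0$, i.e., the desired bound $\psE[\sum_{C \in \mathcal{F}} C] \le t-1$. The decomposition is possible because on $\{0,1\}^n$ the sum $\sum_{C} C$ takes values only in $\{0, 1, \ldots, t-1\}$: any $t$ simultaneously satisfied conjunctions would yield a nonzero $\prod_{C \in \mathcal{S}} C$ on that assignment, contradicting the hypothesis.

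The decomposition is produced by building a decision tree $T$ of depth at most $(t-1)d^2$ that computes $\sum_{C} C(x)$ on every input $x$. Each leaf $\ell$ then corresponds to a conjunction $C_\ell$ (the literals on the root-to-leaf path, of degree at most $(t-1)d^2$) and a constant output $f_\ell \in \{0, \ldots, t-1\}$, from which $(t-1) - \sum C = \sum_\ell (t-1-f_\ell) C_\ell$ reads off as a nonnegative combination. The tree is built greedily: at each node, maintain the current partial assignment $\rho$ and the set $\mathrm{aff}(\rho) \subseteq \mathcal{F}$ of conjunctions already forced to $1$ by $\rho$. If $|\mathrm{aff}(\rho)| = t-1$, the hypothesis forces every remaining conjunction to $0$ on every completion of $\rho$ (since $C \cdot \prod_{C' \in \mathrm{aff}(\rho)} C' \equiv 0$), so the node becomes a leaf with $f_\ell = t-1$. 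Otherwise we pick some alive undetermined conjunction $C$, query its at most $d$ unresolved variables in sequence, and branch into an \emph{affirm} case (all queries consistent, $C = 1$) or a \emph{refute} case (first inconsistent query forces $C = 0$).

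The main obstacle is proving the depth bound of $(t-1)d^2$. Along every root-to-leaf path, at most $t-1$ affirmations can occur (after which the tree terminates), contributing at most $(t-1)d$ queries in total. The delicate task is bounding the \emph{refutation} queries, and I expect this to follow from an Erd\H{o}s--Rado sunflower argument on the alive family between consecutive affirmations: if too many conjunctions were refuted in such a block, the alive set just before the next affirmation would contain a sunflower of at least $t$ conjunctions whose core is consistent with $\rho$, yielding $t$ mutually consistent extensions of $\rho$ each satisfying a conjunction in $\mathcal{F}$ — a direct contradiction to the $t$-collision-free hypothesis. Carefully amortizing refutation queries against the following affirmation shows that at most $d-1$ distinct refutations can occur per block, bounding the total number of refutations on any path by $(t-1)(d-1)$; since each refutation costs at most $d$ queries, the total depth is at most $(t-1)d + (t-1)(d-1)\cdot d = (t-1)d^2$, as required.
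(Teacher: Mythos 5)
Your overall plan is sound, and the last step (decision tree computing $\sum_{C\in\mathcal{F}} C$, writing $(t-1)-\sum_C C=\sum_\ell (t-1-f_\ell)C_\ell$ with $f_\ell\in\{0,\dots,t-1\}$, then applying nonnegativity of $\psE$) is correct and is essentially a transparent repackaging of the paper's inductive argument, which also runs through a decision tree whose leaf-conjunctions determine all of $\mathcal{F}$. The gap is entirely in the depth bound, and it is a real gap, not a missing routine detail.

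The tree you propose picks a single alive undetermined conjunction per step, queries it until the first conflict, and branches. The claimed bound of ``at most $d-1$ distinct refutations per block'' via a sunflower argument does not hold up. First, the Erd\H{o}s--Rado sunflower lemma would yield a bound on the size of the alive family on the order of $d!\cdot t^{O(d)}$, not $d-1$; second, and more fundamentally, refuted conjunctions are removed from the alive set, so ``too many refutations'' shrinks rather than enlarges the alive set, and no sunflower contradiction arises. Concretely, with $t=2$, $d=3$ and $\mathcal{F}=\{x_1x_4x_5,\,x_2x_4\bar x_5,\,x_3\bar x_4x_6,\,x_7\bar x_4\bar x_6\}$ (pairwise inconsistent), your greedy can query $x_1,x_2,x_3,x_7$ one at a time, each answered $0$, giving four refutation steps in the single block before any affirmation, already exceeding $d-1=2$. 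The total depth happens to stay within $d^2$ in that toy case, but your per-block accounting, which is the load-bearing step, is simply wrong, and for $t\ge 3$ the early blocks (fewer than $t-1$ affirmations) can contain many pairwise-consistent conjunctions on disjoint variables, so querying one conjunction at a time gives no mechanism for reducing the widths of the others.

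The missing idea, which is what the paper does, is to batch the queries: in each round, choose up to $t-1$ pairwise-consistent conjunctions $C_1,\dots,C_{t-1}$ from $\mathcal{F}\restriction\rho$ (a maximal pairwise-consistent subset capped at $t-1$) and query all of their unqueried variables at once, regardless of whether they come out affirmed or refuted. Because any $t$ original conjunctions have product $0$ and $\prod_i C_i\not\equiv 0$ under the restriction, every other $C\in\mathcal{F}\restriction\rho$ must conflict with one of the $C_i$ on some still-free variable, which is now queried; hence every surviving conjunction loses at least one literal per round. After at most $d$ rounds no conjunction of positive width survives, and each round costs at most $(t-1)d$ queries, giving depth $(t-1)d^2$. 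The ``$t-1$ at a time'' batching is exactly what lets the $t$-wise inconsistency hypothesis bite; querying one conjunction at a time cannot trigger it.
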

\begin{proof}
    Let us first observe that the statement of the lemma would obviously be true if $\psE$ were the expectation over a \emph{true} probability distribution.
    This is because at most $t-1$ distinct conjunctions in $\mathcal{F}$ are consistent with any total assignment, and thus no set of $t$ conjunctions can be simultaneously activated under a sample from the true probability distribution.

    To prove this claim for $\psE$ we first need to introduce some notation.
    Let $T$ be any decision tree querying the variables $x_1, \ldots, x_n$ and outputting $0$ or $1$. 
    Let $L_b(T)$ denote the leaves of $T$ labelled with $b \in \B$, and let $L(T) = L_0(T) \cup L_1(T)$.
    For any leaf $\ell \in L(T)$ let $C_\ell$ denote the conjunction of literals on the path from the root to $\ell$, and let $\rho_\ell = \rho(C_\ell)$ denote the partial assignment corresponding to this path.
    If the depth of $T$ is at most $d$, define \[ \psE[T] := \sum_{\ell \in L_1(T)} \psE[C_\ell].\]
    An easy induction on the depth of $T$ shows that $\psE[T] \leq 1$.

    Starting with the family $\mathcal{F}$, we create a depth $\leq (t-1)d^2$ decision tree $T$ such that \[\psE[\mathcal{F}] \leq (t-1)\psE[T] \leq t-1.\]
    If $\rho$ is a partial assignment, then let $\mathcal{F} \restriction \rho = \set{C \restriction \rho : C \in \mathcal{F}}$, where it is understood that we remove any conjunctions that are set to $0$ or $1$ under the restriction $\rho$.

    We construct the decision tree $T$ by the following recursive algorithm.
    The decision tree maintains a partial assignment to the above variables. 
    Initially, $\rho = \emptyset$.
    The algorithm proceeds in rounds.
    If $\mathcal{F} \restriction \rho = \emptyset$, we halt and output $1$, if there are any conjunctions in $\mathcal{F}$ consistent with $\rho$, or $0$ otherwise.
    We proceed assuming $\mathcal{F} \restriction \rho \neq \emptyset$.
    In this case, we choose $t-1$ conjunctions $C_1, C_2, \ldots, C_{t-1}$ in $\mathcal{F} \restriction \rho$ and query all unqueried variables among these conjunctions --- if there are less than $t-1$, then we simply query all the variables among all remaining conjunctions.
    After this querying stage, we have learned a partial assignment $\sigma$ to the newly queried variables, and we then recurse on the family $\mathcal{F} \restriction \rho\sigma$.

    The construction above plainly terminates on every branch, since we are reducing the length of each conjunction after every round of querying.
    We argue that the depth of the tree is at most $(t-1)d^2$ and that $\psE[\mathcal{F}] \leq (t-1)\psE[T]$, which completes the proof of the theorem.

    First, we argue that the depth of the tree is at most $D$.
    To see this, we observe that in each round we query at most $(t-1)d$ variables, and we argue that on every branch the algorithm terminates after at most $d$ rounds.
    To see this, consider the $i$-th round, where we query conjunctions $C_1, \dots, C_{t-1}$. 
    Since $\mathcal{F}$ is $t$-witnessing, it follows that every conjunction $C$ remaining in $\mathcal{F}$ must conflict with at least one literal contained in $C_1, \dots, C_{t-1}$.
    Therefore, after querying all unqueried variables in $C_1, \dots, C_{t-1}$, we must query at least one variable from every remaining conjunction in $\mathcal{F}$.
    This means that at the end of the $i$th round, we must reduce the length of every remaining conjunction by at least one. 
    Since each conjunction has at most $d$ variables to begin with, it follows that the entire process can proceed for at most $d$ rounds.
    Thus, the depth of the tree is at most $D$.

    Let us now see that $\psE[\mathcal{F}] \leq (t-1)\psE[T]$.
    First, we observe that since the depth of $T$ is at most $(t-1)d^2$, it follows that every for every leaf $\ell$ of $T$ the conjunction $C_\ell$ has degree at most $D$.
    This means that $\psE[T]$ is well-defined since $\psE$ is a degree-$D$ pseudoexpectation.
    
    So, it remains to show that $\psE[\mathcal{F}] \leq (t-1)\psE[T],$ noting that $\psE[T] \leq 1$ for any decision tree $T$.
    For any node $u$ in $T$ let $T_u$ denote the subtree rooted at $u$, and let $C_u$ denote the conjunction of the literals along the path to $u$. We prove by induction on the height of $u$ that 
    \begin{equation}\label{eq:ind}
        \sum_{C \in \mathcal{F}} \psE[C_uC] \leq (t-1)\sum_{\ell \in L_1(T_u)} \psE[C_uC_\ell].
    \end{equation}
    Once we have this equation, setting $u$ to be the root node $r$ yields \[\sum_{C \in \mathcal{F}} \psE[C] \leq (t-1) \sum_{\ell \in L_1(T)}\psE[C_\ell] = (t-1)\psE[T],\] as desired.

    If $u$ is a $1$-leaf node of $T_u$, then $L_1(T_u) = \set{u}$ and so \[ \sum_{\ell \in L_1(T_u)} \psE[C_uC_\ell] = \psE[C_u].\] On the other hand, for any leaf $u$, by construction of the tree if $C_uC \not \equiv 0$ then $C_uC = C_u$. 
    Since the family $\mathcal{F}$ is $t$-witnessing, there are at most $t-1$ possible choices of $C \in \mathcal{F}$ such that $C_uC \not \equiv 0$ since every set of $t$ conjunctions in $\mathcal{F}$ are inconsistent.
    Therefore \[ \sum_{C \in \mathcal{F}} \psE[C_uC] \leq (t-1)\psE[C_u] = (t-1)\sum_{\ell \in L_1(T_u)} \psE[C_uC_\ell],\] proving the base case of \cref{eq:ind}.
	
    For the inductive step, consider a node $u$ in $T$ querying a variable $x_i$ with children $v_0, v_1$ corresponding to the two outcomes of the query.
    Then
    \begin{align*}
        \sum_{C \in \mathcal{F}} \psE[C_uC] & = \sum_{C \in \mathcal{F}} \psE[C_ux_i C + C_u(1-x_i)C] \\
        & = \sum_{C \in \mathcal{F}} \psE[C_ux_iC] + \psE[C_u(1-x_i)C] \\
        & = \sum_{C \in \mathcal{F}} \psE[C_{v_1}C] + \sum_{C \in \mathcal{F}} \psE[C_{v_0}C] \\
        & \leq (t-1)\sum_{\ell \in L_1(T_{v_1})} \psE[C_{v_1}C_\ell] + (t-1)\sum_{\ell \in L_1(T_{v_0})} \psE[C_{v_0}C_\ell] \\
        & = (t-1)\left(\sum_{\ell \in L_1(T_{v_1})} \psE[C_ux_iC_\ell] + \sum_{\ell \in L_1(T_{v_0})} \psE[C_{u}(1-x_i)C_\ell]\right) \\
        & = (t-1) \sum_{\ell \in L_1(T_u)} \psE[C_uC_{\ell}],
    \end{align*}
    where the inequality follows by the induction hypothesis, and the last equality follows since the leaves of $T_u$ are exactly the union of leaves of $T_{v_0}$ and $T_{v_1}$.
    This proves \cref{eq:ind} for all nodes $u$ of $T$, completing the proof.
\end{proof}

With this technical lemma in hand, we are now ready to prove the main theorem of this section.

\begin{theorem}
\label{thm:lb}
Let $t, d, M, N, M', N'$ be positive integers chosen so that $M \geq tN + 1$, $M' \geq (t-1)N' + 1$, and $(t-1)d^2 \leq N/2$.
Then the $\ourPigeon{(t+1)}{M}{N}$ problem does not have a depth-$d$ decision-tree reduction to $\ourPigeon{t}{M'}{N'}$. 
\end{theorem}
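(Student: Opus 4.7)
The plan is to invoke \cref{thm:psExp} with $\varepsilon = t-1$. Since $M' \ge (t-1)N' + 1$ gives $M'/N' > t - 1$, it suffices to build a $(d, t, t-1)$-collision-free pseudoexpectation for $R = \ourPigeon{(t+1)}{M}{N}$. The natural candidate is the matching pseudoexpectation $\psE$ of \cref{def:matching-pseudodistribution} at degree $D = (t-1)d^2$, which is well-defined by the hypothesis $(t-1)d^2 \le N/2$. It is $R$-nonwitnessing because any conjunction witnessing a $(t+1)$-collision forces at least two pigeons into a common hole, an event of zero weight under any matching.

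The substantive step is $t$-collision-freedom. Let $\mathcal{F}$ be any $t$-witnessing family of degree-$\le d$ conjunctions. I would adapt the recursive decision-tree construction in the proof of \cref{lem:decision-tree-bound}, strengthening its base-case claim from ``at most $t-1$ conjunctions of $\mathcal{F}$ are polynomially compatible with the current path'' to ``at most $t-1$ conjunctions are \emph{matching-compatible} with the current path under $\psE$''. Concretely, suppose at some node $u$ of the constructed tree there were $t$ conjunctions $C_1, \dots, C_t \in \mathcal{F}$ all satisfying $\psE[C_u C_i] > 0$. Then the combined restriction $\rho$ is matching-consistent, so $t$-witnessing supplies a depth-$d$ decision tree $T'$ solving $R \restriction \rho$. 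I would then append $T'$'s queries to the tree at $u$ and argue that the matching pseudoexpectation at every resulting sub-leaf vanishes: on any total extension of $\rho$ plus the sub-leaf's queries, $T'$ outputs $(t+1)$ pigeons whose holes must coincide, so all $t+1$ of them must be fully pinned (any unpinned pigeon could be rerouted on some extension to break the claimed collision). Since $\rho$ is matching-consistent, at most one pigeon pinned by $\rho$ can sit in the claimed collision hole; the remaining $t$ pigeons must be pinned by $T'$'s $d$ further queries, and the slack from $M \ge tN + 1$ together with $(t-1)d^2 \le N/2$ lets an adversarial rerouting of unpinned pigeons break any candidate collision, forcing $\psE$ on the sub-leaf to vanish. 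Thus the $t$ conjunctions contribute nothing to $\psE[\mathcal{F}]$, recovering the base case of \cref{lem:decision-tree-bound} in the matching regime.

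Threading this extension through the recursive construction, the resulting tree $T$ has depth at most $(t-1)d^2 \le D$, keeping it in the domain of $\psE$; and the inductive identity of \cref{lem:decision-tree-bound} continues to hold because restrictions of the matching pseudoexpectation are themselves matching pseudoexpectations on the surviving pigeons and holes. This yields $\psE[\mathcal{F}] \le (t-1)\psE[T] \le t - 1 = \varepsilon$, and \cref{thm:psExp} then forbids any depth-$d$ decision-tree reduction from $R$ to $\ourPigeon{t}{M'}{N'}$. The main obstacle is the adversary/rerouting step — making the bit-level pinning-budget analysis precise and leveraging the parameter slack to force the matching pseudoexpectation to vanish along the appended sub-tree. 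The remainder is a routine adaptation of the tools already developed in \cref{lem:decision-tree-bound} and \cref{thm:psExp}.
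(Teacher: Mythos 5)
Your setup — invoking \cref{thm:psExp} with $\varepsilon = t-1$, using the degree-$(t-1)d^2$ matching pseudoexpectation — is exactly the paper's, and the observation that $\varepsilon = t-1 < M'/N'$ is the right threshold is also correct. But the way you try to establish $t$-collision-freedom both overshoots and leaves a gap. You propose to re-open the proof of \cref{lem:decision-tree-bound} and weave the $t$-witnessing machinery (the depth-$d$ solver $T'$, an adversarial rerouting argument, a ``pinning budget'' analysis) into its recursive tree construction, strengthening its base case from ``polynomially compatible'' to ``matching-compatible under $\psE$.'' This is not needed, and the specific step you flag as the ``main obstacle'' — forcing $\psE$ to vanish on the sub-tree appended by $T'$ — is in fact where the argument would break. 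Positivity of each $\psE[C_u C_i]$ individually does not make the combined restriction $\rho$ a partial matching (the union of $t$ partial matchings can place up to $t$ pigeons in one hole), nor does it ensure the combination is even polynomially consistent; and showing that \emph{every} leaf of the appended $T'$ has zero matching weight is delicate precisely because of that.

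The paper's proof short-circuits all of this with one preprocessing step you're missing. First discard from $\mathcal{F}$ every conjunction $C$ with $\psE[C]=0$; this changes nothing, and now every remaining $C$ has $\rho(C)$ a genuine partial matching. Then, for any $t$-subset $\mathcal{S} \subseteq \mathcal{F}$ with $C_{\mathcal{S}} := \prod_{C\in\mathcal{S}} C \not\equiv 0$, the combined restriction is a union of $t$ partial matchings and hence has at most $t$ pigeons per hole — it cannot witness a $(t+1)$-collision — and a one-line adversary (place any unqueried pigeon into a hole with at most $t-1$ occupants, feasible since $M \ge tN+1$ and only $O(td^2) \le N/2$ variables are fixed) gives $DT\bigl(\ourPigeon{(t+1)}{M}{N} \restriction \rho(C_{\mathcal{S}})\bigr) = \Omega(N) \gg d$. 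Since $\mathcal{F}$ is $t$-witnessing, this is only consistent if $C_{\mathcal{S}} \equiv 0$. So after the cleanup, \emph{every} $t$-subset is already polynomially inconsistent, which is precisely the hypothesis of \cref{lem:decision-tree-bound} as stated. Applying it unmodified gives $\psE[\mathcal{F}] \le t-1$, and \cref{thm:psExp} finishes. There is no need to touch the lemma's recursive construction, to append $T'$'s queries, or to run a pinning-budget analysis. You had all the right ingredients — the matching adversary, the fact that $t$ matchings give at most $t$-collisions — but deployed them inside \cref{lem:decision-tree-bound}'s proof rather than as a standalone argument that the lemma's hypothesis is met; as a result your ``strengthened base case'' rests on a claim that isn't established.
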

\begin{proof}
    Let $D = (t-1)d^2 \leq N/2$, and let $\psE$ be the degree-$D$ pseudoexpectation obtained from the degree-$D$ matching pseudodistribution (\cref{def:matching-pseudodistribution}) for $\ourPigeon{(t+1)}{M}{N}$.
    We show that $\psE$ is $(d,t,t-1)$-collision-free for $\ourPigeon{(t+1)}{M}{N}$.
    By \cref{thm:psExp}, this implies that $\ourPigeon{(t+1)}{M}{N}$ does not depth-$d$ reduce to $\ourPigeon{t}{M'}{N'}$ for any $M', N'$ with $M' \geq (t-1)N' + 1$.
    
    Consider any $t$-witnessing family $\mathcal{F}$ for $\ourPigeon{(t+1)}{M}{N}$ in which every conjunction has degree $\leq d$.
    Without loss of generality, we can remove any conjunction $C$ from $\mathcal{F}$ such that $\psE[C] = 0$.
    Thus we can assume that $\rho(C)$ encodes a partial matching for all $C \in \mathcal{F}$.
    Let $\mathcal{S} \subseteq \mathcal{F}$ be any collection of $t$ conjunctions in $\mathcal{F}$, let $C_{\mathcal{S}} = \prod_{C \in \mathcal{S}} C$ denote the conjunction obtained by multiplying all conjunctions in $\mathcal{S}$.
    
    Suppose that $C_{\mathcal{S}} \not \equiv 0$, and let $\rho = \rho(C_\mathcal{S})$ denote the corresponding partial assignment.
    Since each constituent conjunction of $C_\mathcal{S}$ is a partial matching, and there are only $t$ conjunctions, it follows that $\rho$ cannot witness a solution to $\ourPigeon{(t+1)}{M}{N}$ since it can only contain a collision of at most $t$ pigeons in any hole.
    A simple adversary strategy then implies that $DT(\ourPigeon{(t+1)}{M}{N} \restriction \rho) = \Omega(N)$, since we can respond to any unqueried pigeon by placing that pigeon into any hole with $\leq t-1$ pigeons. 
    Since the family is $t$-witnessing, it therefore follows that $C_\mathcal{S} \equiv 0$, i.e., $\rho(C_\mathcal{S})$ is an inconsistent partial assignment that tries to place at least one pigeon in two different holes.
    By \cref{lem:decision-tree-bound}, this means that $\psE[\mathcal{F}] \leq t-1$, and therefore $\psE$ is a $(d, t, t-1)$-collision-free pseudoexpectation operator for $\ourPigeon{(t+1)}{M'}{N'}$.
    Applying \cref{thm:psExp} completes the proof.
\end{proof}

Our main result separating the Pecking Order is now an immediate corollary of the previous theorem. We recall it here for convenience.
\pihsep*

Further, we note that we can prove a generalization of \cref{thm:lb}, separating the problem with parameter $a(n)$ from $b(n)$ whenever $a(n)$ is not polynomially close to $b(n)$. We prove this using the same technique as the theorem above, so we only provide a proof sketch here.

\generalLB*

\begin{proof}[Proof Sketch]
    Similar to the proof of \cref{thm:lb}, we construct a $(d,b(n),b(n)-1)$-Collision-free pseudoexpectation operator for $\ourPigeon{a(n)}{N}{M}$ using the Matching pseudodistribution (\cref{def:matching-pseudodistribution}) combined with \cref{lem:decision-tree-bound}. Here, we used that since $a(n)$ is not polynomially close to $b(n)$, no $b(n)$-collection of partial assignments can be witnessing for $\ourPigeon{a(n)}{M}{N}$. Applying \cref{thm:psExp} completes the proof.
\end{proof}

Note that the statement of the above theorem separates all problems which do not have reductions guaranteed by \cref{thm:polynomial-robustness}. Combining the two, we can conclude \cref{thm:posep}.

\posep*

\section{Separations via the Compression Rate}\label{sec:lb:compression}

    %\david{I don't want to make previous section 5 too lengthy.} \sid{If I remember our discussion, we are moving this into a subsection of the last section?}
    %\david{Yes, but then, section 5 will be very long, with 4 subsections. }
    In the previous we gave separations in the Pecking Order via collision number. In this section, we present another type of separation within the Pecking Order, which is due to the difference on the compression rate. Note that we are able to rule out \emph{randomized} reductions in the following theorem.

    \pppnpwpp*
    
    \begin{corollary}
        $\PPP^{dt} \not\subseteq n\textrm{-}\PWPP^{dt}$.
    \end{corollary}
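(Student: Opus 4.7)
The plan is a Yao-principle + adversary argument that exploits the abundance of $n$-collisions when $M \ge (n-1+c)N$. By Yao's minimax principle, it suffices to exhibit a hard distribution $\mathcal{D}$ on $\pigeon$ against which no deterministic depth-$d$ reduction succeeds with probability $\ge 1/2$, for $d = \text{polylog}(n_0)$. I would take $\mathcal{D}$ to be the \emph{hidden-collision} distribution: sample a uniformly random pair $(i^\ast, j^\ast) \in \binom{[N_0+1]}{2}$ and a uniformly random hole $h^\ast \in [N_0]$, set $h(i^\ast) = h(j^\ast) = h^\ast$, and complete $h$ to a uniformly random near-injection on the remaining pigeons. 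Under $\mathcal{D}$ the unique 2-collision is uniformly distributed over all $\binom{N_0+1}{2}$ pairs and is information-theoretically hidden.

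Next, I would exploit the slack in the target. Given the reduction's output $g : [M] \to [N]$ with $M \ge (n-1+c)N$, a simple averaging / pigeonhole argument shows that at least $\Omega(cN/n)$ holes in $g$ receive $\ge n$ pigeons, and hence $g$ contains $\Omega(N/n)$ \emph{disjoint} $n$-collisions. For each such $n$-collision $S$, the reduction supplies a depth-$d$ extractor decision tree $T_S$ which queries at most $d$ positions of $h$ and outputs a candidate 2-collision pair. Since the adversary may present any of these $n$-collisions to the reduction, \emph{every} $T_S$ must output the hidden pair $(i^\ast, j^\ast)$ on $h \sim \mathcal{D}$.

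The crux is an indistinguishability argument between $\mathcal{D}$-instances with different hidden pairs. I would couple two instances $h$ and $h'$ that agree on the permutation of pigeons outside of a swap converting hidden pair $(i^\ast, j^\ast)$ into $(i^\ast, k)$ for some $k \ne j^\ast$. The views of any depth-$d$ tree under $h$ and $h'$ agree unless it queries one of the swapped positions, which happens with probability $O(d / N_0)$. Hence for a uniformly random $k$, the extractor's output is independent of the choice between $(i^\ast, j^\ast)$ and $(i^\ast, k)$ except on a small event, and the extractor cannot simultaneously be correct for most such pairs. Averaging over the $\Omega(N_0)$ choices of $k$ and the $\Omega(N/n)$ disjoint $n$-collisions in $g$, one can locate an $n$-collision $S^\ast$ whose extractor output is ``stuck'' on a fixed pair and hence fails on most hidden pairs, which caps the reduction's overall success probability at $O(d/N_0)$.

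The \emph{main obstacle} is handling the correlations across extractors for different $n$-collisions: the reduction's $M$ construction trees collectively query many positions of $h$, so extractors for different $S$ can be coordinated to jointly ``track'' the hidden pair through $g$'s structure. The leverage is that each individual extractor is a \emph{separate} depth-$d$ tree querying only $h$, and the abundance of disjoint $n$-collisions gives the adversary enough freedom to pick an $n$-collision whose extractor has too few queries to distinguish the hidden pair from a nearby one. The delicate part of the proof is quantifying this via a hybrid between $\mathcal{D}$ and its coupling variants, tailored locally to each candidate $S^\ast$, so that the adversary's choice remains effective despite the reduction's global knowledge of $h$.
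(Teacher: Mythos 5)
Your proposal follows the same Yao-plus-hidden-permutation strategy as the paper's proof of \cref{thm:lb:compression}: the abundance of $n$-collisions in the constructed $\ourPigeon{n}{M}{N}$ instance (from $M \ge (n-1+c)N$) guarantees, with overwhelming probability, a solution whose constructor trees never query the secret location, after which a swap/coupling argument shows the corresponding extractor cannot pin it down. The paper's execution is a bit cleaner in a few places: it fixes pigeon $N+1 \mapsto 1$ so only one hidden index $i^\ast$ needs to be located rather than an unordered pair, it isolates the key step by choosing the lexicographically first \emph{non-witnessing} solution (rather than your looser "average over disjoint $n$-collisions to locate a stuck $S^\ast$", which is awkward since the set of $n$-collisions itself depends on $h$ and changes under the swap), and it handles the adaptivity of the extractor trees $g_o$ via an explicit induction on their depth (\cref{lem:hit}). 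Also, the "main obstacle" you flag (coordination across extractors) is not really where the delicacy lies; once a solution is non-witnessing, the paper's \cref{lemma:useless_h1h2} cleanly shows it remains non-witnessing under the $i^\ast \to j^\ast$ swap with probability $1 - \negl$, and that is the step that protects the argument from the constructor trees "tracking" the hidden index.
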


     \paragraph{Notation.} We consider $\pigeon$ instances with $N+1$ pigeons and $N$ holes. With a little abuse of notation, we denote a $\pigeon$ instance by a string $\pmap \in [N]^{N+1}$, where $\pmap_i$ is the hole that pigeon $i$ gets mapped to. Assume that the $\ourPigeon{n}{M}{N}$ instance $f(\pmap)$ reduced from $\pmap$ has $M(n')$ pigeons, $N(n')$ holes, and the solution is any $n'$-collision. We have $M' > N' \cdot (n'-1 + c)$ for a constant $c > 0$ from the theorem statement.
    Formally, for any $i \in [M']$, pigeon $i$ from $f(\pmap)$ is mapped to hole $f_i(\pmap)$. Without loss of generality, we assume all decision trees $(f_i, g_o)_{i \in [M'], o \in [M']^{n'}}$ %\zhiyang{Why is $f_i$ and $g_o$ paired up?}
    have the same depth $d = \poly(\log N)$.

    We then proceed in four steps:
    \begin{enumerate}
        \item Finding an appropriate distribution $D_N$ of hard $\pigeon$ instances.
        \item Showing that with high probability, there exists a ``non-witnessing'' solution in the $\ourPigeon{n}{M}{N}$ instance $f(\pmap)$ when $\pmap$ is drawn from $D_N$.
        \item Arguing that the error probability is high if all decision trees $(g_o)$ are depth-$0$.
        \item Generalizing the error analysis to depth-$d$.
    \end{enumerate}

    \paragraph{A hard distribution.} By Yao's Minimax principle, it suffices to consider a family of distributions $(D_N)$ of $\pigeon$ instances, and then show that any deterministic low-depth black-box reduction $(f_i, g_o)$ must be wrong with high probability. A natural choice of $D_N$ is taking $\pmap_1, \ldots, \pmap_N$ to be a random permutation over $[N]$, while $\pmap_{N+1}$ is always set to $1$. Now, the only possible solution is the index $i^* \in [N]$ such that $\pmap_{i^*} = 1$. In the rest of this proof, we use $i^*$ rather than the collision pair $(i^*, N+1)$ to indicate the solution of $\pmap$; we also ignore $\pmap_{N+1}$ and assume $\pmap$ is permutation on $[N]$.
    % The uniqueness of solution and the symmetry of $D_N$ are vital in the following two steps.

    \paragraph{Find a non-witnessing solution.} 

    For any $i \in [M'], \pmap \in D_N$, we say pigeon $i$ in $f(\pmap)$ is \emph{non-witnessing} if the decision tree path in $f_i$ realized by $h$ is not witnessing, i.e., $f_i$ does not query $i^*$ when evaluating on $h$; otherwise, we call pigeon $i$ \emph{witnessing}.
    
    Let $o = (i_1, \ldots, i_{n'})$ be any solution (i.e., an $n'$-collision) of $f(h)$. By taking the union of all decision tree paths in $(f_{i_j})_{j \in [n']}$ realized by $h$, 
    we get a partial assignment $\pmap_o$ of size at most $d \cdot n' = \poly(\log N)$ in $\pmap$. % because there are $n'$ pigeons in $o$, and each pigeon of $f(\pmap)$ depends on at most $d$ locations in $\pmap$. 
    Without loss of generality, we assume this partial assignment $\pmap_o$ is also returned by the reduction as part of the solution.
    
    We say $o$ is a \emph{non-witnessing} solution if for all $j \in [n']$, the pigeon $i_j$ is non-witnessing, i.e., the partial assignment $\pmap_o$ does not witness the location of $i^*$; otherwise, $o$ is a \emph{witnessing} solution.
    Intuitively, a non-witnessing solution reveals almost no information about the key location $i^*$.
    
    We now prove the following key lemma.

    % \david{TODO: More fine-grained bound!}

    \begin{lemma}\label{lem:claim1}
        When $\pmap$ is drawn from $D_N$, $f(\pmap)$ has a non-witnessing solution with probability $1 - \negl(\log N)$.
    \end{lemma}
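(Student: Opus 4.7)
The plan is to reduce the lemma to a counting argument: we show that, with probability $1 - \negl(\log N)$, the number $W$ of \emph{witnessing} pigeons (those $i \in [M']$ for which $f_i$ queries $i^*$ on $\pmap$) is strictly less than $cN'$. Once this is established, the pigeonhole principle finishes the job: since $M' \geq (n'-1+c)N' + 1$, there are at least $M' - W > (n'-1)N'$ non-witnessing pigeons distributed among $N'$ holes, so some hole must contain at least $n'$ of them, giving a non-witnessing $n'$-collision. We may assume $M' = \Theta(n' N')$ without loss of generality: if $M'$ were much larger, we can restrict attention to an arbitrary sub-collection of $(n'-1+c)N'$ pigeons and apply the argument to the sub-instance.

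The technical heart is a bound on the probability that a single fixed pigeon $i \in [M']$ is witnessing. I would prove this via a standard deferred-decisions / reveal-on-demand argument. Simulate the depth-$d$ tree $f_i$ on $\pmap$, revealing each value $\pmap_j$ only at the moment $f_i$ queries position $j$. Because $\pmap$ is a uniformly random permutation of $[N]$, conditioned on the first $k<d$ revealed values all being different from $1$, the remaining $N-k$ positions are still in uniform bijection with the remaining $N-k$ values, and in particular $i^*$ is uniform over them. Hence the $(k{+}1)$-st query lands on $i^*$ with probability at most $1/(N-k)$. Union bounding over at most $d$ queries yields
\[
    \Pr_{\pmap \sim D_N}\bigl[\,i \text{ is witnessing}\,\bigr]
    \;\leq\; \sum_{k=0}^{d-1} \frac{1}{N-k}
    \;\leq\; \frac{2d}{N},
\]
using $d \leq N/2$, which holds since $d = \poly(\log N)$ while $N = 2^n$.

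With the single-pigeon bound in hand, linearity of expectation gives $\E[W] \leq 2 M' d / N$. Since $M' = \Theta(n' N')$ and both $n'$ and $d$ are $\poly(\log N)$, we obtain
\[
    \frac{\E[W]}{cN'} \;\leq\; \frac{2 M' d}{c N' N} \;=\; \frac{\poly(\log N)}{N} \;=\; \negl(\log N).
\]
Markov's inequality then yields $\Pr[W \geq cN'] \leq \negl(\log N)$, which combined with the pigeonhole observation from the first paragraph completes the proof.

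The main obstacle is the single-pigeon bound itself: because the queried set $Q_i(\pmap)$ of the tree is itself a function of $\pmap$, one cannot naively ``decouple'' $i^*$ from the tree's queries and multiply $d/N$ by $|Q_i|$. The reveal-on-demand viewpoint sidesteps this cleanly by exposing $\pmap$ only along the path of execution; after that, everything (linearity, Markov, pigeonhole, and the padding step that reduces large $M'$ to the standard regime) is routine.
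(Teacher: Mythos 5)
Your proof is correct and follows the same approach as the paper's: bound the per-pigeon witnessing probability by $O(d/N)$, apply Markov, then pigeonhole. You are more careful in two places --- you spell out a deferred-decisions argument for the per-pigeon bound (the paper simply asserts $\leq d/N$, which one gets from the telescoping product $\prod_{k<d}(1-\tfrac{1}{N-k}) = \tfrac{N-d}{N}$ rather than your union bound), and you explicitly pass to a sub-collection of $(n'-1+c)N'$ pigeons when $M'$ exceeds the tight compression rate, a WLOG step the paper leaves implicit but which is genuinely needed for the Markov bound $\Pr[W\geq cN'] \leq M'd/(cN'N)$ to be negligible.
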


    \begin{proof}
        Since $\pmap$ is a random permutation, and $f_i$ only has $d$ levels, for any pigeon $i \in [M']$ in $f(\pmap)$, we have 

        $$\Pr_{h \sim D_N}[i \textrm{ is witnessing}] \leq \frac{d}{N} = \negl(\log N).$$
        
        Using Markov's inequality, the probability that $f(\pmap)$ has at least $cN'$ witnessing pigeons is $\negl(\log N)$. 
        In other words, with probability $1 - \negl(\log N)$, $f(\pmap)$ has more than $(n'-1) \cdot N'$ non-witnessing pigeon.
        
        Therefore, with probability $1 - \negl(\log N)$, $f(\pmap)$ has a $n'$-collision with only non-witnessing pigeons, i.e., a non-witnessing solution.
        
    \end{proof}
    
    \paragraph{Success probability for depth-$0$.}

    We say a reduction is depth-$k$ ($k \leq d$) if all the decision trees $(g_o)$ are depth-$k$, while $(f_i)$ are still depth-$d$. For a fixed family of depth-$d$ decision trees $(f_i)$, define $\ps_k$ as the maximal success probability of any depth-$k$ ($k \leq d$) reduction.
    We first consider the success probability of depth-$0$ reduction, i.e., $g_o \in [N]$ is a fixed location. 

    Let $R, Q$ be the short-hand for \pigeon and $\ourPigeon{n}{M}{N}$.
    Formally, our goal in this step is to show the following inequality.

    % \david{More fine-grained bound!}
    
    \begin{lemma}\label{lem:depth_0}
    $$\ps_0 \coloneqq \max_{f,g} \Pr_{\pmap \sim D_N} [(\pmap, g_o) \in R  \Leftarrow (f(\pmap), o) \in Q] < \negl(\log N).$$
    \end{lemma}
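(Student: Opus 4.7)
The plan is to argue that a depth-$0$ reduction must, with high probability over $\pmap$, produce a non-witnessing solution, at which point the fixed guess $g_o$ amounts to a blind guess at the hidden position $i^*$ and so matches it with only $O(1/N)$ probability. First I would invoke \cref{lem:claim1} to split $\ps_0 \leq \Pr_\pmap[\neg E] + \Pr_\pmap[\text{success}\cap E]$, where $E$ denotes the event that $f(\pmap)$ has a non-witnessing solution; the first term is $\negl(\log N)$ by \cref{lem:claim1}, reducing the task to bounding $\Pr_\pmap[\text{success}\cap E] \leq \negl(\log N)$.

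The central step is an ``unrevealed coordinate'' observation. For any partial assignment $\sigma$ of $\pmap$ whose range does not contain the value $1$, the conditional distribution of $\pmap$ given $\pmap \supseteq \sigma$ is uniform over extensions; hence $i^* = \pmap^{-1}(1)$ is uniform over the $\geq N - dn'$ positions in $[N] \setminus \mathrm{dom}(\sigma)$. Since any non-witnessing solution $o = (i_1, \ldots, i_{n'})$ in $f(\pmap)$ has an associated partial assignment $\sigma_o(\pmap)$ — the union of the decision-tree paths realized by $f_{i_1}, \ldots, f_{i_{n'}}$ — of size at most $dn'$ that omits the value $1$, and the depth-$0$ guess $g_o$ is a fixed index in $[N]$, I obtain the per-solution bound $\Pr[g_o = i^*(\pmap) \mid \sigma_o(\pmap) = \sigma] \leq 1/(N - dn')$ for every realizable $(o, \sigma)$.

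To convert this per-solution bound into a global bound on $\Pr[\text{success}\cap E]$, I would partition the event by a canonically chosen non-witnessing solution $o^*(\pmap)$ (say lex-first) together with its partial assignment $\sigma^*(\pmap)$. On each cell the guess $v_\sigma := g_{o^*_\sigma}$ is determined by $\sigma$ alone, and success within the cell forces $i^*(\pmap) = v_\sigma$. Since each cell $C_\sigma$ is contained in the ``cube'' $D_\sigma = \{\pmap : \pmap \supseteq \sigma\}$ on which $i^*$ is uniform, the observation above gives $\Pr[\text{success}\cap C_\sigma] \leq \Pr[D_\sigma]/(N - dn')$; summing over the disjoint cells yields $\Pr[\text{success}\cap E] \leq \Pr[E]/(N-dn') \leq 1/(N-dn')$, which is $\negl(\log N)$.

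The main obstacle I anticipate is the subtlety in this last step: the canonical sol $o^*(\pmap)$ can depend on $\pmap$ outside $\mathrm{dom}(\sigma^*(\pmap))$, because the lex-first property can be affected by other non-witnessing solutions supported elsewhere in $f(\pmap)$, so $C_\sigma$ is not literally a cube. I expect to handle this either by bounding the success probability via the enclosing cube $D_\sigma$ (exploiting that $v_\sigma$ depends only on $\sigma$), or via a measure-preserving swap: for each successful $\pmap$, swap $i^*(\pmap)$ with an arbitrary unqueried index $j \in [N] \setminus \mathrm{dom}(\sigma^*) \setminus \{i^*\}$ to produce a $\pmap'$ that keeps $o^*$ as a non-witnessing solution but has $i^*(\pmap') = j \neq g_{o^*}$ and so fails. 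Counting this $(N - dn' - 1)$-to-one map from success to failure will deliver the same $\negl(\log N)$ bound and will later generalize to the depth-$d$ case in Step 4.
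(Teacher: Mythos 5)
Your high-level plan matches the paper's: split off the event that $f(\pmap)$ has no non-witnessing solution via \cref{lem:claim1}, then argue that conditioned on a non-witnessing solution, $i^*$ is essentially uninformed and a fixed guess only hits it with probability $O(1/N)$. You also correctly isolate the right subtlety --- the lex-first choice $o^*(\pmap)$ is not a function of $\sigma^*(\pmap)$ alone, so the cells $C_\sigma$ are not subcubes.

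However, neither of your two proposed fixes closes the gap. For fix~(a): the bound $\Pr[\text{success}\cap C_\sigma] \leq \Pr[D_\sigma]/(N-dn')$ is correct per cell, but the cubes $D_\sigma$ overlap massively, so $\sum_\sigma \Pr[D_\sigma]$ is not bounded by $\Pr[E]$ or by $1$ --- it is of order $\binom{N}{dn'}$, since for most short partial matchings $\sigma$ both $D_\sigma$ has measure $\approx N^{-|\mathrm{dom}(\sigma)|}$ and there are $\approx \binom{N}{|\mathrm{dom}(\sigma)|} N^{|\mathrm{dom}(\sigma)|}$ of them. The resulting bound is vacuous. For fix~(b): the map $(\pmap, j) \mapsto \pmap_j$ from successes to failures is not injective. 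Given a failure $\pmap'$, the swap index is forced to be $j = i^*(\pmap')$, but the original $i^*(\pmap)$ is unconstrained --- there are up to $N$ permutations $\pmap$ with $T_j(\pmap) = \pmap'$, one for each candidate location of the old solution, and nothing a priori prevents several of them from lying in the success set $S$. Counting ``out of $S$'' therefore only yields $|S|(N-dn'-1) \leq N\cdot|F|$, i.e.\ $|S|/|D_N| = O(1)$, which is useless.

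The paper's proof executes your measure-preserving-swap idea, but counts in the opposite direction, which is what makes it tight. It samples $\pmap^{(1)} \sim D_N$ and a fresh index $j^* \sim U_N$, forms $\pmap^{(2)}$ by the swap, and observes that the \emph{marginal} of $\pmap^{(2)}$ is uniform, so $\Pr[\pmap^{(2)} \text{ succeeds}]$ is exactly the quantity to bound. Then, for each fixed $\pmap^{(1)}$ with a non-witnessing $o^*$, \cref{lemma:useless_h1h2} gives that $o^*$ remains a non-witnessing solution of $f(\pmap^{(2)})$ for all but $\negl$ fraction of $j^*$, and in that case success forces $j^* = g_{o^*}$, which is a single value. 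So at most one ``good'' $j^*$ per $\pmap^{(1)}$ (plus negligible exceptions), giving $1/N + \negl(\log N)$ overall. This counts, for each left-hand vertex $\pmap^{(1)}$, how many of its $N$ neighbors lie in $S$ --- a bound of roughly $dn'+2$ --- whereas your fix~(b) tries to bound, for each failure, how many successes map to it, which is genuinely up to $N$. You should restructure your argument around the former counting direction (or equivalently, the explicit second random dice with a uniform marginal), at which point the proof goes through and, as you anticipated, generalizes cleanly to the depth-$d$ case.
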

    Let $\pmap^{(1)}$ be an instance of \pigeon with solution $i^*$. The key technical trick here is to roll a second dice, which helps us estimate the error probability. Specifically, we take a uniformly random index $j^* \in [N]$, and then generate $\pmap^{(2)}$ from $\pmap^{(1)}$ by swapping the solution from location $i^*$ to $j^*$. Formally, $$\pmap^{(2)}_{j^*} = 1, \pmap^{(2)}_{i^*} = \pmap^{(1)}_{j^*}; \;\; \pmap^{(2)}_i = \pmap^{(1)}_i, \forall i \in [N] \backslash \{i^*, j^*\}.$$ 
    By symmetry, we know the marginal distribution of $\pmap^{(2)}$ is also a random permutation, so we can calculate the success probability on $\pmap^{(2)}$ instead.

    Let $\uEvent(\pmap, o)$ indicate the event that $o$ is a non-witnessing solution for $f(\pmap)$, and denote the uniform distribution over $[N]$ by $U_N$. 
    We have the following observation regarding our second dice. 

    \begin{lemma}\label{lemma:useless_h1h2}
    For any $\pmap^{(1)}$ and $o$,
    $$\Pr_{j^* \sim U_N} [\uEvent(\pmap^{(2)}, o) \,|\, \uEvent(\pmap^{(1)}, o)] = 1 - \negl(\log N).$$
    \end{lemma}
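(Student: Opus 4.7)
The plan is a direct argument using that the decision trees in the reduction read at most $d \cdot n' = \poly(\log N)$ positions of the Pigeon input. Fix the instance $\pmap^{(1)}$ (with solution index $i^*$) and a candidate solution $o = (i_1, \dots, i_{n'})$. Let $Q = Q(\pmap^{(1)}, o) \subseteq [N]$ denote the set of pigeon indices that the decision trees $f_{i_1}, \dots, f_{i_{n'}}$ actually query when evaluated on $\pmap^{(1)}$. Since each tree has depth $d$, we have $|Q| \leq d \cdot n' = \poly(\log N)$. The event $\uEvent(\pmap^{(1)}, o)$ asserts precisely that $i^* \notin Q$.

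Next I would argue that provided $j^* \notin Q$, the event $\uEvent(\pmap^{(2)}, o)$ holds as well. Indeed, $\pmap^{(1)}$ and $\pmap^{(2)}$ differ only at coordinates $i^*$ and $j^*$, both of which are (by assumption and by $\uEvent(\pmap^{(1)}, o)$) outside $Q$. Therefore the two instances agree on every coordinate queried by $f_{i_1}, \dots, f_{i_{n'}}$ along the paths realized on $\pmap^{(1)}$, and by a straightforward induction on the decision trees, the same paths are realized on $\pmap^{(2)}$ with the same leaf labels. In particular, $f_{i_1}(\pmap^{(2)}) = f_{i_1}(\pmap^{(1)}), \dots, f_{i_{n'}}(\pmap^{(2)}) = f_{i_{n'}}(\pmap^{(1)})$, so $o$ is still an $n'$-collision in $f(\pmap^{(2)})$; and since the solution index of $\pmap^{(2)}$ is $j^* \notin Q$, the partial assignment $\pmap^{(2)}_o$ does not witness it, giving $\uEvent(\pmap^{(2)}, o)$.

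Finally, I would bound the bad event $j^* \in Q$. Since $j^*$ is drawn uniformly from $[N]$ and $Q$ is a fixed subset determined by $\pmap^{(1)}$ and $o$ of size at most $d \cdot n' = \poly(\log N)$, we have
\[
\Pr_{j^* \sim U_N}[j^* \in Q] \;\leq\; \frac{d \cdot n'}{N} \;=\; \negl(\log N),
\]
which yields the claimed conditional probability bound by the contrapositive of the implication established above.

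There isn't really a main obstacle: the statement is essentially the observation that a shallow decision tree is insensitive to changes at a random untouched coordinate, combined with the fact that a ``swap'' at two coordinates outside the queried set leaves the computation intact. The only mildly delicate point to make explicit is that $\uEvent(\pmap^{(2)}, o)$ requires $o$ to first still be a valid solution of $f(\pmap^{(2)})$ (and then to be non-witnessing for $j^*$); both follow from the path-invariance argument, so no additional work is needed.
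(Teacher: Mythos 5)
Your proof is correct and follows essentially the same route as the paper's: both observe that the partial assignment $\pmap_o$ (your set $Q$) has size at most $d \cdot n' = \poly(\log N)$, that conditioning on $\uEvent(\pmap^{(1)}, o)$ means $i^* \notin Q$, and that as long as the uniformly random $j^*$ also misses $Q$ the two instances agree on all queried locations, so $o$ remains a non-witnessing solution of $f(\pmap^{(2)})$. You spell out the path-invariance of the decision trees slightly more explicitly than the paper does, but the argument is the same.
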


    \begin{proof}
         Since $o$ is non-witnessing, the partial assignment $\pmap_o$ does not contain location $i^*$. Recall that $\pmap^{(2)}$ is different to $\pmap^{(1)}$ only on location $i^*$ and $j^*$. Therefore, with $1 - \frac{|h_o|}{n} = 1 - \negl(\log N)$ probability on $j^*$, $\pmap_o$ is also a partial assignment of $\pmap^{(2)}$, which implies that $o$ is a non-witnessing solution of $f(\pmap^{(2)})$.
    \end{proof}
    
    Now we are ready to prove \cref{lem:depth_0}.
    
    \begin{proof}[Proof of \cref{lem:depth_0}]

    Fix an arbitrary depth-$0$ reduction $(f,g)$.
    We give an overview before diving into the calculations.
    We first randomly draw $\pmap^{(1)}$ from $D_N$, and there are two possible cases: either $f(\pmap^{(1)})$ has a non-witnessing solution, or $f(\pmap^{(1)})$ does not have any non-witnessing solution. The second case will happen with very low probability (\cref{lem:claim1}), so we can assume $f(\pmap^{(1)})$ has a non-witnessing solution $o^*$.
    We then roll a second dice $j \sim [N]$ and generate $\pmap^{(2)}$ accordingly. We know that $o^*$ is also a non-witnessing solution of $f(\pmap^{(2)})$ with high probability (\cref{lemma:useless_h1h2}); among these $\pmap^{(2)}$, the reduction could possibly be correct only when $j = g_{o^*}$. 

    Formally, let $o^*$ be the first non-witnessing solution of $f(\pmap^{(1)})$ in the lexicographical order, if $f(\pmap^{(1)})$ has a non-witnessing solution; otherwise, let $o^*$ to be the lexicographically first solution of $f(\pmap^{(1)})$.
    We first consider whether $\pmap^{(1)}$ has a non-witnessing solution.
    
    \begin{align}
       & \Pr_{\pmap^{(2)} \sim D_N} [(\pmap^{(2)}, g_o) \in R  \Leftarrow (f(\pmap^{(2)}), o) \in Q] \notag\\
        &= \Pr_{\pmap^{(1)} \sim D_N, j^* \sim U_N} [(\pmap^{(2)}, g_o) \in R  \Leftarrow (f(\pmap^{(2)}), o) \in Q] \notag\\
        &\leq \Pr_{\pmap^{(1)}, j^*} [(\pmap^{(2)}, g_o) \in R  \Leftarrow (f(\pmap^{(2)}), o) \in Q \,|\, \uEvent(\pmap^{(1)}, o^*)]  + \Pr_{\pmap^{(1)}}[\neg \uEvent(\pmap^{(1)}, o^*)] \label{equ:use_claim1} 
        %&\leq \Pr_{\pmap^{(1)}, j^*} [(\pmap^{(2)}, g_o) \in R  \Leftarrow (f(\pmap^{(2)}), o) \in Q \,|\, \uEvent(\pmap^{(1)}, o^*)] + \negl(\log N).
    \end{align}

    Note that the term $\Pr_{\pmap^{(1)}}[\neg \uEvent(\pmap^{(1)}, o^*)]$ is at most $\negl(\log N)$ from \cref{lem:claim1}. So it remains to bound the first item in inequality~\cref{equ:use_claim1}, which is the success probability on $\pmap^{(2)}$ condition on $\pmap^{(1)}$ has a non-witnessing solution $o^*$.

    \begin{align}
        & \Pr_{\pmap^{(1)}, j^*} [(\pmap^{(2)}, g_o) \in R  \Leftarrow (f(\pmap^{(2)}), o) \in Q \,|\, \uEvent(\pmap^{(1)}, o^*)] \notag\\
        &\leq \Pr_{\pmap^{(1)}, j^*} [(\pmap^{(2)}, g_{o^*}) \in R  \wedge  \uEvent(\pmap^{(2)}, o^*) \,|\, \uEvent(\pmap^{(1)}, o^*)] \notag \\ 
        &\quad + \Pr_{\pmap^{(1)}, j^*} [ \neg \uEvent(\pmap^{(2)}, o^*) \,|\, \uEvent(\pmap^{(1)}, o^*)] \notag\\
        &\leq \Pr_{\pmap^{(1)}, j^*} [(\pmap^{(2)}, g_{o^*}) \in R  \wedge  \uEvent(\pmap^{(2)}, o^*) \,|\, \uEvent(\pmap^{(1)}, o^*)] + \negl(\log N) \label{equ:use_obs}\\
        &= \Pr_{\pmap^{(1)}, j^*} [(j^* =  g_{o^*})  \wedge  \uEvent(\pmap^{(2)}, o^*) \,|\, \uEvent(\pmap^{(1)}, o^*)]+ \negl(\log N) \notag\\
        &\leq \Pr_{\pmap^{(1)}, j^*} [(j^* =  g_{o^*}) \,|\, \uEvent(\pmap^{(1)}, o^*)]+ \negl(\log N) \notag\\
        &= \frac{1}{N} + \negl(\log N) \notag\\
        &= \negl(\log N). \label{equ:item1}
    \end{align}

    Note that inequality~\cref{equ:use_obs} comes from \cref{lemma:useless_h1h2}. 

    \end{proof}

    \paragraph{Success probability for depth-$d$.}

    We now consider depth-$k$ reduction for $k \leq d$, i.e., all $(g_o)$ are depth-$k$.
    Define event $\hitI(\pmap^{(1)}, o)$ to be true if $o$ is a witnessing solution of $f(\pmap^{(1)})$, or $i^*$ is queried when evaluating $g_o$ on $\pmap^{(1)}$. Intuitively, $g_o(\pmap^{(1)})$ has to \emph{guess} a location if $\hitI(\pmap^{(1)}, o)$ is not true,

    The following lemma effectively reduces the problem to the case of depth-$(k-1)$.
    \begin{lemma}\label{lem:hit}
        For a depth-$k$ reduction $(f_i, g_o)$,
        $$\Pr_{\pmap^{(1)} \sim D_N} [\hitI(\pmap^{(1)}, o) \Leftarrow (f(\pmap^{(1)}), o) \in Q] \leq \ps_{k-1}$$
    \end{lemma}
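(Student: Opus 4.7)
The plan is to prove \cref{lem:hit} by exhibiting an explicit depth-$(k-1)$ reduction $(f, g')$ (with the very same $f$ trees) whose success probability on $\pmap^{(1)} \sim D_N$ is at least the LHS of the lemma; the bound then follows immediately from the definition of $\ps_{k-1}$ as the supremum of success probabilities over all depth-$(k-1)$ reductions (with the fixed $f$-family). Crucially, the construction will lean on the earlier convention that the partial assignment $\pmap_o$ -- the union of the $(f_{i_j})_{j \in [n']}$ decision-tree paths realized on $\pmap^{(1)}$ -- is returned as part of the $Q$-solution $o$, so $g'_o$ can inspect $\pmap_o$ at no query cost.

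I would build $g'_o$ in two stages. First, scan $\pmap_o$ (free, since it is given with $o$) and, if any location $j$ appears there paired with value $1$, output $j$ and halt. Otherwise, simulate $g_o$ on $\pmap^{(1)}$ for up to $k-1$ oracle queries; if at any such step the answer is $1$, output the queried location immediately. If after $k-1$ queries no answer has equaled $1$, compute (without any further queries) which location $g_o$ would query at its $k$-th step given the $k-1$ answers already observed, and output that predicted location. Since $g'_o$ makes at most $k-1$ queries to $\pmap^{(1)}$, the pair $(f, g')$ is a valid depth-$(k-1)$ reduction.

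For correctness, fix any $\pmap^{(1)}$ on which the implication inside the LHS holds, and suppose $o$ is a valid $Q$-solution of $f(\pmap^{(1)})$ (otherwise the implication is vacuous and nothing needs to be checked). Then $\hitI(\pmap^{(1)}, o)$ must hold, so we split on its two disjuncts. If $o$ is witnessing via $f$, some $f_{i_j}$ queries $i^*$ on its realized path and receives the answer $1$, so $(i^*, 1)$ appears in $\pmap_o$; the first stage of $g'_o$ outputs $i^*$. If instead $g_o$ queries $i^*$ on its realized path at some step $t \leq k$, then either $t \leq k-1$ and the simulation encounters answer $1$ at step $t$ and outputs $i^*$, or $t = k$ in which case the first $k-1$ simulated answers coincide with the real run and miss $i^*$, so the predicted $k$-th query agrees with $g_o$'s true $k$-th query -- namely $i^*$ -- and the fallback outputs $i^*$. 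In every branch $g'_o(\pmap^{(1)}) = i^*$, so $(f, g')$ is correct on $\pmap^{(1)}$, completing the proof.

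The step I expect to require the most care is handling the \emph{witnessing} disjunct: $g'_o$ has only $k-1 \leq d-1$ queries, but the $f$-traces responsible for witnessing can have total length $d \cdot n' \gg k$, so any attempt to \emph{re-simulate} the $(f_{i_j})$ inside $g'_o$ cannot fit in the query budget. This would be the principal obstacle absent the $\pmap_o$-inclusion convention; with it, the witness is read off from the free input and the argument goes through cleanly. While writing, I would therefore take care to double-check that this extension of the $Q$-solution format is genuinely without loss of generality and is compatible with the earlier steps of the proof (in particular with the definition of $\hitI$ and with the bookkeeping used to bound $\ps_0$).
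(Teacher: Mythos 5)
Your proposal is correct and takes essentially the same approach as the paper: you build $g'_o$ to run the first $k-1$ levels of $g_o$, output $i^*$ if it is revealed either by the freely-supplied partial assignment $\pmap_o$ or by one of those $k-1$ queries, and otherwise output $g_o$'s predicted $k$-th query. Your explicit case split on the two disjuncts of $\hitI$ and your observation that the $\pmap_o$-inclusion convention is what makes the witnessing case affordable within the $(k-1)$-query budget are precisely the points the paper's terser proof relies on implicitly.
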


    \begin{proof}
        We construct a depth-$(k-1)$ reduction $(f_i, g'_o)$: For each possible $o \in [M']^{n'}$, the structure of $g'_o$ is same as the first $k-1$ levels of $g_o$. If the solution $i^*$ is witnessed by $o$ itself or the first $k-1$ queries of $g'_o$, then $g'_o$ will output $i^*$; otherwise, $g'_o$ will output the location that is going be queried in $g_o$ in the $k$-th level, if $g_o$ is evaluated on the same input. Therefore, we have
        $$\Pr_{\pmap^{(1)} \sim D_N} [\hitI(\pmap^{(1)}, o) \Leftarrow (f(\pmap^{(1)}), o) \in Q] = \Pr_{\pmap^{(1)} \sim D_N} [(\pmap^{(1)}, g'_o) \in R \Leftarrow (f(\pmap^{(1)}), o) \in Q] \leq \ps_{k-1}.$$
    \end{proof}

    Define event $\hitJ(\pmap^{(1)}, o, j^*)$ to be true if the partial assignment $\pmap_o$ contains the location $j^*$, or $j^*$ is queried when $g_o$ is evaluated on $\pmap^{(1)}$. Note that when both $\hitI(\pmap^{(1)}, o)$ and $\hitJ(\pmap^{(1)}, o, j^*)$ are false, we have $g_o(\pmap^{(1)}) = g_o(\pmap^{(2)})$, because they only differ in location $i^*$ and $j^*$.
    We have the following lemma using the same argument in \cref{lemma:useless_h1h2}.
    \begin{lemma}\label{lemma:nothit_h1h2}
    For any $\pmap^{(1)}$ and $o$,
    $$\Pr_{j^* \sim U_N} [\neg \hitJ(\pmap^{(1)}, o, j^*) \,|\,\neg \hitI(\pmap^{(1)}, o)] = 1 - \negl(\log N).$$
    \end{lemma}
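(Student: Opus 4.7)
The plan is to mirror the argument of Lemma~\ref{lemma:useless_h1h2}, but now tracking a slightly larger set of "bad" locations for $j^*$. The key observation is that $\hitI(\pmap^{(1)}, o)$ depends only on $\pmap^{(1)}$ and $o$, and not on $j^*$ at all. So for any fixed $\pmap^{(1)}, o$ with $\neg \hitI$ holding, conditioning on $\neg \hitI$ leaves the distribution of $j^*$ unchanged; it remains uniform on $[N]$. Hence it suffices to give an upper bound on $\Pr_{j^*}[\hitJ(\pmap^{(1)}, o, j^*)]$ that is $\negl(\log N)$.

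Next I would enumerate the locations $j \in [N]$ for which $\hitJ(\pmap^{(1)}, o, j)$ can possibly hold. By definition this happens either when $j$ appears in the partial assignment $\pmap_o$ extracted from the reduction at $o$, or when $j$ is queried by $g_o$ on input $\pmap^{(1)}$. From the setup we have $|\pmap_o| \le d \cdot n'$ (union of $n'$ depth-$d$ paths in the trees $f_{i_j}$), and since $g_o$ is a depth-$k$ decision tree with $k \le d$, it queries at most $d$ coordinates of $\pmap^{(1)}$. Crucially, both of these sets are determined entirely by $\pmap^{(1)}$ and $o$, so they are fixed before the draw of $j^*$.

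Therefore the set of $j \in [N]$ that activate $\hitJ$ has size at most $d \cdot n' + d = \poly(\log N)$, and a union bound over a uniform $j^* \sim [N]$ gives
\[
\Pr_{j^* \sim U_N}[\hitJ(\pmap^{(1)}, o, j^*) \mid \neg \hitI(\pmap^{(1)}, o)] \;\le\; \frac{d(n'+1)}{N} \;=\; \negl(\log N),
\]
which rearranges to the claimed bound. There is no real obstacle here: the proof is a direct counting argument, and the independence of $j^*$ from $\hitI$ makes the conditioning vacuous. The only subtlety to double-check is that $\pmap_o$ and the query set of $g_o(\pmap^{(1)})$ are indeed functions of $\pmap^{(1)}$ and $o$ alone (and not of $j^*$), which is immediate from their definitions.
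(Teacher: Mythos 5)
Your proof is correct and follows essentially the same counting argument as the paper, which simply invokes "the same argument as Lemma~\ref{lemma:useless_h1h2}" without writing out details. You correctly note that $\hitI(\pmap^{(1)}, o)$ is independent of $j^*$ (so conditioning is vacuous), and that the set of "bad" locations — those in $\pmap_o$ plus those queried by $g_o$ on $\pmap^{(1)}$ — has size at most $d(n'+1) = \poly(\log N)$, giving the $\negl(\log N)$ bound.
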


    Now, let us wrap everything up.
    \begin{proof}[Proof of \cref{thm:lb:compression}]
        We follow a similar strategy as in the proof of \cref{lem:depth_0}. For any depth-$k$ reduction $(f,g)$, there are two possibilities regarding to $\pmap^{(1)}$: either $f(\pmap^{(1)})$ has a (non-witnessing) solution $o^*$ that $\hitI(\pmap^{(1)}, o^*)$ is false, or $\hitI(\pmap^{(1)}, o)$ is true for any solution $o$ of $f(\pmap^{(1)})$. The second case will happen with probability at most $\ps_k$ by \cref{lem:hit}, and we also roll a second dice $j^*$ to analyze the first case.

        Formally, let $o^*$ to be the lexicographically first solution of $f(\pmap^{(1)})$ that $\hitI(\pmap^{(1)}, o)$ is false, if such a solution exists; otherwise, let $o^*$ to be the lexicographically first solution of $f(\pmap^{(1)})$. We have

        \begin{align}
        &  \Pr_{\pmap^{(1)} \sim D_N, j^* \sim U_N} [(\pmap^{(2)}, g_o(\pmap^{(2)})) \in R  \Leftarrow (f(\pmap^{(2)}), o) \in Q] \notag\\
        &\leq \Pr_{\pmap^{(1)}, j^*} [(\pmap^{(2)}, g_o(\pmap^{(2)})) \in R  \Leftarrow (f(\pmap^{(2)}), o) \in Q \,|\, \neg \hitI(\pmap^{(1)}, o^*)] + \Pr_{\pmap^{(1)}}[\hitI(\pmap^{(1)}, o^*)] \notag\\ 
        &\leq \Pr_{\pmap^{(1)}, j^*} [(\pmap^{(2)}, g_o(\pmap^{(2)})) \in R  \Leftarrow (f(\pmap^{(2)}), o) \in Q \,|\, \neg \hitI(\pmap^{(1)}, o^*)] + \ps_{k-1} \label{equ:use_hit}\\ 
        &\leq \Pr_{\pmap^{(1)}, j^*} [(j^* =  g_{o^*}(\pmap^{(2)}))  \wedge  \neg \hitJ(\pmap^{(1)}, o^*, j^*) \,|\, \neg \hitI(\pmap^{(1)}, o^*)] \notag \\ 
        &\quad + \Pr_{\pmap^{(1)}, j^*} [ \hitJ(\pmap^{(1)}, o^*, j^*) \,|\, \neg \hitI(\pmap^{(1)}, o^*)] + \ps_{k-1} \notag\\
        &\leq \Pr_{\pmap^{(1)}, j^*} [(j^* =  g_{o^*}(\pmap^{(2)}))  \wedge  \neg \hitJ(\pmap^{(1)}, o^*, j^*) \,|\, \neg \hitI(\pmap^{(1)}, o^*)] + \ps_{k-1} + \negl(\log N) \label{equ:use_obs2}\\
        &\leq \Pr_{\pmap^{(1)}, j^*} [(j^* =  g_{o^*}(\pmap^{(1)})) \,|\, \neg \hitI(\pmap^{(1)}, o^*)]+ \ps_{k-1} + \negl(\log N) \notag\\
        &= \frac{1}{N} + \ps_{k-1} + \negl(\log N) \notag\\
        &= \ps_{k-1} + \negl(\log N). \label{equ:pk}
    \end{align}

    Inequality~\cref{equ:use_hit} uses \cref{lem:hit}, and inequality~\cref{equ:use_obs2} comes from \cref{lemma:nothit_h1h2}. From inequality~\cref{equ:pk}, we have the success probability of any depth-$d$ reduction 
    \begin{equation}\label{equ:final}
    p_d \leq d \cdot \negl(\log N) + p_0 = \negl(\log N).
    \end{equation}
    
    \end{proof}
    
    % The proof of Theorem~\ref{thm:lb:compression} leverages the fact that \PPP instance could have few or unique solutions, while $t$-\PWPP instances always have \emph{abundant} solutions; moreover, the totality of $t$-\PWPP is \emph{robust}, i.e., $t$-\PWPP is total even when a small fraction of pigeons are removed. 
   %  This idea is generalized in a concurrent work by Li~\cite{Li23tfap}, which provides a general framework for showing randomized black-box separation between \TFNP problems with very few or abundant solutions.

    % It is also worth noting that the ``$\negl(\log N)$'' term in the statement of \cref{lemma:nothit_h1h2} (\cref{lemma:useless_h1h2}) is indeed $\Tilde{O}(d/N)$ by examining the proof. By plugging it to the calculation of Inequalities~\eqref{equ:pk} and~\eqref{equ:final}, we can show a quantitatively stronger statement which rules out any depth $O(N^{0.49})$ randomized reductions.
    % \david{Make this claim more formal.}

\section{Relationship to Other Classes}
\label{sec:other-classes}

\subsection{Polynomial Local Search ($\PLS$) and Polynomial Parity Argument ($\PPA$)}

Prior work in the literature has shown that $\PPA^{dt} \not \subseteq \PPP^{dt}$ \cite{Beame95} and $\PLS^{dt} \not \subseteq \PPP^{dt}$ \cite{Goeoes2022}.
In this section, we prove that neither of these classes are contained in $\PAP^{dt}$, recalled here.

\plspap*

Both of these separations use the shared concept of \emph{gluability}, which we first define.
\begin{definition}
    Let $C$ be a conjunction and let $T$ be a decision tree.
    A \emph{completion} of $C$ by $T$ is any conjunction of the form $CC_\ell$ where $\ell$ is any leaf of $T$.
\end{definition}

\begin{definition}
    A query total search problem $R \subseteq \B^n \times O$ is \emph{$(d,t)$-gluable} if for every conjunction $C$ of degree at most $d$ there is a depth $O(d)$ decision tree $T_C$ such that the following holds.
    Let $C_1, C_2, \ldots, C_t$ be any sequence of $t$ consistent conjunctions, and let $C_1', \ldots, C_t'$ be any sequence of $t$ conjunctions chosen so that $C_i'$ is a completion of $C_i$ by $T_{C_i}$.
    If $C_i'$ is non-witnessing for each $i = 1, 2, \ldots, t$ and $C_1'C_2'\cdots C_t'$ is consistent, then $DT(R \restriction \rho(C_1'C_2' \cdots C_t')) \geq d$.
\end{definition}

A weaker notion of gluability was introduced by G\"{o}\"{o}s et al. \cite{Goeoes2022} in the case where $t = 2$, in order to prove $\PLS^{dt} \not \subseteq \PPP^{dt}$, although the idea also implicitly appears in \cite{Beame95}.
We show a new result for gluability related to collision-freedom: if a problem $R$ is gluable \emph{and} it admits a non-witnessing pseudoexpectation operator $\psE$, then one can show automatically that $\psE$ is also collision-free, and \cref{thm:psExp} could be applied.
We prove this now in a strong form.

\begin{definition}
    Let $R \subseteq \B^n \times O$ be a query total search problem, and let $\psE$ be a degree-$d$ pseudoexpectation operator.
    Then $\psE$ is \emph{$\varepsilon$-nonwitnessing for $R$} if the following property holds:
    \begin{itemize}
        \item {\bf $\varepsilon$-Nonwitnessing.} $\psE[C] \leq \varepsilon$ for any degree-$d$ conjunction $C$ witnessing $R$.
    \end{itemize}
    Note that $0$-nonwitnessing is synonymous with $R$-nonwitnessing (cf.~\cref{def:pseudoexpectation}).
\end{definition}

Our goal now is to prove the following theorem.

\begin{theorem}
    \label{thm:gluable-implies-pap}
    Let $R = \set{R_n \subseteq \B^n \times O_n}$ be a query total search problem not contained in $\PPADS^{dt}$, and suppose $t = O(\poly(\log n))$.
    If $R$ is $(p(\log(n)), t)$-gluable for any polynomial function $p$, then $R \not \in t\text{-}\PPP^{dt}$.
\end{theorem}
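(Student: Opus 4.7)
The strategy is to combine three ingredients: (i) the hypothesis $R \not \in \PPADS^{dt}$, which (by the equivalence between $\PPADS^{dt}$ and unary Sherali--Adams proved in~\cite{Goeoes2022}) yields a high-degree $R$-nonwitnessing pseudoexpectation operator $\psE$; (ii) the gluability decision trees $T_C$; and (iii) the structural pigeonhole bound \cref{lem:decision-tree-bound}. The goal is to show that $\psE$, which starts life merely as a nonwitnessing operator, is automatically $(d,t,t{-}1)$-collision-free whenever gluability holds at sufficient depth, at which point \cref{thm:psExp} applied to $\ourPigeon{t}{(t-1)N+1}{N}$ with $\varepsilon = t-1 < M/N$ finishes the job.

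For the main technical step, fix a $t$-witnessing family $\mathcal{F}$ of degree-$d$ conjunctions and consider the expanded family $\mathcal{F}'$ of all completions $CC_\ell$ obtained by running, for each $C \in \mathcal{F}$, the gluability tree $T_C$. Since the leaves of $T_C$ partition $\B^n$, linearity gives $\psE[\mathcal{F}] = \psE[\mathcal{F}']$. Now split $\mathcal{F}' = \mathcal{W} \sqcup \mathcal{N}$ into witnessing and non-witnessing completions. On $\mathcal{W}$, the nonwitnessing property forces $\psE[\mathcal{W}] = 0$. On $\mathcal{N}$, the central claim is that any $t$ completions $C_1', \ldots, C_t' \in \mathcal{N}$ whose product is consistent would give a contradiction: gluability guarantees $DT(R \restriction \rho(C_1'\cdots C_t')) \geq d$, whereas the restriction extends $\rho(C_1 \cdots C_t)$ coming from the original family, which by $t$-witnessing satisfies $DT(R \restriction \rho(C_1\cdots C_t)) \leq D$ for the universal constant $D$. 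By choosing the gluability depth parameter to comfortably exceed $D$ (using the freedom to take $p$ to be any polynomial), these two inequalities are incompatible, so every $t$-subset of $\mathcal{N}$ is jointly inconsistent. Then \cref{lem:decision-tree-bound} (applied at degree $(t-1)d^2$, which is still polylogarithmic in $n$ because $t = O(\poly(\log n))$) yields $\psE[\mathcal{N}] \leq t-1$.

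Combining the two contributions, $\psE[\mathcal{F}] \leq t-1$ for every $t$-witnessing family, so $\psE$ is $(d, t, t-1)$-collision-free. Since $t-1 < ((t-1)N+1)/N$, \cref{thm:psExp} rules out any depth-$d$ decision-tree reduction from $R$ to $\tightPigeon{t}$, and hence $R \not \in t\text{-}\PPP^{dt}$, as desired.

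\textbf{Anticipated obstacle.} The delicate point is the parameter bookkeeping: we must pick the degree of $\psE$, the depth $d$ chosen for collision-freedom, the depth $O(d)$ produced by gluability, and the constant $D$ in $t$-witnessing so that (a) \cref{lem:decision-tree-bound} applies, (b) the gluability lower bound genuinely contradicts the $t$-witnessing upper bound on every completion tuple, and (c) the resulting pseudoexpectation is still of polylogarithmic degree so that $R \not\in \PPADS^{dt}$ supplies an operator of the required degree. The hypothesis $t = O(\poly(\log n))$ and the freedom to choose $p$ arbitrary polynomial in the gluability assumption are precisely what make this balancing act feasible.
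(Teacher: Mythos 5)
Your outline is the paper's proof: expand each conjunction via the gluability tree, split the completions into witnessing and non-witnessing pieces, kill the non-witnessing part with \cref{lem:decision-tree-bound}, and then invoke \cref{thm:psExp} with $\varepsilon$ slightly below $M/N$. But there is a concrete gap in step (i) that propagates into your claim $\psE[\mathcal{W}]=0$. The equivalence from \cite{Goeoes2022} (\cref{thm:sa-theorem}) characterizes $\PPADS^{dt}$ by $\ZZ$-Sherali--Adams proofs with \emph{both} quasipolynomial degree \emph{and} quasipolynomial magnitude. Consequently, $R\notin\PPADS^{dt}$ only rules out proofs with bounded magnitude, and the duality one can actually extract (\cref{thm:HKT24}) gives a degree-$d'$, $1/k$-\emph{non}witnessing operator where $k$ is the magnitude bound one is refuting --- that is, an $\varepsilon$-nonwitnessing pseudoexpectation with $\varepsilon = 1/n^{\log^\alpha n}$, not an exactly $R$-nonwitnessing one. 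An exactly nonwitnessing operator would require ruling out $\ZZ$-SA proofs of arbitrary magnitude at some degree, which is a strictly stronger hypothesis than $R\notin\PPADS^{dt}$.

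This is fixable, and the fix is exactly the paper's bookkeeping: bound $\psE[\mathcal{W}]\le|\mathcal{W}|\cdot\varepsilon\le n^{O(d)}\varepsilon$ (the completed conjunctions have width $O(d)$, so there are at most $n^{O(d)}$ of them) and choose the degree/magnitude parameters in the $\PPADS^{dt}$ lower bound so that $\varepsilon n^{O(d)} = o(1/N)$, where $N$ is the number of holes in the target $\tightPigeon{t}$ instance. This yields collision-freedom with $\varepsilon' = t-1 + o(1/N)$ rather than exactly $t-1$, which still satisfies the hypothesis $\varepsilon' < M/N = (t-1)+1/N$ of \cref{thm:psExp}. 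The rest of your argument --- the identity $\psE[\mathcal{F}]=\psE[\mathcal{F}']$ from the leaf-partition, the joint inconsistency of every $t$-subset of $\mathcal{N}$ from the tension between the gluability lower bound and the $t$-witnessing upper bound, and the application of \cref{lem:decision-tree-bound} at degree $(t-1)d^2 = \poly(\log n)$ --- matches the paper.
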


To prove this theorem, we will need to define the \emph{Sherali-Adams} proof system.
To avoid introducing proof-complexity preliminaries, we will state the definition of the proof system directly in terms of the total search problems $R$, although we refer the interested reader to \cite{Goeoes2022} for technical details.
If $R \subseteq \B^n \times O$ is a query total search problem in $\TFNP$, then we define a related unsatisfiable CNF formula $\cnf(R)$ that encodes the (false) statement ``$R$ is not total''.
Formally, \[\cnf(R) := \bigwedge_{o \in O} \bigwedge_{\ell \in L_1(T_o)} \neg C_{\ell}\]
where $\set{T_o}_{o \in O}$ is the family of $\poly(\log(n))$-depth decision trees witnessing solutions of $R$, and $L_1(T_o)$ is the set of $1$-leaves of the decision tree $T_o$.
We will be interested in \emph{refutations} of the formula $\cnf(R)$ --- in other words, proofs of the tautology ``$R$ is total''.
Recall that a \emph{conical junta} is any non-negative linear combination of conjunctions, that is an expression of the form $J = \sum_D \lambda_D D$, where each $D$ is a conjunction and $\lambda_D > 0$.
The \emph{degree} of $J$ is the maximum degree of any conjunction $D$ appearing in the sum.
The \emph{magnitude} of $J$, denoted $||J||$, is $\max_D \lambda_D$.

\begin{definition}
    Let $R \in \TFNP^{dt}$ be a total query search problem.
    A \emph{$\ZZ$-Sherali-Adams proof of totality for $R$} is a sequence of conical juntas $\Pi = (J, J_\ell)_{o \in O, \ell \in L_1(T_o)}$ over the variables of $R$ such that
    \[-1 = \sum_{o \in O} \sum_{\ell \in L_1(T_o)} -J_\ell C_\ell + J \]
    where we are working in multilinear polynomial algebra.
    The \emph{degree} of the refutation $\Pi$ is $\deg(\Pi):= \max \set{\deg J_o C_o}_{o} \cup \set{\deg J}.$
    The \emph{magnitude} of the refutation $\Pi$ is $||\Pi|| := \max \set{||J_o||}_{o} \cup \set{J}$. 
\end{definition}

G\"{o}\"{o}s et al. \cite{Goeoes2022} proved that that $R \in \PPADS^{dt}$ if and only if it admits Sherali-Adams proofs with low degree and magnitude.

\begin{theorem}[\cite{Goeoes2022}]
    \label{thm:sa-theorem}
   Let $R \in \TFNP^{dt}$ be a total query search problem.
   Then $R \in \PPADS^{dt}$ if and only if for some constant $c$, $R$ admits a $\log^c(n)$-degree, $n^{\log^c(n)}$-magnitude $\ZZ$-Sherali-Adams proof of totality.
\end{theorem}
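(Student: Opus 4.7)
\textbf{Proposal for proving Theorem~\ref{thm:sa-theorem}.} The statement is an equivalence, and I would attack it as two separate implications, each exploiting the ``counting / flow conservation'' nature of $\PPADS$. For background intuition, recall that $\PPADS$ is defined by the Sink-of-Line (or Sink-of-Dag) problem: given implicit predecessor/successor circuits on an exponentially large vertex set together with a known source, find a sink. The totality argument is a balancing argument: if every non-source, non-sink vertex has exactly one in-edge and one out-edge, then the ``excess'' at the source must be absorbed by a sink. The plan is to translate this balancing argument directly into a $\ZZ$-Sherali-Adams identity, and conversely to read such an identity as the definition of a Sink-of-Line graph.

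\textbf{Direction $(\Leftarrow)$: Sherali-Adams $\Rightarrow$ $\PPADS^{dt}$.} Suppose we are given a proof $-1 = J - \sum_{o \in O}\sum_{\ell \in L_1(T_o)} J_\ell C_\ell$ of degree $d = \log^c n$ and magnitude $m = n^{\log^c n}$, where each $J_\ell = \sum_D \lambda_{\ell,D} D$ and $J = \sum_D \mu_D D$ are conical juntas. I would build a Sink-of-Line instance whose vertex set is indexed by tuples $(D, i)$, where $D$ is a conjunction appearing in some junta and $i$ is an integer index bounded by the coefficient of $D$. The source corresponds to the constant $-1$ side. The successor/predecessor pointers are defined monomial-by-monomial: each occurrence of a monomial $D$ on the right-hand side (inside $J$) is matched, via the polynomial identity, to an equal number of occurrences on the left-hand side (inside some $-J_\ell C_\ell$), with the ``unmatched excess'' located precisely on the witnessing conjunctions $J_\ell C_\ell$. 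Since all coefficients are at most $m$ and the number of distinct monomials is at most $n^{d}$, the vertex set has size $m \cdot n^{d} = n^{O(\log^c n)}$, so indices have $\mathrm{polylog}(n)$ bit length and the pointer circuits are computable by $\mathrm{polylog}(n)$-depth decision trees. A sink is a vertex on a $-J_\ell C_\ell$ term whose defining conjunction $C_\ell$ is satisfied by the input, and decoding such a sink recovers a solution to $R$. The main routine step here is the explicit bookkeeping for the matching, which I would carry out following the standard ``monomial matching'' technique.

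\textbf{Direction $(\Rightarrow)$: $\PPADS^{dt}$ $\Rightarrow$ Sherali-Adams.} Suppose $R$ reduces to Sink-of-Line in depth $d = \log^c n$ via implicit pointer circuits on a vertex set $V$ of size $n^{O(d)}$. For each vertex $v \in V$, let $T_v^{\mathrm{succ}}, T_v^{\mathrm{pred}}$ be the decision trees computing its successor and predecessor, and let $\psi_v(x)$ be the indicator polynomial ``$v$ is not the source, and $v$'s predecessor points back to $v$.'' Each $\psi_v$ can be written as a conical junta of degree $O(d)$ since it is a conjunction of outcomes of two $d$-depth decision trees. The key identity I would establish is a telescoping one:
\[
1 \;=\; \sum_{v \text{ sink}} \psi_v(x) \;+\; \bigl(\text{terms vanishing by flow conservation}\bigr),
\]
where the ``flow conservation'' terms express, for each non-source non-sink vertex, that the in-edge indicator equals the out-edge indicator. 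Each sink vertex corresponds to a witnessed solution to $R$, so the right-hand side has the form $\sum_{o,\ell} J_\ell C_\ell + J$ with $J$ nonnegative, giving the required proof. Degree is controlled by the decision-tree depth $d$, and magnitude is controlled by $|V|$.

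\textbf{Main obstacle.} The hard part will be the $(\Rightarrow)$ direction, and specifically making the ``flow conservation'' cancellation an \emph{exact} polynomial identity in multilinear algebra rather than a pointwise inequality. Naively, one would want to sum $\psi_v - \psi_{\mathrm{succ}(v)}$ over all $v$ and hope for a telescope, but $\mathrm{succ}(v)$ depends on $v$ in a data-dependent way, so the cancellation is not syntactic. I plan to address this by grouping terms not by vertex but by \emph{directed edge}: for each ordered pair $(u,v)$, form an indicator junta for ``edge $u \to v$ is used,'' so that every such junta appears once with a $+$ sign (when one sums over $v$'s predecessor relations) and once with a $-$ sign (when one sums over $u$'s successor relations), producing a syntactic cancellation in multilinear algebra. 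Controlling the magnitude requires that each edge indicator appear with coefficient $1$, which follows from the uniqueness of predecessors/successors in Sink-of-Line; the total magnitude is then bounded by the number of possible edges, $|V|^2 = n^{O(\log^c n)}$, matching the stated bound.
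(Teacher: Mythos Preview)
The paper does not prove this theorem; it is quoted from \cite{Goeoes2022} and used as a black box, so there is no proof here to compare your proposal against.

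For what it is worth, your plan matches the argument in \cite{Goeoes2022}. Both directions are driven by exactly the flow-conservation reading of $\PPADS$ you describe, and your edge-indexed fix for the $(\Rightarrow)$ direction is the right way to make the telescope a syntactic multilinear identity rather than a pointwise one: summing the edge indicators once as in-edges and once as out-edges cancels identically, leaving $-1 = -\sum(\text{sink indicators}) + \sum(\text{extra-source indicators})$, with the extra-source part supplying the nonnegative junta $J$.

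One point in your $(\Leftarrow)$ sketch deserves more care than you indicate. You write that monomials ``inside $J$'' are to be matched with monomials ``inside some $-J_\ell C_\ell$,'' but once you expand a conical junta into monomials you get terms of both signs on \emph{each} side (a width-$w$ conjunction expands into $2^w$ signed monomials), so the matching is not simply $J$-side versus witnessing-side. More seriously, the polynomial identity only tells you that monomial coefficients agree; it does not by itself hand you an \emph{input-independent} bijection between positive and negative occurrences, which is what the successor/predecessor pointers of a Sink-of-Line instance must be. The way \cite{Goeoes2022} handles this is to stay at the level of conjunctions and use the split $C = Cx_i + C(1-x_i)$ to refine all conjunctions until cancellations become syntactic; this keeps every coefficient nonnegative and makes the graph explicit. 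Your phrase ``standard monomial matching technique'' suggests you have something like this in mind, but it is where the real work in this direction lies.
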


One can show that \emph{any} lower bound against degree-$d$ Sherali-Adams with small magnitude implies the existence of a weak pseudoexpectation operator (in other words, $\varepsilon$-nonwitnessing operators are \emph{complete} for unary Sherali-Adams lower bounds).
The proof of this follows the usual proof of completeness for Sherali-Adams proofs via convex duality (see e.g.~\cite{Fleming19}), and was first observed by Hub\'{a}\v{c}ek, Khaniki, and Thapen \cite{HubacekKT24}.
\begin{theorem}[\cite{HubacekKT24}]
    \label{thm:HKT24}
    If there is no degree-$d$ $\ZZ$-Sherali-Adams proof of totality for $R$ of magnitude $\leq k$, then there is a degree-$d$, $1/k$-nonwitnessing pseudoexpectation operator for $R$.
\end{theorem}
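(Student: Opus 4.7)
The plan is a standard LP duality argument; I prove the contrapositive, showing that if no degree-$d$, $1/k$-nonwitnessing pseudoexpectation exists, then there is a degree-$\leq d$ $\ZZ$-Sherali-Adams proof of totality for $R$ of magnitude $< k$.

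First, encode the existence of a degree-$d$, $1/k$-nonwitnessing pseudoexpectation as the feasibility of a finite linear program. The decision variables are $y_S \in \RR$ for each $S \subseteq [n]$ with $|S| \leq d$, standing for $\psE[\prod_{i \in S} x_i]$, and the constraints, expanded in this monomial basis, are: normalization $y_\emptyset = 1$; nonnegativity $\psE[C] \geq 0$ for every degree-$\leq d$ conjunction $C$; and the $1/k$-bound $\psE[C_\ell] \leq 1/k$ for every witnessing conjunction $C_\ell$ of degree $\leq d$ appearing in $\cnf(R)$. Both sets of inequalities are finite, so this is a genuine finite LP.

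Second, assume this LP is infeasible and apply Farkas' lemma. The dual certificate furnishes a scalar $\alpha \in \RR$ (dual to the normalization equation) together with nonnegative multipliers $\lambda_C \geq 0$ and $\mu_\ell \geq 0$ such that, as multilinear polynomials,
\[
\sum_\ell \mu_\ell\, C_\ell \;=\; \alpha \;+\; \sum_C \lambda_C\, C,
\]
while simultaneously $\alpha > \tfrac{1}{k}\sum_\ell \mu_\ell$. In particular $\alpha > 0$. Dividing through by $\alpha$ and rearranging gives
\[
-1 \;=\; -\!\sum_\ell J_\ell\, C_\ell \;+\; J, \qquad J_\ell := \mu_\ell/\alpha, \quad J := (1/\alpha)\!\sum_C \lambda_C\, C,
\]
which is literally a degree-$\leq d$ $\ZZ$-Sherali-Adams proof of totality for $R$ (each $J_\ell$ is a constant, hence a conical junta, and $J$ is a degree-$\leq d$ conical junta).

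Third, bound the magnitude $\|\Pi\| = \max(\max_\ell J_\ell,\,\|J\|)$. The leaf coefficients satisfy $J_\ell \leq \sum_{\ell'} J_{\ell'} < k$ directly from the Farkas inequality. For $\|J\|$, evaluate the polynomial identity above at an arbitrary $x \in \B^n$: since each $C_\ell(x) \in \{0,1\}$ and $J_\ell \geq 0$,
\[
J(x) \;=\; \sum_\ell J_\ell\, C_\ell(x) \;-\; 1 \;\leq\; \sum_\ell J_\ell - 1 \;<\; k-1.
\]
Because $J = \sum_C (\lambda_C/\alpha)\, C$ is a conical junta and every conjunction $C$ is satisfied by some assignment $x$ (giving $J(x) \geq \lambda_C/\alpha$), we conclude $\|J\| = \max_C \lambda_C/\alpha \leq \max_x J(x) < k$. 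Hence the refutation has magnitude $< k$, contradicting the hypothesis and completing the proof. The only subtle step is this pointwise-evaluation trick, which leverages the boolean structure of the leaf indicators to convert the Farkas bound on the \emph{sum} $\sum_\ell \mu_\ell/\alpha$ into a bound on the \emph{maximum coefficient} of the auxiliary junta $J$; everything else is standard bookkeeping with LP duality.
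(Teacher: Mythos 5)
The paper cites this theorem from Hub\'a\v{c}ek, Khaniki, and Thapen and does not reproduce the proof, so your argument can only be assessed on its own terms. The high-level shape — encode the pseudoexpectation conditions as a finite LP, apply Farkas, read the dual certificate as a Sherali--Adams refutation, and bound $\|J\|$ by evaluating the resulting polynomial identity pointwise over $\B^n$ — is exactly right, and the pointwise-evaluation trick is the correct way to convert the Farkas bound on $\sum_\ell \mu_\ell/\alpha$ into a bound on individual coefficients of $J$.

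There is, however, a genuine gap in the LP you set up. You only impose $\psE[C_\ell]\leq 1/k$ for the leaf conjunctions $C_\ell$ appearing in $\cnf(R)$, whereas the definition of a $1/k$-nonwitnessing operator requires $\psE[C]\leq 1/k$ for \emph{every} degree-$\leq d$ conjunction $C$ that witnesses $R$ — and these are not the same family. A witnessing $C$ of degree $\leq d$ typically does not coincide with any leaf $C_\ell$ (it may not even fix a full root-to-leaf path of $T_o$), and $\psE[C_\ell]\leq 1/k$ for all $\ell$ does \emph{not} imply $\psE[C]\leq 1/k$: using $C = \sum_{\ell\in L_1(T_o)} C\,C_\ell$ (the $L_0$ products vanish because $C$ witnesses $o$), nonnegativity only gives $\psE[C]\leq |L_1(T_o)|/k$, which can far exceed $1/k$. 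So in the contrapositive, nonexistence of a $1/k$-nonwitnessing operator does not force your restricted LP to be infeasible — it may be feasible via some $\psE$ that satisfies the leaf constraints but violates the bound on another witnessing conjunction. The repair is to constrain all degree-$\leq d$ witnessing $C$ in the LP; Farkas then yields $-1 = -\sum_{\text{wit }C}\nu_C C + J$ with $\sum_C \nu_C < k$, and one regroups via $C = \sum_\ell C\,C_\ell$ to obtain genuine conical juntas $J_\ell = \sum_C \nu_C C$ rather than scalars. Your pointwise-evaluation bound still delivers $\|J\|,\|J_\ell\| < k$, but $\deg(J_\ell C_\ell)$ can now be $d + \deg C_\ell$ rather than $d$ — a bookkeeping loss you need to either track or absorb into the parameter regime, and one that your current write-up silently avoids only because it replaces the correct constraint set with a strictly weaker one.
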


We can now prove \cref{thm:gluable-implies-pap}.

\begin{proof}[Proof of \cref{thm:gluable-implies-pap}]
    Let $R \subseteq \B^n \times O$ be any $(p(\log n), t)$-gluable total search problem not contained in $\PPADS$.
    Suppose by way of contradiction that $R \in t\text{-}\PPP^{dt}$, and so $R$ has a depth-$d$ reduction to $\ourPigeon{t}{(t-1)N+1}{N}$ where $d = \log^{C_0} n$ for some universal constant $C_0$ and $N = n^{\log^{C_0} n}$.
    We use the fact that $R$ is $(d^*, t)$-gluable for $d^* = \log^{2C_0} n = \omega(d)$.
    Since $R \not \in \PPADS^{dt}$, it follows from \cref{thm:sa-theorem} that for \emph{every} positive constant $\alpha$, there is no degree-$\log^{\alpha} n$, $\ZZ$-Sherali-Adams proof of totality for $R$ with magnitude $n^{\log^{\alpha} n}$.
    By \cref{thm:HKT24}, the non-existence of a $\ZZ$-Sherali-Adams proof implies that there is a degree-$\log^{\alpha} n$, $1/n^{\log^{\alpha}n}$-nonwitnessing pseudoexpectation operator $\psE$ for $R$.
    With this in mind, let $\psE$ be a degree-$t\log^{4C_0} n$, $\varepsilon$-nonwitnessing pseudoexpectation operator for $R$ with $\varepsilon = 1/n^{t\log^{4C_0} n}$.
    Our goal is to show that $\psE$ is $(d, t, t-1 + o(1/N))$-collision-free, which will contradict the existence of the assumed reduction to $\tightPigeon{t}$ by \cref{thm:psExp}.
    
    Let $\mathcal{F}$ be any $t$-witnessing family of width $\leq d$ conjunctions, and let $\mathcal{C}(\mathcal{F})$ be the family obtained by replacing each conjunction $C$ in $\mathcal{F}$ with all of its completions $CC_\ell$ for $\ell \in L(T_{C})$ guaranteed by $(d^*, t)$-gluability.
    We first observe that $\psE[\mathcal{C}(\mathcal{F})] = \psE[\mathcal{F}]$, noting that the degree of $\psE$ is $\omega(d)$ and therefore both of these expressions are well-defined.
    To see this, consider any conjunction $C \in \mathcal{F}$.
    If $T_C$ is the decision tree completing $C$ in the definition of gluability, then an easy induction on the depth of the decision tree shows that \[1 = \sum_{\ell \in L(T_C)} C_\ell\] where the equality is between multilinear polynomials.
    This implies that $C = \sum_{\ell \in L(T_C)} CC_\ell$ and thus \[ \psE[C] = \sum_{\ell \in L(T_C)} \psE[CC_\ell].\]
    It immediately follows that $\psE[\mathcal{F}] = \psE[\mathcal{C}(F)]$, since $\mathcal{C}(\mathcal{F})$ is obtained by replacing each conjunction $C$ with its completions.
    This means that it suffices to bound the weight of $\mathcal{C}(\mathcal{F})$ instead of $\mathcal{F}$.
    Let $W \subseteq \mathcal{C}(\mathcal{F})$ be the collection of all conjunctions in $\mathcal{C}(\mathcal{F})$ that are themselves witnessing, and let $X = \mathcal{C}(\mathcal{F}) \setminus W$ be the remaining conjunctions, and note that $\psE[\mathcal{C}(\mathcal{F})] = \psE[X] + \psE[W]$. 
    
    First, let's bound the weight of $\psE[W]$.
    Since every conjunction $C' \in W$ is witnessing, it follows that $\psE[C'] \leq \varepsilon$ since $\psE$ is $\varepsilon$-nonwitnessing for $R$.
    This means that $\psE[W] \leq |W|\varepsilon \leq \varepsilon n^{O(d)}$, since every conjunction in $W$ has width at most $O(d)$.

    We now bound the weight of $X$.
    Let $C_1, C_2, \ldots, C_t$ be any sequence of distinct conjunctions chosen from $X$.
    Since $\mathcal{F}$ is $t$-witnessing, $\mathcal{C}(\mathcal{F})$ is also $t$-witnessing, and this implies that if $C' = C_1C_2 \ldots C_t$ is consistent, then $DT(R \restriction \rho(C')) = O(d)$.
    However, by the definition of $(d^*, t)$-gluability, whenever $C'$ is consistent and each $C_i$ is non-witnessing, then $DT(R \restriction \rho(C')) \geq d^*$.
    Since $d^* = \omega(d)$, these two facts together imply that if $C_1, C_2, \ldots, C_t \in X$ are distinct conjunctions, then $C_1C_2 \cdots C_t$ must be inconsistent, since no individual conjunction $X$ can be witnessing.
    Therefore, all such $C'$ composed of conjunctions from $X$ are inconsistent, and so applying \cref{lem:decision-tree-bound} we observe that $\psE[X] \leq t-1$.
    
    Combining the two weight bounds yields $\psE[\mathcal{F}] = \psE[\mathcal{C}(\mathcal{F})] = \psE[W] + \psE[X] = t-1 + \varepsilon n^{O(d)}$.
    Our choice of parameters then implies that \[\varepsilon n^{O(d)} = \frac{n^{O(\log^{C_0} n)}}{n^{t\log^{4C_0} n}} = o\left(\frac{1}{N}\right),\]
    where the last equality follows since $N = n^{\log^{C_0} n}$.
    This means that the pseudoexpectation $\psE$ for $R$ is a $(d, t, t-1 + o(1/N))$-collision-free pseudoexpectation operator.
    This is a contradiction to \cref{thm:psExp}, and it follows that $R \not \in t\text{-}\PPP^{dt}$.
\end{proof}

In the remainder of this section, we prove the required gluability results for $\PPA$ and $\PLS$.
\paragraph{$\PPA^{dt}$ is gluable.}

We first introduce the defining problem for the class $\PPA^{dt}$, called $\leaf$.
\begin{definition}[$\leaf_n$]
    This problem is defined on a set of $n$ nodes, denoted by $[n]$, where the node $1$ is ``distinguished''.
    For input, we are given a neighbourhood $N_u \subseteq [n]$ of size $|N(u)| \leq 2$ for each node $u \in [n]$.
    Given this list of neighbourhoods, we create an undirected graph $G$ where we add an edge $uv$ if and only if $u \in N(v)$ and $v \in N(u)$. We say $u \in [n]$ is a \emph{leaf} if it has in-degree $1$ and out-degree $0$.
    The goal of the search problem is to output either
    \begin{enumerate}
	   \item $1$, if $1$ is not a leaf in $G$, or \hfill \emph{(no distinguished leaf)}
	   \item $u \neq 1$, if $u$ is a leaf in $G$. \hfill \emph{(proper leaf)}
    \end{enumerate}
    The class $\PPA^{dt}$ contains all query total search problems with $\poly(\log(n))$-complexity reductions to $\leaf$.
\end{definition}

The seminal work by Beame et al. \cite{Beame95} proved that $\leaf \not \in \PPP^{dt}$, which implies that $\leaf \not \in \PPADS^{dt}$.
Therefore, by \cref{thm:gluable-implies-pap}, to prove $\PPA^{dt} \not \subseteq \PAP^{dt}$, we need to show that $\leaf_n$ is $(\poly(\log n), \poly(\log n))$-gluable.

\begin{lemma}
    $\leaf_n$ is $(p(\log(n)), p(\log(n)))$-gluable for any polynomial $p$.
\end{lemma}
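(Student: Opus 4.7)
Given a conjunction $C$ of width at most $d = p(\log n)$, let $S(C)$ denote the (at most $2d$) vertices mentioned in $C$ --- either as the vertex whose neighborhood is partially specified, or as a claimed neighbor. I would construct $T_C$ to perform a two-round closure: first query all unqueried literals in $N(v)$ for every $v \in S(C)$, and then query all unqueried literals in $N(u)$ for every newly revealed $u \in N(v) \setminus S(C)$. Since each vertex has at most $2$ neighbors, the total number of additional queries is $O(d)$, so $T_C$ has depth $O(d)$ as required. The key property after $T_C$ is that any completion $C'$ exactly determines $\deg_G(v)$ for every $v \in S(C)$: we know $N(v)$, and for each $u \in N(v)$ we know whether $v \in N(u)$.

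Next, I would observe that if a completion $C'$ is non-witnessing, then the revealed subgraph on $S(C)$ is structurally constrained: for every $v \in S(C) \setminus \{1\}$ we have $\deg_{C'}(v) \in \{0, 2\}$, and if $1 \in S(C)$ then $\deg_{C'}(1) \in \{0, 1, 2\}$ (degree $1$ only being allowed at the distinguished leaf). In other words, restricted to $S(C)$ the graph revealed by $C'$ is a disjoint union of isolated vertices, interior vertices of paths, and cycle vertices, with the sole possible degree-$1$ vertex being node $1$. For a sequence of consistent completions $C_1', \dots, C_t'$ with each non-witnessing, the combined revealed subgraph on $V^\ast := \bigcup_i S(C_i)$ inherits the same structure by consistency: the degree information contributed by each $C_i'$ on $S(C_i)$ is preserved, so no vertex in $V^\ast \setminus \{1\}$ has degree $1$ under $\rho := \rho(C_1' \cdots C_t')$. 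The queried vertices outside $V^\ast$ lie in the ``boundary'' $B := \bigcup_i N(S(C_i)) \setminus V^\ast$; these have known neighborhoods but their degrees are not forced, and $|V^\ast| + |B| = O(td)$.

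To establish $DT(\leaf \restriction \rho) \ge d$, I would exhibit an adversary strategy. Since $n \gg td$, the adversary has ample fresh vertices to complete $\rho$ into a valid $\leaf$ instance: pair up the loose endpoints of the revealed structure (the distinguished leaf $1$ together with any single path endpoint forced by the revealed subgraph, if any) by extending through fresh vertices to form a path, set each boundary vertex $u \in B$ so that its final degree is $0$ or $2$ by adjoining fresh neighbors if necessary, and leave all other fresh vertices as isolated singletons (or pair them off into disjoint $2$-cycles, etc.). When a decision tree of depth $< d$ queries a new literal, the adversary answers according to a partial plan it builds online: each response extends the committed partial graph by at most two vertices, maintaining the invariant that only node $1$ is a degree-$1$ vertex in the partially-committed graph. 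Since fewer than $d$ new queries are made, the touched portion of the plan involves $O(d)$ fresh vertices while $\Omega(n)$ vertices remain free, so the adversary can always consistently respond without creating a new proper leaf or falsifying leafhood of $1$. Hence no depth-$(d-1)$ decision tree can solve the restricted problem.

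The main obstacle, and the place where care is required, is verifying that gluing preserves non-witnessing at the boundary $B$. A boundary vertex $u \in B$ has $N(u)$ queried by some $C_i'$, but its degree depends on whether various neighbors $w \in N(u)$ reciprocate via $w \in N(u)$; a later $C_j'$ might query exactly those reciprocation variables and pin down $\deg(u) = 1$. The two-round closure is designed precisely to avoid this: any $w \in N(u)$ with $w \in S(C_i)$ has $N(w)$ already queried within $C_i'$, so the only ambiguity in $\deg(u)$ comes from neighbors $w \in N(u)$ that lie entirely outside $V^\ast$, which are not touched by any $C_j'$. Thus boundary degrees that are forced by $\rho$ are already forced within a single $C_i'$, and non-witnessing of each $C_i'$ transfers to non-witnessing of the union. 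Combined with the adversary argument above, this establishes $(d, t)$-gluability of $\leaf_n$ for $d = p(\log n)$ and $t = p(\log n)$.
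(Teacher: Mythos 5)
Your proof takes essentially the same overall approach as the paper's: design a closure $T_C$, argue that non-witnessing transfers from the individual completions to the union, and give an adversary argument for the residual decision-tree complexity of $\leaf\restriction\rho(C')$. Your two-round closure is a strengthening of the paper's one-round closure (the paper's $T_C$ only queries $N(v)$ for $v\in N(u)$ for each $u$ queried by $C$, with no second round), and your explicit $V^\ast$/$B$ decomposition is more careful than the paper's terse claim that ``the fact that $u$ is a solution must have been witnessed by some $C_i'$.'' The adversary strategies match in spirit.

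There is, however, a gap in the paragraph you yourself flag as the crux. Your claim that ``boundary degrees that are forced by $\rho$ are already forced within a single $C_i'$'' does not follow from the two-round closure. Take $u\in B$ with $N(u)$ revealed only in round 2 of $T_{C_i}$, so $u\notin S(C_i)$. Suppose $N(u)=\{v,w\}$ with $v\in S(C_i)$ (so $T_{C_i}$ knows whether $u\in N(v)$) but with $w\notin S(C_i)$ and $w$ not reached by round 1 or 2 of $T_{C_i}$. Then $C_i'$ leaves $\deg(u)$ undetermined. Now suppose $w\in S(C_j)$ for some $j\neq i$: then $T_{C_j}$ queries $N(w)$ in round 1, and $C_j'$ can reveal $u\notin N(w)$. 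The union $\rho(C_i'C_j')$ then pins $\deg(u)=1$ even though neither $C_i'$ nor $C_j'$ alone determines it, so $C'$ becomes witnessing. Your argument rules this out by asserting that the only remaining ambiguity for $u$ comes from neighbors $w\in N(u)$ that ``lie entirely outside $V^\ast$'' and hence ``are not touched by any $C_j'$'' --- but $w$ can perfectly well lie in $V^\ast$ via $S(C_j)$ for a \emph{different} index $j$, or lie outside $V^\ast$ yet have $N(w)$ touched by round 2 of some other $T_{C_j}$. Deepening the closure by any constant number of rounds only pushes the boundary out, not eliminate it. So you need an additional argument (for instance, structural consequences of the ``degree $\in\{0,2\}$ on $S(C_i)$'' constraint on how the revealed path/cycle segments from different $C_i'$ can meet) to genuinely rule out this gluing. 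I'd encourage you to treat this as a real obstacle rather than as resolved; the paper's own proof of this lemma is no more detailed at exactly this step, so closing this gap carefully would actually improve on the source.
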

\begin{proof}
    Let $d, t = \poly(\log(n))$.
    Let $C$ be any conjunction of degree $d$ over the variables of $\leaf_n$.
    The decision tree $T_C$ completing $C$ does the following: if $C$ ever queries a node $u \in [n]$, receiving a neighbourhood $N(u)$, $T_C$ queries the nodes in $N(u)$ as well.
    In total, this requires $O(d)$ more queries.
    Now, let $C_1, C_2, \ldots, C_t$ be any sequence of $t$ consistent conjunctions, and let $C_1', C_2', \ldots, C_t'$ be any sequence of $t$ conjunctions chosen so that $C_i'$ is a completion of $C_i$ by $T_{C_i}$.
    Assume that these completions are all consistent and non-witnessing, and let $C' = C_1'C_2' \cdots C_t'$.
    First suppose by way of contradiction that $C'$ witnesses a $\leaf_n$ solution $u$.
    But in either case, the fact that $u$ is a solution must have been witnessed by some $C_i'$ in the sequence, since we have explicitly queried all the neighbourhoods of nodes in each $C_i$.
    This is a contradiction, and thus $C'$ must be non-witnessing.

    Let us now show $\leaf_n \restriction \rho(C')$ has large decision tree depth.
    We can give a simple adversary argument as follows. 
    Let $U$ be the set of all nodes currently queried --- initially, $U$ contains all nodes queried by $C'$, and thus $|U| = \poly(\log(n))$.
    Say a node $u$ is on the \emph{boundary} of $U$ if it has not been queried, but appears as a neighbour of a queried node.
    Consider a decision tree $T$ querying nodes in $\leaf_n$.
    If $T$ queries a node $v$ not in the boundary of $U$, then output $N(v) = \emptyset$.
    Otherwise, suppose $T$ queries a node $v$ in the boundary of $U$.
    If two nodes $\set{u_1, u_2} \subseteq U$ have $v$ in their neighbour set, then set $N(v) = \set{u_1, u_2}$.
    Otherwise, if just one node $u \in U$ has $v$ in its neighbour set, then set $N(v) = \set{u, w}$, where $w$ is any node not in $U$ nor the boundary of $U$.
    Since the degree of every node is $\leq 2$ and $|U| = \poly(\log(n))$, we can clearly continue this adversary strategy for $\Omega(n)$ queries.
    Therefore $DT(\leaf_n \restriction \rho(C')) = \Omega(n)$, and thus $\leaf_n$ is $(\poly(\log(n)), \poly(\log(n)))$-gluable.
\end{proof}

\begin{corollary}
    $\PPA^{dt} \not \subseteq \PAP^{dt}$.
\end{corollary}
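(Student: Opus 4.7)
The plan is to apply \cref{thm:gluable-implies-pap} directly to the $\PPA^{dt}$-complete problem $\leaf_n$. Two hypotheses must be verified. First, that $\leaf_n \not\in \PPADS^{dt}$: this follows from the classical separation of Beame et al.\ that $\leaf \not\in \PPP^{dt}$, combined with the standard inclusion $\PPADS^{dt} \subseteq \PPP^{dt}$. Second, that $\leaf_n$ is $(p(\log n), t)$-gluable for every polynomial $p$ and the appropriate range of $t$: this was precisely the content of the lemma just proved, via the adversary argument that simultaneously reveals all neighbourhoods of every queried node.

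With both hypotheses in hand, \cref{thm:gluable-implies-pap} yields $\leaf_n \not\in t\text{-}\PPP^{dt}$ for the allowed range of $t$. Since $\PAP^{dt}$ is by definition the class of query total search problems efficiently black-box reducible to $n\text{-}\tightPigeon{}$, and since $\leaf$ is complete for $\PPA^{dt}$ under polylog-complexity decision-tree reductions, the containment $\PPA^{dt} \subseteq \PAP^{dt}$ would supply an efficient decision-tree reduction from $\leaf_n$ to some $t\text{-}\tightPigeon{}$ with $t$ in the allowed range, contradicting what we just established. Hence $\PPA^{dt} \not\subseteq \PAP^{dt}$.

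Essentially all of the real technical work has already been discharged: the adversary argument in the gluability lemma and the pseudoexpectation machinery underlying \cref{thm:gluable-implies-pap}. The only subtlety to watch is the parameter alignment---one should verify that the gluability of $\leaf$ at the (polynomially blown-up) instance scale induced by a hypothetical reduction is still sufficient to supply the $(d^*, t)$-gluability called for in the proof of \cref{thm:gluable-implies-pap}. This is automatic, since the preceding lemma produces gluability at polylog width and collision number at every input scale, and the reduction blows up the instance size by at most a polynomial factor. The only genuine obstacle, had it not been handled upstream, would be the construction of a sufficiently high-degree non-witnessing pseudoexpectation for $\leaf_n$, but this is exactly where the combination of gluability with the Sherali--Adams/pseudoexpectation duality of \cref{thm:sa-theorem} and \cref{thm:HKT24} does the heavy lifting inside \cref{thm:gluable-implies-pap}.
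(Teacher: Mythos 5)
Your proposal matches the paper's argument exactly: the corollary follows by applying \cref{thm:gluable-implies-pap} to $\leaf_n$, with the hypothesis $\leaf_n \not\in \PPADS^{dt}$ supplied by Beame et al.'s separation $\leaf \not\in \PPP^{dt}$ together with $\PPADS^{dt} \subseteq \PPP^{dt}$, and gluability supplied by the lemma you cite. Your parameter-alignment remark --- that the collision number $t$ arising in any polylog-complexity reduction to a pigeonhole instance is itself $\poly(\log n)$ in the original problem size, so ruling out all such $t$ rules out membership in $\PAP^{dt}$ --- correctly fills in a step the paper leaves implicit.
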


\paragraph{$\PLS^{dt}$ is gluable.}
Let us first recall the $\sodLong$ problem, also denoted $\sod$, which is the defining problem for $\PLS$.
Our definition follows that of G\"{o}\"{o}s et al. \cite{Goeoes2022}. 

\begin{definition}[$\sodLong$]
The $\sod_n$ problem is defined on the $[n] \times [n]$ grid, where the node $(1,1)$ is ``distinguished''.
As input, for each grid node $u=(i,j) \in [n] \times [n]$, we are given an ``active bit" $a_u \in \B$. Furhter, we are given a \emph{successor} $s_u \in [n]$, interpreted as naming a node $(i+1, s_u)$ on the next row.
We say a node $u$ is \emph{active} if $a_u = 1$, otherwise it is \emph{inactive}. A node $u$ is a \emph{proper sink} if $u$ is inactive but some active node has $u$ as a successor.
The goal of the search problem is to output either
\begin{enumerate}
	\item \label{sod1} $(1,1)$, if $(1,1)$ is inactive \hfill \emph{(inactive distinguished source)}
	\item \label{sod2} $(n, j)$, if $(n,j)$ is active, \hfill \emph{(active sink)}
	\item \label{sod3} $(i, j)$ for $i \leq n-1$, if $(i,j)$ is active and its successor is a proper sink. \hfill \emph{(proper sink)}
\end{enumerate}
We define the class $\PLS^{dt}$ to be all total query search problems with $\poly(\log(n))$-complexity $\sodLong$-formulations.
\end{definition}

One of the main results of G\"{o}\"{o}s et al. is the following.

\begin{theorem}[\cite{Goeoes2022}, Corollary 1.]
   $\PLS^{dt} \not \subseteq \PPADS^{dt}$. 
\end{theorem}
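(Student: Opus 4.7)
The plan is to reduce the question to a proof-complexity lower bound via the characterization in \cref{thm:sa-theorem}: $R \in \PPADS^{dt}$ iff $\cnf(R)$ admits a $\log^{O(1)}(n)$-degree, $n^{\log^{O(1)}(n)}$-magnitude $\ZZ$-Sherali-Adams refutation. Since $\sodLong$ is $\PLS^{dt}$-complete, it suffices to show that for every constant $c$, $\cnf(\sodLong_n)$ has no $\log^{c}(n)$-degree, $n^{\log^{c}(n)}$-magnitude $\ZZ$-Sherali-Adams refutation. By the duality between Sherali-Adams lower bounds and nonwitnessing pseudoexpectations (\cref{thm:HKT24}), this is equivalent to exhibiting, for each constant $c$, a degree-$\log^{c}(n)$ pseudoexpectation operator $\psE$ that is $\varepsilon$-nonwitnessing for $\sodLong_n$ with $\varepsilon \le 1/n^{\log^{c}(n)}$.

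My proposed route to constructing such a $\psE$ is a gadget-lifting argument. First, I would take a classically hard unsatisfiable system $\phi$, e.g.~a random 3-XOR instance on $m = \Theta(n)$ variables, for which a degree-$\Omega(n)$ nonwitnessing pseudoexpectation $\psE_{\phi}$ is known from standard proof-complexity results. Second, I would design a local ``PLS gadget'' that encodes each Boolean variable of $\phi$ into the activity bits and successor pointers of an $[n]\times[n]$ grid in such a way that: (i) the $\sodLong$ instance $R_{\phi}$ is total by construction; and (ii) every witness for one of the three solution types of $R_{\phi}$ corresponds, through the gadget, to a certificate that some clause of $\phi$ is violated. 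Third, I would lift $\psE_{\phi}$ through the gadget to obtain a pseudoexpectation $\psE_{R_\phi}$ on the $\sodLong$ variables; linearity, normalization, and nonnegativity will follow by standard substitution arguments, provided the gadget is balanced and has appropriate ``projection'' properties.

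The main obstacle will be the gadget analysis for the \emph{nonwitnessing} property. Verifying that $\psE_{R_\phi}[C] \leq \varepsilon$ for every small conjunction $C$ certifying a solution requires that, after unrolling through the gadget, the corresponding polynomial in $\phi$'s variables witnesses a clause violation of $\phi$, and thus receives small value under $\psE_{\phi}$. The ``proper sink'' family of solutions is the trickiest case, because such a conjunction constrains \emph{two} correlated nodes simultaneously --- an active source node and its named inactive successor --- so the gadget must preserve indistinguishability across pairs of adjacent queries, not just single queries. This is the kind of lifting argument used to prove earlier \TFNP separations such as the $\leaf$ lower bound in \cite{Beame95}, and I would aim to reuse a similar projection/decoding framework here.

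Once the gadget and the pseudoexpectation transfer are in place, applying \cref{thm:sa-theorem} in the contrapositive delivers $R_{\phi} \not\in \PPADS^{dt}$, and since $R_{\phi}$ is by construction an instance of $\sodLong$, we obtain $\PLS^{dt} \not\subseteq \PPADS^{dt}$.
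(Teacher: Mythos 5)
Your high-level framework is correct and matches the one used in the cited paper: reduce the claim via \cref{thm:sa-theorem} to a unary Sherali-Adams lower bound for the CNF of a $\PLS^{dt}$-complete problem, then obtain that lower bound by constructing an $\varepsilon$-nonwitnessing pseudoexpectation via \cref{thm:HKT24}. The problem is the specific construction, which is not merely incomplete but structurally impossible. The gadget you describe is precisely a low-depth decision-tree reduction from the false-clause-search problem of $\phi$ to $\sodLong$: the grid bits are computed locally from $\phi$'s variables and every $\sodLong$ solution locally decodes to a violated clause of $\phi$. But [Goeoes2022] also proves a dual characterization of $\PLS^{dt}$ in terms of low-\emph{width Resolution} (the Resolution analogue of \cref{thm:sa-theorem}), so the existence of such a gadget would force $\phi$ to have $\poly(\log n)$-width Resolution refutations. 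Random $3$-XOR is the canonical example of a formula that requires Resolution width $\Omega(n)$ (Ben-Sasson--Wigderson expansion arguments), i.e.\ exactly the property that makes it hard for Sherali-Adams also makes it hard for Resolution. You have picked a base formula that is hard for \emph{both} systems, so no $\sodLong$-gadget of the form you describe can exist, regardless of how cleverly the ``proper sink'' case is handled.

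What the cited result actually requires is a CNF that is simultaneously \emph{easy} for Resolution width (so it corresponds to a problem in $\PLS^{dt}$) and \emph{hard} for unary Sherali-Adams degree with quasipolynomially bounded coefficients (so the nonwitnessing pseudoexpectation exists). Since $\sodLong$ is $\PLS^{dt}$-complete, $\cnf(\sodLong_n)$ automatically satisfies the first condition, and the real technical content of [Goeoes2022]'s Corollary~1 is a direct unary Sherali-Adams lower bound for $\cnf(\sodLong_n)$ (equivalently, a pseudoexpectation built from the structure of the grid/DAG problem itself), not a lift from an external doubly-hard formula. Your ``gadget analysis'' obstacle is real but downstream of this more basic mismatch: before you can worry about preserving indistinguishability across adjacent nodes, you need a base formula with the right one-sided hardness profile, and $3$-XOR does not have it.
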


G\"{o}\"{o}s et al. already showed that the $\sod_n$ problem is $(\poly(\log(n)), 2)$-gluable, but it is not hard to see that their proof generalizes to show that it is actually $(\poly(\log(n)), t)$-gluable for any $t = \poly(\log n)$.
The proof of the following lemma closely follows the argument of \cite[Lemma 13]{Goeoes2022}.

\begin{lemma}
\label{lem:glueable}
    $\sod$ is $(p(\log(n)), p(\log(n)))$-gluable for any polynomial $p$.
\end{lemma}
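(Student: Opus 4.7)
The plan is to lift the $(p(\log n), 2)$-gluability construction of G\"{o}\"{o}s et al. essentially unchanged, exploiting the fact that every witness of a $\sod$ solution is determined by the local structure at a single grid node. Given a degree-$d$ conjunction $C$ with $d = p(\log n)$, define the completion tree $T_C$ as follows: for every grid node $u$ any of whose bits (active-bit $a_u$ or successor pointer $s_u$) is queried by $C$, additionally query $a_u$, $s_u$, and the active bit $a_{s_u}$ of the successor named by $s_u$ (when $u$ is in a row $i \leq n-1$). Each of the $O(d)$ bits of $C$ names at most one grid node, and the added queries per node are $O(\log n)$ bits to encode $s_u$ plus two active-bit queries, for a total depth of $O(d)$.

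The next step is to show that if $C_1, \dots, C_t$ are consistent and each completion $C_i'$ is non-witnessing, then $C' = C_1' \cdots C_t'$ is also non-witnessing. The argument relies on the observation that every solution type is locally witnessed. A type~1 witness requires the literal $a_{(1,1)} = 0$, which must appear in some $C_i'$, making $C_i'$ itself a witness. A type~2 witness at $(n, j)$ requires $a_{(n,j)} = 1$, handled identically. The key case is type~3 at $(i,j)$, which requires the three literals $a_{(i,j)} = 1$, $s_{(i,j)} = k$, and $a_{(i+1,k)} = 0$; the active predecessor of $(i+1,k)$ needed to make it a proper sink can be taken to be $(i,j)$ itself. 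The literal $a_{(i,j)} = 1$ must appear in some $C_i'$, so $C_i$ queried a bit of node $(i,j)$ and by construction the completion $C_i'$ also contains $s_{(i,j)} = k$ and $a_{(i+1,k)} = 0$, hence $C_i'$ already witnesses. Consequently no new solution can emerge from combining the completions.

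For the depth lower bound, let $\rho = \rho(C')$ and note that $\rho$ fixes the local structure of at most $|U| = O(td) = \poly(\log n)$ grid nodes. We give an adversary strategy that forces $\Omega(n)$ queries. The adversary picks, greedily, a chain of columns $c_1, c_2, \dots, c_{n-1}$ such that none of the nodes $(i, c_i)$ lies in $U$; this is possible since $|U| \ll n$. It then maintains the invariant that each $(i, c_i)$ is active with successor $(i+1, c_{i+1})$, and $(n, c_n)$ is inactive, with all other unqueried nodes declared inactive with arbitrary successors. Queries to nodes outside the chain reveal ``inactive with no active predecessor'' and cannot complete a witness; queries to chain nodes reveal structure that is consistent with the partial assignment and still avoids forcing a solution. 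Since only the terminal node of the chain eventually becomes a type~3 witness and this takes $\Omega(n)$ queries to reach, we conclude $DT(\sod \restriction \rho) = \Omega(n) \gg d$.

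The main obstacle is the type~3 case, where a witness involves three literals possibly spread across different conjunctions; the resolution is the observation that $(i,j)$ is itself an active predecessor of its own successor, so the single conjunction that contributes $a_{(i,j)} = 1$ already triggers the completion to supply the other two literals. This makes the proof scale uniformly to any $t = \poly(\log n)$, since each completion adds only $O(d)$ further queries and the combined witness set of size $O(td) = \poly(\log n)$ still leaves the adversary with an $\Omega(n)$ budget.
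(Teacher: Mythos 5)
Your completion tree is too weak to support the depth lower bound, and this is the fatal flaw. You only do one step of lookahead from each node that $C$ queries: for each such node $u$ you add $a_u$, $s_u$, and $a_{s_u}$. But then a non-witnessing completion can still commit to an active node arbitrarily close to the bottom row. Concretely, suppose $C_i$ queries $u = (n-2, j)$ with $a_u = 1$ and $s_u = (n-1, k)$; your $T_{C_i}$ adds $a_{(n-1,k)}$, and if this equals $1$ then $C_i'$ is non-witnessing yet fixes an active node on row $n-1$. The restriction $\rho(C')$ then inherits $a_{(n-1,k)} = 1$, and any decision tree that queries $s_{(n-1,k)}$ and then the active bit of the named row-$n$ node is handed a type-2 witness (if active) or a type-3 witness at $(n-1,k)$ (if inactive). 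Hence $DT(\sod \restriction \rho(C')) \le 2$, which is far below $d = p(\log n)$, so the gluability condition fails. This is precisely what the paper's first completion step is for: whenever $C$ reveals an active node in the bottom $O(d)$ rows, the completion traces the successor path all the way to a sink, forcing such a completion to be witnessing and therefore excluded. That tracing step is essential and cannot be replaced by one-hop lookahead.

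Your adversary argument also does not work as stated. The precommitted chain with $(n-1, c_{n-1})$ active, $s_{(n-1,c_{n-1})} = c_n$, and $(n, c_n)$ inactive is \emph{itself} a type-3 solution of $\sod$. Moreover, the chain ignores active nodes already fixed by $\rho$; the adversary must extend those chains consistently, and can only afford to do so if they all start at row at most $n - d - O(1)$, which again requires the missing trace in the completion. The lazy, query-driven adversary of the paper (make successors of queried active nodes point to fresh active nodes on the next row, make everything else inactive) is the right shape of argument.

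Finally, a smaller but genuine gap in the non-witnessing case analysis for type~3. You key on the literal $a_{(i,j)} = 1$ and conclude that some $C_i$ queried a bit of $(i,j)$. This inference is false: $a_{(i,j)}$ may have entered $C_i'$ only as the third lookahead query $a_{s_v}$ for a different node $v$ queried by $C_i$ with $s_v = (i,j)$, in which case your completion never adds $s_{(i,j)}$ or $a_{s_{(i,j)}}$. The argument should instead key on the literal $s_{(i,j)} = k$: your completion only adds $s_u$ for nodes $u$ that $C$ itself queried, so any $C_{i'}'$ containing $s_{(i,j)}$ must have had $(i,j)$ queried by $C_{i'}$, and then the completion also forces $a_{(i,j)}$ and $a_{(i+1,k)}$ into $C_{i'}'$, making it witnessing.
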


\begin{proof}
Let $C$ be any conjunction of degree $d = \poly(\log(n))$, defined over the variables of $\sod_n$.
The decision tree $T_C$ completing $C$ starts by checking whether $C$ queries any active node below row $n-d-2$. 
If yes, $T$ picks any one such active node and follows the successor path until a sink is found, making the completion witnessing. Note that this step incurs at most $O(d)$ queries. Finally, $T$ ensures that any query to a successor variable in $C$ is followed by a query to the active bit of the successor. This also costs $O(d)$ further queries.
Thus the depth of $T_C$ is $O(d) = \poly(\log(n))$.

As in the definition of gluability, let $C_1, C_2, \ldots, C_t$ be any sequence of $t$ consistent conjunctions, and let $C_1', C_2', \ldots, C_t'$ be any sequence of completions of those conjunctions by their respective decision trees.
Suppose that $C_i'$ is non-witnessing for each $i$ and that $C_1',C_2',\dots,C_t'$ are consistent, and suppose by way of contradiction that the conjunction $C' = C_1'C_2'\cdots C_t'$ is witnessing.
If it reveals a $\sod$ solution $u$ of type (\hyperref[sod1]{1}) or (\hyperref[sod2]{2}), then it must be that for some $i$, $C_i'$ queries an active bit of $u$: a contradiction with the fact that $C_i'$ is non-witnessing. 
On the other hand, if $C'$ reveals a solution $u$ of type (\hyperref[sod3]{3}), then it must be that for some $i$, $C_i'$ checks for the successor $s_u$ of $u$, but the completion $T_{C_i}$ forces this check to be followed by a query to the active bit of $s_u$, making one of the initial partial assignments witnessing as well. 
Hence $C'$ is non-witnessing.

%\david{We should argue the query complexity is at least $\omega(\poly(\log n))$, but in the following, it is reachable in $d$ steps. This is slightly different to the definition we give for gluable.}
%\david{By the way, is ITER gluable? Seems we really need this activition bit to make the completion work? If so, the current definition of gluable is still not very satisfying. I guess we could further generalize decision-tree completion, e.g., to allow to store some additional information along with the partial assignment, and technically forget some part of partial assignment?}
We finally argue that $R_n\restriction \rho(C')$ has query complexity greater than $d$ by describing an adversary that can fool any further $\ell$ queries to $p$ without witnessing a solution. Recall that $p$ makes no queries to nodes below row $n-d-2$. The adversary answers queries as follows. If the successor pointer of an active node is queried, then we answer with a pointer to any unqueried node on the next row and make it active (there always exists one as $d \ll n$). If a node $u$ is queried that is not the successor of any node, we make $u$ inactive ($a_u = 0$ and $s_u$ is arbitrary). This ensures that a solution can only lie on the very last row $n$, which is not reachable in $\ell$ queries starting from row $n - d - 2$.
\end{proof}

We can thus infer \cref{thm:plspap}, which we restate here for convenience.

\plspap*

\subsection{Iterated Pigeonhole}

In this section, we discuss the relationship between the Pecking Order and classes \PLC and \UPLC defined by \cite{Pasarkar2023}. We give an alternate definition of \UPLC which is more convenient for our reductions. The class \UPLC is defined by its complete problem with the same name \uplc, specified in below.%\footnote{The complete problem of \UPLC is called \textsc{Unary Long Choice} in the original paper~\cite{Pasarkar2023}.}.

% \david{Fix this format.}

\begin{definition}[\uplc]\label{def:uplc}
    A universe of elements $\B^n$ is considered.
    \begin{description}
        \item[Input] A circuit $P: \B^n \mapsto \B^{n-1}$.
        \item[Solutions] A set of $n$ distinct strings $a_1 \ldots a_n$ such that for every $j$ the strings $P(a_j), P(a_{j+1}) \ldots P(a_n)$ agree on the prefix of length $j$. 
    \end{description}
\end{definition}

We get the above definition of \UPLC by concatenating the circuits in the original definition \cite{Pasarkar2023}. We refer to the type of solution reported by \UPLC as the ``lower-triangular condition''. This is illustrated in \cref{fig:lower-triangular}.

\begin{figure}[t!]
    \centering
    \begin{subfigure}[t]{0.34\textwidth}
        \includegraphics[scale=0.3]{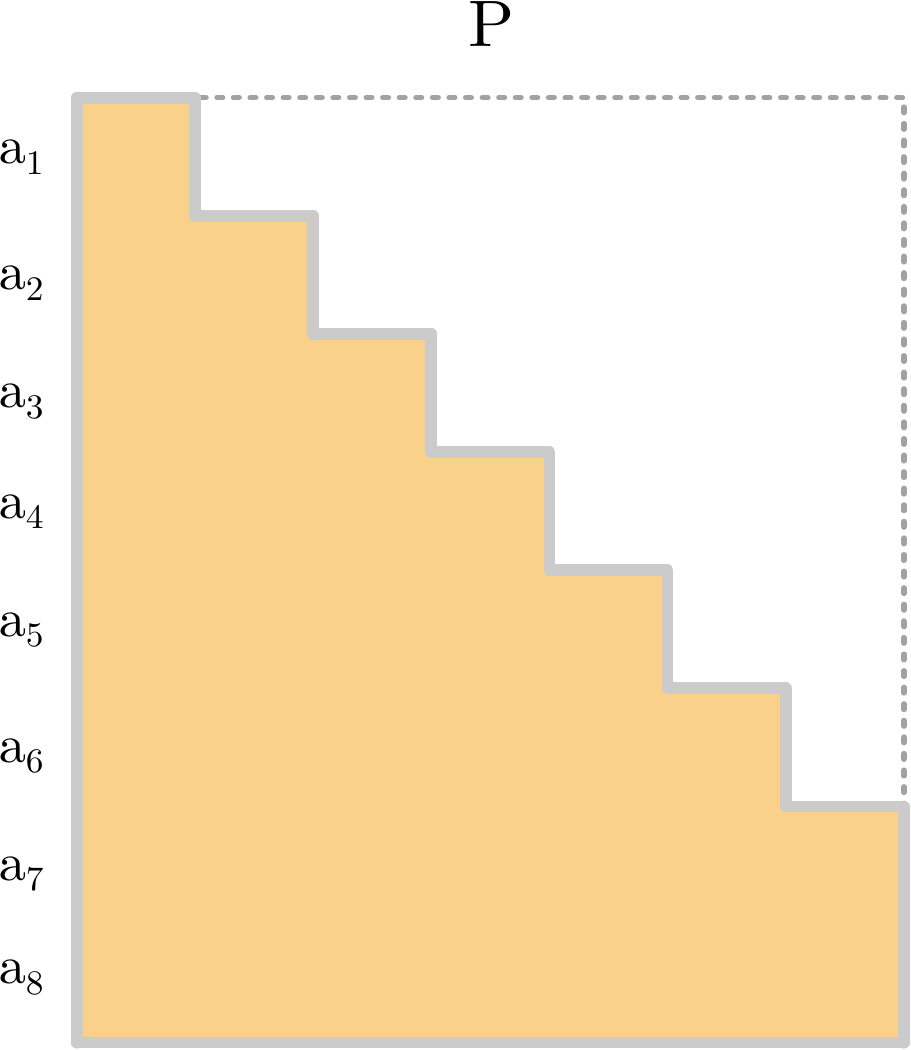}
        \caption{}
        \label{fig:lower-triangular}
    \end{subfigure}
    \begin{subfigure}[t]{0.3\textwidth}
        \includegraphics[scale=0.3]{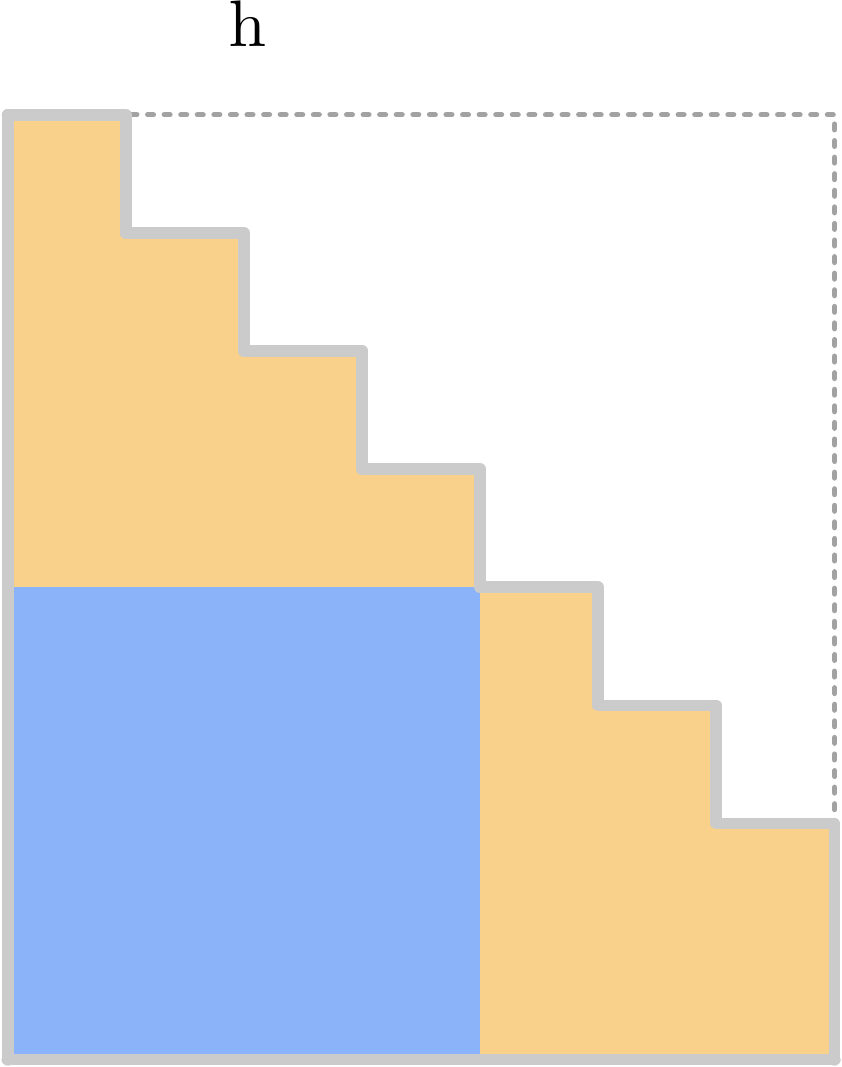}
        \caption{}
        \label{fig:pwpp-uplc}
    \end{subfigure}
    \begin{subfigure}[t]{0.3\textwidth}
        \includegraphics[scale=0.3]{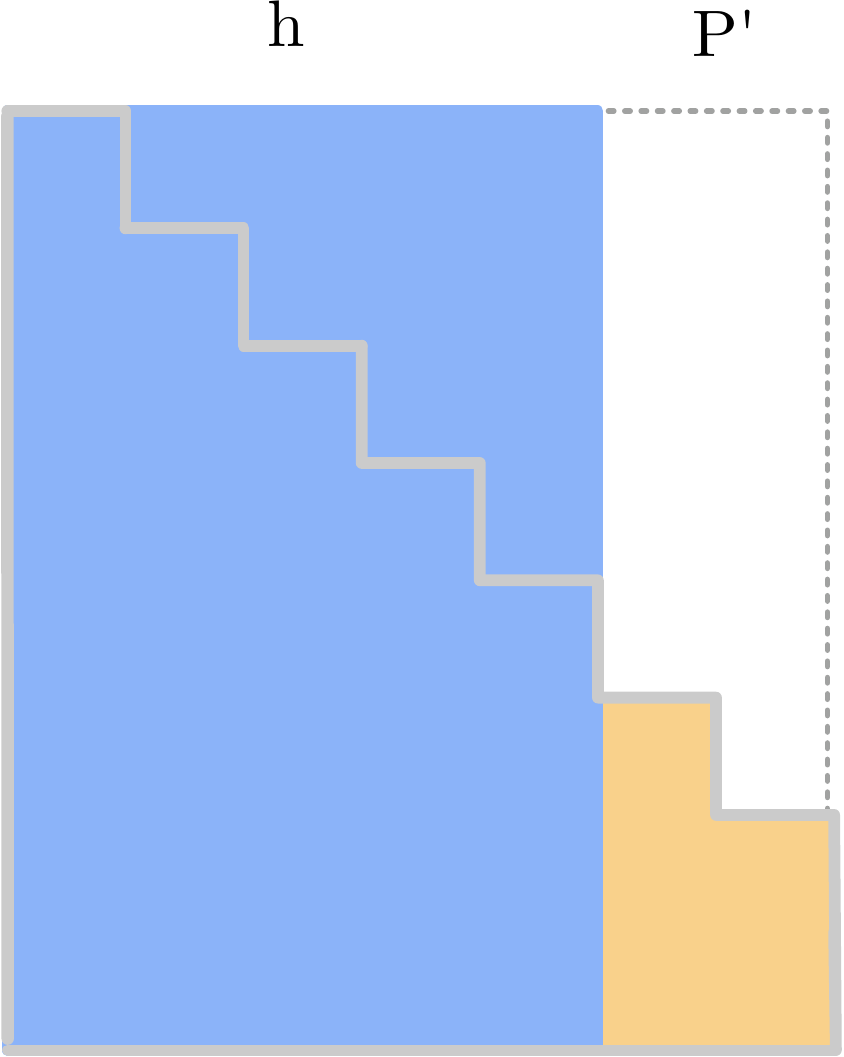}
        \caption{}
        \label{fig:uplc-pap}
    \end{subfigure}
    \caption{\UPLC solutions and reductions. \textbf{(a)} illustrates the lower-triangular condition of solutions to \UPLC. \textbf{(b)} is an illustration of \cref{lem:pwpp-uplc}. The \textcolor{newcolor}{blue} shaded region is a valid solution to $n/2$-\PWPP which is a subset of the \textcolor{YellowOrange}{yellow} shaded region, the lower-triangular collision returned by \UPLC. \textbf{(c)} is an illustration of \cref{lem:uplc-pap}. The \textcolor{newcolor}{blue} shaded region is the collision returned by $\tightPigeon{n}$, and the \textcolor{YellowOrange}{yellow} shaded region is computed by solving a $\log n$ size \UPLC instance by brute force. }
\end{figure}

\begin{lemma}
\label{lem:pwpp-uplc}
$\ourPigeon{n/2}{N}{\sqrt{N}}$ is in \UPLC.
\end{lemma}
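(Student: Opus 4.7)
The plan is to give a direct black-box reduction from $\ourPigeon{n/2}{N}{\sqrt{N}}$ to \uplc by embedding the pigeonhole map into the high-order bits of the \uplc circuit's output and then reading off a collision from the lower-triangular structure. Concretely, given an instance $h : \B^n \to \B^{n/2}$ of $\ourPigeon{n/2}{N}{\sqrt{N}}$, I would define a \uplc circuit $P : \B^n \to \B^{n-1}$ by
\[
    P(a) \;=\; h(a) \,\|\, 0^{n/2 - 1},
\]
so that the first $n/2$ output bits of $P(a)$ are exactly $h(a)$ and the remaining $n/2 - 1$ bits are a fixed pad. This is clearly computable by a polynomial-size circuit whenever $h$ is.

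Next I would analyze the \uplc solution. Suppose $a_1, a_2, \ldots, a_n$ is a \uplc solution for $P$ in the sense of \cref{def:uplc}. Specializing the lower-triangular condition to the index $j = n/2$, the strings $P(a_{n/2}), P(a_{n/2 + 1}), \ldots, P(a_n)$ agree on their length-$(n/2)$ prefix. Since the length-$(n/2)$ prefix of $P(\cdot)$ is exactly $h(\cdot)$ by construction, this forces
\[
    h(a_{n/2}) \;=\; h(a_{n/2 + 1}) \;=\; \cdots \;=\; h(a_n).
\]
As the $a_i$ are required to be pairwise distinct, these $n - n/2 + 1 = n/2 + 1$ pigeons all land in the same hole under $h$. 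Returning any $n/2$ of them yields a valid $(n/2)$-collision for the original $\ourPigeon{n/2}{N}{\sqrt{N}}$ instance, as depicted in \cref{fig:pwpp-uplc}, where the blue pigeonhole collision sits inside the yellow lower-triangular block at depth $n/2$.

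There is no real obstacle in this argument: the reduction is syntactic, the mapping from a \uplc witness back to a pigeonhole witness uses no queries beyond reading off the last $n/2 + 1$ strings of the \uplc solution, and the totality of $\ourPigeon{n/2}{N}{\sqrt{N}}$ is automatic (since $N > (n/2 - 1)\sqrt{N}$ for all sufficiently large $n$). The only design choice worth flagging is the choice of padding bits; using a constant pad is the simplest, but the same argument works with any efficiently computable pad, since only the first $n/2$ bits of $P$ are ever consulted in the analysis.
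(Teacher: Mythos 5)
Your proposal is correct and follows essentially the same route as the paper: define $P(a) = h(a)\,\|\,0^{n/2-1}$, and read off from the lower-triangular condition at depth $n/2$ that $a_{n/2},\ldots,a_n$ share the same value of $h$. Your write-up is actually slightly more careful than the paper's (which contains a minor typo in the domain of $P$), and your observation that one in fact obtains $n/2+1$ colliding pigeons is a small but correct refinement.
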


\begin{proof}
Given an instance $(n, \pmap)$ of $\ourPigeon{n/2}{N}{\sqrt{N}}$, where $\pmap : \B^n \mapsto \B^{n/2}$, we construct an instance $P$ of \UPLC by simply appending 0s to the circuit.

$$P(x_1 x_2 \ldots x_{n-1}) = \pmap(x_1 \ldots x_{n/2}) 0^{n/2-1}.$$

The solution to \UPLC gives us the lower triangular condition for some set of strings $\{a_0, a_1 \ldots a_n\}$ which in particular gives us an $n/2$-collisions for $\pmap$, since $a_{\frac{n}{2}} \ldots a_{n}$ agree on the first $n/2$ bits. This proof is illustrated in \cref{fig:pwpp-uplc}.
\end{proof}

\uplcpap*
\begin{proof}
    This is by combining \cref{lem:pwpp-uplc} with \cref{thm:lb}.
\end{proof}

Further, we complete a fine-grained understanding of the position of \UPLC in the Pecking Order by placing it in $n$-\PWPP.

\begin{lemma}
\label{lem:uplc-pap}
    $\UPLC \subseteq n\textrm{-}\PWPP$.
\end{lemma}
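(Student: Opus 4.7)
The plan is to reverse the construction of \cref{lem:pwpp-uplc}: I would build an $n$-\PWPP instance whose $n$-collision already provides the bulk of a \UPLC solution, and then complete the solution by an elementary post-processing step. Given the \UPLC circuit $P\colon \B^n \to \B^{n-1}$, set $k := \lceil \log n \rceil$ and define $h \colon \B^n \to \B^{n-k}$ by $h(x) := P(x)[1 \ldots n-k]$. This yields $M := 2^n$ pigeons and $N := 2^{n-k}$ holes with compression $M/N = 2^k \geq n$, meeting the $n$-\PWPP requirement $M \geq (n-1+c)N$ for any constant $c \in (0,1]$. Since distinct pigeons are distinct elements of $\B^n$, any $n$-collision returns $n$ distinct elements $x_1, \ldots, x_n$ sharing a common $(n-k)$-bit prefix $\tau^\star$ under $P$.

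With the collision in hand, the \UPLC lower-triangular condition at positions $j \leq n-k$ holds automatically for every ordering, so I only need to arrange the top positions $j = n-k+1, \ldots, n$ based on the $k-1$ remaining free bits $Q(x_i) := P(x_i)[n-k+1 \ldots n-1]$. Concretely, I would look for a string $q \in \B^{k-1}$ and sets $L_u := \{x_i : Q(x_i)[1 \ldots u] = q[1 \ldots u]\}$ with $|L_u| \geq k - u + 1$ for every $u = 1, \ldots, k-1$, then greedily pick one element from $L_u$ (distinct from previous picks) for each of the top $k$ positions, starting from the deepest. The remaining $n - k$ elements fill the bottom positions in any order, giving a valid \UPLC solution.

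The required prefix $q$ is found in polynomial time by iterated pigeonhole on the $n$ collision elements, essentially solving a small \UPLC-style instance of input size $\log n$ by the same totality argument that makes \UPLC total. Taking the majority bit at each free level yields $|L_u| \geq \lceil n/2^u \rceil$, and since $2^{k-1} < n \leq 2^k$, a short inductive check gives $\lceil n/2^u \rceil \geq k - u + 1$ at every level (small values of $n$ can be handled by inspection), so the greedy selection succeeds. The main step to verify is this arithmetic inequality together with the compression bound; I foresee no serious obstacle, as both are routine and the construction is the natural dual to the previous lemma.
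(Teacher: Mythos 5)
Your proposal is correct and takes essentially the same approach as the paper: split $P$'s output into a long prefix that defines an $n$-\PWPP instance and a short $O(\log n)$-bit suffix, obtain an $n$-collision on the prefix, and then sort/select among those $n$ collision elements using the suffix bits to satisfy the last $O(\log n)$ positions of the lower-triangular condition. The paper states the last step more compactly — it views the suffix restricted to the $n$ collision elements as a \UPLC instance of size $\log n$ and solves it by brute force — whereas you explicitly re-derive its totality by a majority/iterated-pigeonhole argument and verify $\lceil n/2^u \rceil \ge k-u+1$; this is the same reasoning that makes the small instance total, so the two are interchangeable. One tiny imprecision: you say ``pick one element from $L_u$ for each of the top $k$ positions,'' but there are $k$ positions and only $k-1$ levels $L_1 \supseteq \cdots \supseteq L_{k-1}$; the two last positions $n-1$ and $n$ both draw from $L_{k-1}$, which your bound $|L_{k-1}| \ge 2$ already accommodates, so the argument goes through.
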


\begin{proof}
    Without loss of generality, we assume $n$ is a power of 2. Given the circuit $P: \B^n \mapsto \B^{n-1}$ of a \uplc instance, we ``split'' the output into two parts: the first $n - \log n$ bits are considered as an instance of $\ourPigeon{n}{N}{N/n}$ and the remaining $\log n - 1$ bits are considered as an instance of \uplc in a $\log n$ scale.

    We first solve the $\ourPigeon{n}{N}{N/n}$ instance defined by a mapping $h: \B^n \mapsto \B^{n-\log n}$, where $h$ outputs the first $n - \log n$ bits of $P$. Note that $\ourPigeon{n}{N}{N/n} \in n\textrm{-}\PWPP$. Suppose we get an $n$-collision $(a_1, \ldots, a_n)$ of $h$ in the first step.
    
    We then consider the \uplc instance $P'$ on the universe $U = \{a_1, \ldots, a_n\}$, where $P': U \mapsto \B^{\log n - 1}$ is defined by the remaining $\log n - 1$ bits of $P$. We can solve this much smaller \uplc instance in polynomial time, and get a set of solution $(a_{i_1}, \ldots, a_{i_{\log n}})$. 
    
    Finally, we get the solution to the original \uplc instance by rearranging the elements in $U$: we put $(a_{i_1}, \ldots, a_{i_{\log n}})$ in the end with the same order, and put everything else in before with arbitrary order. The first $n - \log(n)$ columns of the lower-triangular condition are satisfied by finding the $n$-collision of $h$, and the remaining $\log n -1$ columns are fulfilled by solving the small \uplc instance $P'$.

    This proof is illustrated in \cref{fig:uplc-pap}.
\end{proof}

Combining \cref{thm:lb:compression}, \cref{lem:uplc-pap}, and the fact that $\PPP \subseteq \PLC$~\cite{Pasarkar2023}, we get two more black-box separations regarding to $\UPLC$.

\uplcsep*

\paragraph{A stronger version of \UPLC.}
We observe that \uplc is still total if the universe has size $2^{n-1}+1$, and we can even ask for more matching bits between those pigeons.
To see this, we apply the \emph{non-adaptive} version of the iterative PHP: Start with the universe $U_0 = [2^{n-1}+1]$. In the $i$-th step, the $i$-th bit of $P$ divides the current set $U_{i-1}$ into two subsets and sets $U_i$ to be the larger one (break ties arbitrarily). In this way, for any $i \in [n-1]$, all the elements in $U_i$ agree on the first $i$ bits of $P$, and $|U_{n-i}| \geq 2^i$.
Therefore, we can pick 
\begin{enumerate}
    \item two elements ($a_{n-1}$ and $a_n$) from $U_{n-1}$ that match on all $n-1$ bits of $P$,
    \item at least $2$ more elements from $U_{n-2}$ that match on first $n-2$ bits with $a_n$, \ldots,
    \item and in general, at least $2^{i-1}$ more elements from $U_{n-i}$ that match on first $n-i$ bits with $a_n$.
\end{enumerate}

%It is worth defining a new problem $\tuplc$ (``T'' for tight) for unveiling the full strength of the non-adaptive iterative pigeonhole principle described above.
We now define a new problem $\tuplc$ (``T'' for tight) to study the hardest computational problem corresponding to the non-adaptive iterative pigeonhole principle described above.

\begin{definition}[\tuplc]\label{def: tuplc}
    Let $n = 2^{\ell}$ be a power of $2$, and a universe of elements $U = [2^n+1]$ is considered.
    \begin{description}
        \item[Input] A circuit $P: U \mapsto \B^{n}$.
        \item[Solutions] A set of $n+1$ distinct strings $a_0 \ldots a_{n}$ such that for every $i \in \{0, 1, \ldots, \ell\}$, the strings $P(a_{n - {2^i}}), P(a_{n - {2^i} + 1}) \ldots P(a_n)$ agree on the prefix of length $n-i$. 
    \end{description}
\end{definition}

    \Xcomment{
        By definition, $\PPP$ and $\UPLC$ are the subclasses of $\TUPLC$.
        \begin{lemma}
            $\PPP \subseteq \TUPLC$, and $\UPLC \subseteq \TUPLC$.
        \end{lemma}
        \begin{proof}
            For $\PPP$, notice that $(a_{n},a_{n+1})$ is exactly the solution for the \pigeon instance $(n, \pmap)$ with $\pmap \coloneq P$.
        
            For $\UPLC$, let $n' = n-1$, and then an instance of $\TUPLC$ of size $n'$ is also an instance of $\uplc$ of size $n$, but with a smaller universe.
        \end{proof}
    }

We show the counter-intuitive result that the non-adaptive iterative pigeonhole principle is equivalent to the generalized pigeonhole principle computationally.
\begin{lemma}\label{lem: tuplc pap}
    $\tuplc$ is $\PAP$-complete.
\end{lemma}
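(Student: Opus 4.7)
The plan is to establish both the inclusion $\tuplc \in \PAP$ and $\PAP$-hardness of $\tuplc$.

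For the inclusion, I will reduce $\tuplc$ to $(n+1)\text{-}\PPP$, which lies in $\PAP$ by \cref{thm:polynomial-robustness}. Given a $\tuplc$ instance $P : [2^n+1] \to \B^n$ with $n = 2^\ell$, let $h : [2^n+1] \to [2^{n-\ell}]$ be the map returning the first $n - \ell$ bits of $P(i)$. Since $2^n + 1 = n \cdot 2^{n-\ell} + 1$, the map $h$ is an instance of $\tightPigeon{(n+1)}$, and an oracle call yields a set $U_\ell$ of $n+1$ pigeons all agreeing on the first $n - \ell$ bits of $P$. From $U_\ell$ I will extract the full nested structure required by $\tuplc$ in polynomial time via the non-adaptive iterative pigeonhole principle: for $i = \ell, \ell-1, \ldots, 1$, define $U_{i-1} \subseteq U_i$ as the larger of the two parts into which $U_i$ splits based on the $(n - i + 1)$-th bit of $P$. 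An easy induction gives $|U_i| \geq 2^i + 1$, with all elements of $U_i$ agreeing on the first $n - i$ bits of $P$.

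I will then assemble the $\tuplc$ solution by nested selection: pick two distinct elements $a_n, a_{n-1} \in U_0$, and for each $i = 1, \ldots, \ell$ pick $2^{i-1}$ new elements $a_{n - 2^i}, \ldots, a_{n - 2^{i-1} - 1}$ from $U_i$ avoiding those already selected. This is always feasible because $|U_i| \geq 2^i + 1$ while only $2^{i-1} + 1$ elements have been chosen so far, all of which lie in $U_{i-1} \subseteq U_i$. By construction $\{a_{n - 2^i}, \ldots, a_n\} \subseteq U_i$ for every $i \in \{0, \ldots, \ell\}$, which exactly matches the nested-agreement condition of \cref{def: tuplc}.

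For the $\PAP$-hardness, I will reduce $\tightPigeon{(n+1)}$ with the specific tight parametrization $M = 2^n+1$ and $N = 2^{n-\ell}$ (where $n = 2^\ell$) to $\tuplc$. This parametrization is $\PAP$-complete: viewing the problem with input-size parameter $k := n - \ell$, it is precisely $\tightPigeon{t}$ with standard hole count $2^k$ and collision count $t = k + \ell + 1$; since $\ell = \log n \leq k$ for $n \geq 4$, the function $t(k)$ is polynomially close to $k$, so \cref{thm:polynomial-robustness} yields $t\text{-}\PPP = k\text{-}\PPP = \PAP$. The reduction itself is immediate: given $h : [2^n+1] \to [2^{n-\ell}]$, define $P : [2^n+1] \to \B^n$ by letting $P(i)$ be the $n$-bit string whose first $n - \ell$ bits are the binary representation of $h(i)$ and whose last $\ell$ bits are zero. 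Any $\tuplc$ solution $(a_0, \ldots, a_n)$ for $P$ consists of $n+1$ elements agreeing on the first $n - \ell$ bits of $P$, which is precisely an $(n+1)$-collision under $h$. The main point that needs careful justification is the $\PAP$-completeness of the non-standard hole-count parametrization, but the input-size renaming above makes this a direct corollary of \cref{thm:polynomial-robustness}.
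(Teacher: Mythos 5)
Your proposal is correct and follows essentially the same two-stage strategy as the paper: for the inclusion, treat the first $n-\ell$ bits of $P$ as an instance of $\tightPigeon{(n+1)}$ on $2^{n-\ell}$ holes, then refine the returned $(n+1)$-collision using the last $\ell$ bits; for hardness, pad a pigeonhole instance into the first $n-\ell$ bits of $P$. Where the paper simply says the second stage amounts to ``solving a $\tuplc$ instance at $\log n$ scale by brute force,'' you unfold this into an explicit greedy selection of the nested sets $U_\ell \supseteq \cdots \supseteq U_0$ with $|U_i|\ge 2^i+1$ and verify the feasibility count $2^{i-1}+1$ at each step, which is a fine way to make the sketch concrete; you also spell out more carefully why the non-standard parametrization $\tightPigeon{(n+1)}^{2^n+1}_{2^{n-\ell}}$ collapses to $\PAP$ via \cref{thm:polynomial-robustness}, which the paper leaves implicit.
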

\begin{proof}[Proof Sketch]
    Let $n = 2^{\ell}$.
    Note that the solution $\{a_0 \ldots a_{n}\}$ for a $\tuplc$ instance $(n, P)$ is also the solution for the $\ourPigeon{(n+1)}{N+1}{N/n}$ instance $(n, \pmap)$ with $\pmap$ being the first $n-\ell$ bits of $P$. Therefore, $\tuplc$ is $\PAP$-hard.
    
    To show that $\tuplc$ is also in $\PAP$, we use the same two-stage argument as in \cref{lem:uplc-pap}: 
    The first $n - \ell$ bits of $P$ are considered as an instance of $\ourPigeon{(n+1)}{N+1}{N/n}$ and we get an $(n+1)$-collision $U' \coloneqq \{a_1, \ldots, a_{n+1}\}$ in this step. The remaining $\ell$ bits of $P$ are interpreted as an instance of \tuplc in a $\log n$ scale with the universe $U'$, each can be solved trivially.
\end{proof}

\subsection{Total Function \BQP}\label{sec:quantum}

In this section, we put the problem defined by Yamakawa-Zhandry~\cite{Yamakawa2022} for their breakthrough result in the Pecking Order. Yamakawa-Zhandry's problem was first defined relative to a random oracle. It was later adapted to constitute a total problem~(\cite{Yamakawa2022}, Section 6), which is the version we considered in this paper.

\yzpap*

We start with formally defining the Yamakawa-Zhandry's problem.

\begin{definition}[Yamakawa-Zhandry's Problem~\cite{Yamakawa2022}, Simplified]\label{def:YZ}
    Fix an error correcting code $C \subseteq \Sigma^n$ on alphabet $\Sigma$. Let $(h_k)$ be a family of $\lambda$-wise independent functions from $C$ to $\{0,1\}^n$.

    \begin{description}
        \item[Input] The input encodes $n$ mapping $f_1, \ldots, f_n: \Sigma \rightarrow \{0, 1\}$. We define $f: \Sigma^n \rightarrow \B^n$ as $f(a_1 a_2 \ldots a_n) \coloneqq (f_1(a_1), f_2(a_2), \ldots, f_n(a_n))$. % \david{This is confusing...} \sid{Isn't it also wrong? The function can be different on different indices.}
        \item[Solutions] The goal is to find a key $k$ and $t$ codewords $c^{(1)} \ldots, c^{(t)} \in C$ such that $f(c^{(i)}) \oplus h_k(c^{(i)}) = 0^n$, for all $i \in [t]$.
    \end{description}
    
    The parameters satisfy $ n < \lambda \ll t = \poly(n)$, $|C| \geq 2^{2n}$, and $|\Sigma| = 2^{\poly(n)}$.
\end{definition}

This definition is slightly different to the one appeared in \cite{Yamakawa2022}\footnote{In \cite{Yamakawa2022}, the input contains $t$ mappings $f_1, \ldots, f_t$, and the goal is find a key $k$ and $t$ codewords $c^{(1)} \ldots, c^{(t)} \in C$ such that $f_i(c^{(i)}) \oplus h_k(c^{(i)}) = 0^n, \forall i \in [t].$}, which is simpler to analyze. A \emph{folded Reed-Solomon} code (see \cite{guruswami23}, Section 17.1) with a certain parameter setting is chosen for the ECC $C$ in \cite{Yamakawa2022}. The structure of the ECC is crucial for the exponential quantum speed-up. It is straightforward to verify that the problem as stated here preserves the quantum upper bound and the classical lower bound. We assume the $\lambda$-wise independent functions family $(h_k)$ is implemented by the well-known low-degree polynomial construction.

\begin{definition}[\cite{Vadhan12psr}, Section 3.5.5]
\label{def:poly-k-wise}
    Let $\mathbb{F}$ be a finite field. Define the family of functions $\mathcal{H} = \{h_{a_0,a_1 \ldots a_{\lambda-1}} : \mathbb{F} \mapsto \mathbb{F}\}$, where each $h_{a_0,a_1 \ldots a_{\lambda-1}} = a_0 + a_1x + a_2x^2 \ldots a_{\lambda-1}x^{\lambda-1}$ for $a_0,a_1 \ldots a_{\lambda-1} \in \mathbb{F}$. 
\end{definition}

It is noted in \cite{Vadhan12psr} that this family forms a $\lambda$-wise independent set.

\paragraph{A further simplification.} To show \cref{lemma:yz}, we consider a simplification of the Yamakawa-Zhandry's Problem, in which we assume all the codewords have $n$ distinct letters, and no letter appears in two different codewords. We call this new problem \AZC. The rationale behind the naming is that we can now imagine we are given an \emph{arbitrary} 0/1 matrix $F$ of size $n \times |C|$ representing the mapping $f$, where the $i$-th column of $F$ corresponds to the result of applying $f$ to the $i$-th codeword.

\begin{definition}[\AZC]\label{def:azc} \hfill
    Fix a  0-1 matrix family $\mathcal{H}$ of size $n \times m$,  where the entries are $\lambda$-wise independent, implemented by the low-degree polynomial construction.

    \begin{description}
        \item[Input] The input is a 0-1 matrix $F$ of size $n \times m$.
        \item[Solutions] The goal is to find $t$ indices $j_1, \ldots, j_t \in [m]$ and a matrix $H_k \in \mathcal{H}$ (which could be succinctly represented by a key $k$), such that for any $i \in [t]$, the $j_i$-th column of matrix $F \oplus H_k$ is $\mathbf{0}^n$.
    \end{description}

    The parameters satisfy $ n < \lambda \ll t = \poly(n)$, and $2^n \cdot t \leq m = 2^{\poly(n)}$.
\end{definition}

As such, the Yamakawa-Zhandry's problem can be seen as \AZC\ with a \emph{promise}: $F$ but has the structure imposed by the ECC $C$ in the definition of the Yamakawa-Zhandry's problem. \cref{lemma:yz} is then implied by the following lemma.

\begin{lemma}\label{lem:azc_pap}
    \AZC\ is contained in \PAP.
\end{lemma}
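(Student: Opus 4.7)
My plan is to give a direct black-box reduction from $\AZC$ to $\ourPigeon{t}{M}{N}$ with $M/N \geq t$. Since $t = \poly(n)$, this places $\AZC$ in $t$-$\PWPP \subseteq t$-$\PPP \subseteq \PAP$ via \cref{thm:polynomial-robustness}. The reduction crucially leverages the low-degree polynomial structure of $\mathcal{H}$, which lets us parameterize the set of ``good events'' $(j, k)$ (i.e.\ pairs with $(H_k)_{*, j} = F_{*, j}$) by an explicit enumerated domain, rather than treating them as a promise subset.

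First, I would unpack the low-degree polynomial construction of $\mathcal{H}$. Identifying $\{0,1\}^n$ with $\mathbb{F}_{2^n}$, I assume each column hash has the form $h_k(j) = \sum_{i=0}^{\lambda-1} a_i j^i$ over $\mathbb{F}_{2^n}$, with key $k = (a_0, \ldots, a_{\lambda-1}) \in \mathbb{F}_{2^n}^\lambda$, so $|\mathcal{K}| = 2^{n\lambda}$ and the $j$-th column of $H_k$ is (the bit-representation of) $h_k(j)$. The central observation is that for each fixed $j$, the set of good keys $\{k : h_k(j) = F_{*, j}\}$ is an affine subspace of $\mathbb{F}_{2^n}^\lambda$ of codimension one: once $(a_1, \ldots, a_{\lambda-1}) \in \mathbb{F}_{2^n}^{\lambda-1}$ is chosen freely, $a_0$ is uniquely determined as $F_{*, j} - \sum_{i \geq 1} a_i j^i$.

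Second, I would define the pigeonhole reduction. Set pigeons $P = [m] \times \mathbb{F}_{2^n}^{\lambda-1}$, holes $H = \mathcal{K} = \mathbb{F}_{2^n}^\lambda$, and
\[ \pmap(j, \vec{a}) \;:=\; \bigl(\,F_{*, j} - {\textstyle\sum_{i=1}^{\lambda-1}} a_i\, j^i,\;\; \vec{a}\,\bigr). \]
This map is computed by a $\poly(n)$-size circuit (one field multiplication chain plus a look-up of the $j$-th column of $F$). The pigeon/hole counts are $M = m \cdot 2^{n(\lambda-1)}$ and $N = 2^{n\lambda}$, giving $M/N = m/2^n \geq t$ by the AZC parameter assumption $m \geq 2^n t$. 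Hence $M \geq tN > (t-1)N + 1$, so this is a valid $\ourPigeon{t}{M}{N}$ instance that witnesses membership in $t$-$\PWPP$.

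Third, I would extract an $\AZC$ solution from any $t$-collision of $\pmap$. Suppose $t$ distinct pigeons $(j_1, \vec{a}^{(1)}), \ldots, (j_t, \vec{a}^{(t)})$ all map to the same $k = (a_0, \vec{a})$. Reading off the final $\lambda-1$ coordinates gives $\vec{a}^{(\ell)} = \vec{a}$ for all $\ell$, and reading off the first coordinate gives $a_0 = F_{*, j_\ell} - \sum_{i \geq 1} a_i\, j_\ell^{\,i}$, equivalently $h_k(j_\ell) = F_{*, j_\ell}$, i.e.\ column $j_\ell$ of $F \oplus H_k$ is $\mathbf{0}^n$. Distinctness of the pigeons, combined with the shared $\vec{a}$, forces the $j_\ell$ to be pairwise distinct, so $(k;\, j_1, \ldots, j_t)$ is a bona fide $\AZC$ solution. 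The main obstacle is the minor bookkeeping of the low-degree construction --- namely verifying that one can work over $\mathbb{F}_{2^n}$ so that each column of $H_k$ is a single field element and good keys form an affine subspace --- but once this identification is made, the reduction is a direct implementation of the averaging principle underlying AZC's totality.
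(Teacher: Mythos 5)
Your reduction is essentially the same as the paper's: the paper also takes pigeons to be ``good pairs'' $(j,k)$ with the $j$-th column of $F \oplus H_k$ zeroed out, holes to be keys, and the pigeon-to-hole count ratio $M/N = m/2^n \geq t$ via the same $\lambda$-wise independence argument. The paper just leaves the explicit parameterization of $[M]$ as a remark (``this reduction can be implemented in $\poly(n)$ time... using polynomial interpolation''), whereas you spell it out concretely by fixing $(a_1, \dots, a_{\lambda-1})$ and solving the remaining linear equation for $a_0$. That is a genuine and useful clarification, not a different route.

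One loose end in your write-up: you identify the key space and the column index space both with $\mathbb{F}_{2^n}$, but the AZC parameters allow $m = 2^{\poly(n)}$ with $m \geq 2^n t \gg 2^n$, so the column index $j$ does not fit in $\mathbb{F}_{2^n}$. The standard fix is to implement the $\lambda$-wise independent family over a larger field $\mathbb{F}_q$ with $q$ a power of $2$ satisfying $q \geq m$, and define the hash output as the truncation of $h_k(j) \in \mathbb{F}_q$ to its first $n$ bits. The set of good $a_0$ for a fixed $j$ and fixed $(a_1, \dots, a_{\lambda-1})$ is then a union of $q/2^n$ cosets rather than a single point, so a pigeon should be indexed by $(j, \vec{a}, b)$ with $b$ an explicit index into that fiber; the ratio $M/N = m/2^n$ is unaffected. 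The paper glosses over exactly this point too, so you are in good company, but since the whole value of your presentation is its explicitness, the field-size mismatch is worth patching.
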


\begin{proof}
    Given an \AZC\ instance specified by a $6$-tuple $(n,m, \lambda, t, F, \mathcal{H})$, we construct the following $\ourPigeon{t}{M}{N}$ instance for $M = m \cdot \frac{|\mathcal{H}|}{2^n}$ and $N = |\mathcal{H}|$.
    \begin{description}
        \item[Pigeons] Each pigeon is a pair $(j, k)$. Here $j$ is in $[m]$, and $k$ is a key for the family $\mathcal{H}$ such that the $j$-th column of matrix $F \oplus H_k$ is $\mathbf{0}^n$.
        \item[Holes] Each hole is a key $k$.
    \end{description}

    Pigeon $(j, k)$ is now mapped to hole $k$. Recall that we have $n < \lambda$, thus, each index $j$ could pair with exactly $2^{-n}$ fractions of keys to form a valid pigeon. Hence, there are $M = m \cdot \frac{|\mathcal{H}|}{2^n}$ pigeons. Since $m \geq 2^n \cdot t$ in the \AZC\ instance, we can verify that $M \geq t \cdot N$. Also, any $t$-collision in the $\ourPigeon{t}{M}{N}$ instance directly corresponds to a solution of the \AZC\ instance.

    Since we implemented $\mathcal{H}$ using the low-degree polynomial construction \cref{def:poly-k-wise}, this reduction can be implemented in $\poly(n)$ time in the white-box setting using polynomial interpolation.
\end{proof}

\Xcomment{
    \begin{lemma}[Totality]
        $\AZC_n$ is total whenever $f_k$ is a $n$-wise independent function ($\lambda\geq n$) and the set of words $W$ is large.
    \end{lemma}
    \begin{proof}
        Let $P:= $ number of words mapped to $0^n$ by the oracle XORed with our choice of $f_k$.
        Now, choosing $f_k$ randomly, we get that $\E[P] = \sum_{w} \mathbf{1}_{H\oplus f_k(w) = 0^n} = |W|/2^n$, where we used $n$-wise independence to be able to compute the summands. Whenever $|W| \geq t2^n$, there must exist a $f_k$ such that we have $t$ words mapped to $0^n$ since the max is at least average.
    \end{proof}
}

\subsection*{Acknowledgements}
We thank Beach House for providing topical procrastination music \cite{beachhouse}. We thank Mika G\"o\"os, Scott Aaronson, David Zuckerman, Alexandors Hollender, Gilbert Maystre, Ninad Rajgopal, Chetan Kamath for discussions about the Pigeonhole principle. We also thank Scott Aaronson for suggesting the name Pecking Order. SJ thanks Mika G\"o\"os for countless insightful discussions about \TFNP and complexity theory more generally.

SJ and JL are supported by Scott Aaronson’s
Vannevar Bush Fellowship from the US Department of Defense, the Berkeley NSF-QLCI CIQC
Center, a Simons Investigator Award, and the Simons “It from Qubit” collaboration. ZX is supported by NSF award CCF-2008868
and the NSF AI Institute for Foundations of Machine Learning (IFML).

% =========================================================================== %
\newpage
% \small
\DeclareUrlCommand{\Doi}{\urlstyle{sf}}
\renewcommand{\path}[1]{\small\Doi{#1}}
\renewcommand{\url}[1]{\href{#1}{\small\Doi{#1}}}
\bibliographystyle{alphaurl}
\bibliography{refs}

\newcommand{\etalchar}[1]{$^{#1}$}
\begin{thebibliography}{CGMS23}

\bibitem[AA14]{Aaronson2011}
Scott Aaronson and Andris Ambainis.
\newblock The need for structure in quantum speedups.
\newblock {\em Theory Comput.}, 10:133--166, 2014.
\newblock \href {https://doi.org/10.4086/toc.2014.v010a006}
  {\path{doi:10.4086/toc.2014.v010a006}}.

\bibitem[AS04]{AS04}
Scott Aaronson and Yaoyun Shi.
\newblock Quantum lower bounds for the collision and the element distinctness
  problems.
\newblock {\em J. ACM}, 51(4):595--605, 2004.
\newblock \href {https://doi.org/10.1145/1008731.1008735}
  {\path{doi:10.1145/1008731.1008735}}.

\bibitem[BB14]{BeckBuss14}
Arnold Beckmann and Samuel~R. Buss.
\newblock Improved witnessing and local improvement principles for second-order
  bounded arithmetic.
\newblock {\em ACM Trans. Comput. Log.}, 15(1):Art. 2, 35, 2014.
\newblock \href {https://doi.org/10.1145/2559950} {\path{doi:10.1145/2559950}}.

\bibitem[BCE{\etalchar{+}}98]{Beame95}
Paul Beame, Stephen Cook, Jeff Edmonds, Russell Impagliazzo, and Toniann
  Pitassi.
\newblock The relative complexity of {NP} search problems.
\newblock volume~57, pages 3--19. 1998.
\newblock 27th Annual ACM Symposium on the Theory of Computing (STOC'95) (Las
  Vegas, NV).
\newblock \href {https://doi.org/10.1006/jcss.1998.1575}
  {\path{doi:10.1006/jcss.1998.1575}}.

\bibitem[BDRV18]{BDRV18}
Itay Berman, Akshay Degwekar, Ron~D. Rothblum, and Prashant~Nalini Vasudevan.
\newblock Multi-collision resistant hash functions and their applications.
\newblock In Jesper~Buus Nielsen and Vincent Rijmen, editors, {\em Advances in
  Cryptology - {EUROCRYPT} 2018 - 37th Annual International Conference on the
  Theory and Applications of Cryptographic Techniques, Tel Aviv, Israel, April
  29 - May 3, 2018 Proceedings, Part {II}}, volume 10821 of {\em Lecture Notes
  in Computer Science}, pages 133--161. Springer, 2018.
\newblock \href {https://doi.org/10.1007/978-3-319-78375-8\_5}
  {\path{doi:10.1007/978-3-319-78375-8\_5}}.

\bibitem[BFH{\etalchar{+}}23]{Bourneuf23}
Romain Bourneuf, Luk\'{a}\v{s} Folwarczn\'{y}, Pavel Hub\'{a}\v{c}ek, Alon
  Rosen, and Nikolaj~I. Schwartzbach.
\newblock {PPP}-completeness and extremal combinatorics.
\newblock In {\em 14th {I}nnovations in {T}heoretical {C}omputer {S}cience
  {C}onference}, volume 251 of {\em LIPIcs. Leibniz Int. Proc. Inform.}, pages
  Art. No. 22, 20. Schloss Dagstuhl. Leibniz-Zent. Inform., Wadern, 2023.
\newblock \href {https://doi.org/10.4230/lipics.itcs.2023.22}
  {\path{doi:10.4230/lipics.itcs.2023.22}}.

\bibitem[BFI23]{Buss23}
Sam Buss, Noah Fleming, and Russell Impagliazzo.
\newblock {TFNP} characterizations of proof systems and monotone circuits.
\newblock In {\em 14th {I}nnovations in {T}heoretical {C}omputer {S}cience
  {C}onference}, volume 251 of {\em LIPIcs. Leibniz Int. Proc. Inform.}, pages
  Art. No. 30, 40. Schloss Dagstuhl. Leibniz-Zent. Inform., Wadern, 2023.
\newblock \href {https://doi.org/10.4230/lipics.itcs.2023.30}
  {\path{doi:10.4230/lipics.itcs.2023.30}}.

\bibitem[BGS24]{BGS24merrier}
Huck Bennett, Surendra Ghentiyala, and Noah Stephens{-}Davidowitz.
\newblock The more the merrier! on the complexity of finding multicollisions,
  with connections to codes and lattices.
\newblock {\em Electron. Colloquium Comput. Complex.}, pages TR24--018, 2024.
\newblock URL: \url{https://eccc.weizmann.ac.il/report/2024/018}.

\bibitem[BJ12]{BussJohnson12}
Samuel~R. Buss and Alan~S. Johnson.
\newblock Propositional proofs and reductions between {NP} search problems.
\newblock {\em Ann. Pure Appl. Logic}, 163(9):1163--1182, 2012.
\newblock \href {https://doi.org/10.1016/j.apal.2012.01.015}
  {\path{doi:10.1016/j.apal.2012.01.015}}.

\bibitem[BK94]{Buss94}
Samuel~R. Buss and Jan Kraj\'{\i}\v{c}ek.
\newblock An application of {B}oolean complexity to separation problems in
  bounded arithmetic.
\newblock {\em Proc. London Math. Soc. (3)}, 69(1):1--21, 1994.
\newblock \href {https://doi.org/10.1112/plms/s3-69.1.1}
  {\path{doi:10.1112/plms/s3-69.1.1}}.

\bibitem[BKP18]{Bit18}
Nir Bitansky, Yael~Tauman Kalai, and Omer Paneth.
\newblock Multi-collision resistance: a paradigm for keyless hash functions.
\newblock In {\em S{TOC}'18---{P}roceedings of the 50th {A}nnual {ACM} {SIGACT}
  {S}ymposium on {T}heory of {C}omputing}, pages 671--684. ACM, New York, 2018.
\newblock \href {https://doi.org/10.1145/3188745.3188870}
  {\path{doi:10.1145/3188745.3188870}}.

\bibitem[BM04]{Buresh-OppenheimM04}
Josh Buresh{-}Oppenheim and Tsuyoshi Morioka.
\newblock Relativized {NP} search problems and propositional proof systems.
\newblock In {\em 19th Annual {IEEE} Conference on Computational Complexity
  {(CCC} 2004), 21-24 June 2004, Amherst, MA, {USA}}, pages 54--67. {IEEE}
  Computer Society, 2004.
\newblock \href {https://doi.org/10.1109/CCC.2004.1313795}
  {\path{doi:10.1109/CCC.2004.1313795}}.

\bibitem[CFGH19]{Conlon19}
David Conlon, Jacob Fox, Andrey Grinshpun, and Xiaoyu He.
\newblock Online {R}amsey numbers and the subgraph query problem.
\newblock In {\em Building bridges {II}---mathematics of {L}\'{a}szl\'{o}
  {L}ov\'{a}sz}, volume~28 of {\em Bolyai Soc. Math. Stud.}, pages 159--194.
  Springer, Berlin, [2019] \copyright 2019.

\bibitem[CG88]{ChorGoldreich88}
Benny Chor and Oded Goldreich.
\newblock Unbiased bits from sources of weak randomness and probabilistic
  communication complexity.
\newblock volume~17, pages 230--261. 1988.
\newblock Special issue on cryptography.
\newblock \href {https://doi.org/10.1137/0217015} {\path{doi:10.1137/0217015}}.

\bibitem[CGMS23]{campos2023exponential}
Marcelo Campos, Simon Griffiths, Robert Morris, and Julian Sahasrabudhe.
\newblock An exponential improvement for diagonal {R}amsey, 2023.
\newblock \href {http://arxiv.org/abs/2303.09521} {\path{arXiv:2303.09521}}.

\bibitem[Coh16]{Cohen16}
Gil Cohen.
\newblock Two-source dispersers for polylogarithmic entropy and improved
  {R}amsey graphs.
\newblock In {\em S{TOC}'16---{P}roceedings of the 48th {A}nnual {ACM} {SIGACT}
  {S}ymposium on {T}heory of {C}omputing}, pages 278--284. ACM, New York, 2016.
\newblock \href {https://doi.org/10.1145/2897518.2897530}
  {\path{doi:10.1145/2897518.2897530}}.

\bibitem[CZ19]{Chattopadhyay19}
Eshan Chattopadhyay and David Zuckerman.
\newblock Explicit two-source extractors and resilient functions.
\newblock {\em Ann. of Math. (2)}, 189(3):653--705, 2019.
\newblock \href {https://doi.org/10.4007/annals.2019.189.3.1}
  {\path{doi:10.4007/annals.2019.189.3.1}}.

\bibitem[Das19]{daskalakis2019equilibria}
Constantinos Daskalakis.
\newblock Equilibria, fixed points, and computational complexity.
\newblock In {\em Proceedings of the International Congress of Mathematicians
  (ICM). World Scientific}, 2019.

\bibitem[DP11]{DP11CLS}
Constantinos Daskalakis and Christos~H. Papadimitriou.
\newblock Continuous local search.
\newblock In Dana Randall, editor, {\em Proceedings of the Twenty-Second Annual
  {ACM-SIAM} Symposium on Discrete Algorithms, {SODA} 2011, San Francisco,
  California, USA, January 23-25, 2011}, pages 790--804. {SIAM}, 2011.
\newblock \href {https://doi.org/10.1137/1.9781611973082.62}
  {\path{doi:10.1137/1.9781611973082.62}}.

\bibitem[Erd47]{Erds1947SomeRO}
Paul Erd{\"o}s.
\newblock Some remarks on the theory of graphs.
\newblock {\em Bulletin of the American Mathematical Society}, 53:292--294,
  1947.
\newblock URL: \url{https://api.semanticscholar.org/CorpusID:14215209}.

\bibitem[FGHS23]{FGHS23}
John Fearnley, Paul Goldberg, Alexandros Hollender, and Rahul Savani.
\newblock The complexity of gradient descent: {CLS} = {PPAD} {\(\cap\)} {PLS}.
\newblock {\em J. {ACM}}, 70(1):7:1--7:74, 2023.
\newblock \href {https://doi.org/10.1145/3568163} {\path{doi:10.1145/3568163}}.

\bibitem[FGPR24]{FlemingGPR23}
Noah Fleming, Stefan Grosser, Toniann Pitassi, and Robert Robere.
\newblock Black-{B}ox {PPP} {I}s {N}ot {T}uring-{C}losed.
\newblock In {\em S{TOC}'24---{P}roceedings of the 56th {A}nnual {ACM}
  {S}ymposium on {T}heory of {C}omputing}, pages 1405--1414. ACM, New York,
  2024.
\newblock \href {https://doi.org/10.1145/3618260.3649769}
  {\path{doi:10.1145/3618260.3649769}}.

\bibitem[FKP19]{Fleming19}
Noah Fleming, Pravesh Kothari, and Toniann Pitassi.
\newblock Semialgebraic proofs and efficient algorithm design.
\newblock {\em Found. Trends Theor. Comput. Sci.}, 14(1-2):front matter,
  1--221, 2019.
\newblock \href {https://doi.org/10.1561/0400000086}
  {\path{doi:10.1561/0400000086}}.

\bibitem[GHJ{\etalchar{+}}22]{Goeoes2022}
Mika G{\"{o}}{\"{o}}s, Alexandros Hollender, Siddhartha Jain, Gilbert Maystre,
  William Pires, Robert Robere, and Ran Tao.
\newblock Separations in {P}roof {C}omplexity and {TFNP}.
\newblock In {\em 63rd {IEEE} Annual Symposium on Foundations of Computer
  Science, {FOCS} 2022, Denver, CO, USA, October 31 - November 3, 2022}, pages
  1150--1161. {IEEE}, 2022.
\newblock \href {https://doi.org/10.1109/FOCS54457.2022.00111}
  {\path{doi:10.1109/FOCS54457.2022.00111}}.

\bibitem[GKRS19]{GPRS19}
Mika G\"{o}\"{o}s, Pritish Kamath, Robert Robere, and Dmitry Sokolov.
\newblock Adventures in monotone complexity and {TFNP}.
\newblock In {\em 10th {I}nnovations in {T}heoretical {C}omputer {S}cience},
  volume 124 of {\em LIPIcs. Leibniz Int. Proc. Inform.}, pages Art. No. 38,
  19. Schloss Dagstuhl. Leibniz-Zent. Inform., Wadern, 2019.

\bibitem[GM08]{Magen08}
Konstantinos Georgiou and Avner Magen.
\newblock Limitations of the {S}herali-{A}dams {L}ift and {P}roject {S}ystem:
  {C}ompromising {L}ocal and {G}lobal {A}rguments.
\newblock Technical Report 587, University of Toronto, 2008.

\bibitem[GP17]{GP17}
Paul~W. Goldberg and Christos~H. Papadimitriou.
\newblock {TFNP:} an update.
\newblock In Dimitris Fotakis, Aris Pagourtzis, and Vangelis~Th. Paschos,
  editors, {\em Algorithms and Complexity - 10th International Conference,
  {CIAC} 2017}, volume 10236 of {\em Lecture Notes in Computer Science}, pages
  3--9, 2017.
\newblock \href {https://doi.org/10.1007/978-3-319-57586-5\_1}
  {\path{doi:10.1007/978-3-319-57586-5\_1}}.

\bibitem[GRS23]{guruswami23}
Venkatesan Guruswami, Atri Rudra, and Madhu Sudan.
\newblock {\em Essential {C}oding {T}heory (Draft)}.
\newblock 2023.
\newblock URL:
  \url{https://cse.buffalo.edu/faculty/atri/courses/coding-theory/book/}.

\bibitem[HHRS15]{HHRS15}
Iftach Haitner, Jonathan~J. Hoch, Omer Reingold, and Gil Segev.
\newblock Finding collisions in interactive protocols - tight lower bounds on
  the round and communication complexities of statistically hiding commitments.
\newblock {\em {SIAM} J. Comput.}, 44(1):193--242, 2015.
\newblock \href {https://doi.org/10.1137/130938438}
  {\path{doi:10.1137/130938438}}.

\bibitem[HKT24]{HubacekKT24}
Pavel Hub\'{a}\v{c}ek, Erfan Khaniki, and Neil Thapen.
\newblock T{FNP} intersections through the lens of feasible disjunction.
\newblock In {\em 15th {I}nnovations in {T}heoretical {C}omputer {S}cience
  {C}onference}, volume 287 of {\em LIPIcs. Leibniz Int. Proc. Inform.}, pages
  Art. No. 63, 24. Schloss Dagstuhl. Leibniz-Zent. Inform., Wadern, 2024.
\newblock \href {https://doi.org/10.4230/lipics.itcs.2024.63}
  {\path{doi:10.4230/lipics.itcs.2024.63}}.

\bibitem[Hou15]{beachhouse}
Beach House.
\newblock {\em PPP}.
\newblock Aug 2015.
\newblock URL: \url{https://en.wikipedia.org/wiki/PPP_(song)}.

\bibitem[Je{\v{r}}09]{Jer09}
Emil Je{\v{r}}{\'{a}}bek.
\newblock Approximate counting by hashing in bounded arithmetic.
\newblock {\em J. Symbolic Logic}, 74(3):829--860, 2009.
\newblock \href {https://doi.org/10.2178/jsl/1245158087}
  {\path{doi:10.2178/jsl/1245158087}}.

\bibitem[Je{\v{r}}16]{Jer16}
Emil Je{\v{r}}{\'{a}}bek.
\newblock Integer factoring and modular square roots.
\newblock {\em J. Comput. System Sci.}, 82(2):380--394, 2016.
\newblock \href {https://doi.org/10.1016/j.jcss.2015.08.001}
  {\path{doi:10.1016/j.jcss.2015.08.001}}.

\bibitem[Jou04]{Joux04Multicollision}
Antoine Joux.
\newblock Multicollisions in iterated hash functions. application to cascaded
  constructions.
\newblock In Matthew~K. Franklin, editor, {\em Advances in Cryptology -
  {CRYPTO} 2004, 24th Annual International CryptologyConference, Santa Barbara,
  California, USA, August 15-19, 2004, Proceedings}, volume 3152 of {\em
  Lecture Notes in Computer Science}, pages 306--316. Springer, 2004.
\newblock \href {https://doi.org/10.1007/978-3-540-28628-8\_19}
  {\path{doi:10.1007/978-3-540-28628-8\_19}}.

\bibitem[JPY88]{JPY88}
David~S. Johnson, Christos~H. Papadimitriou, and Mihalis Yannakakis.
\newblock How easy is local search?
\newblock volume~37, pages 79--100. 1988.
\newblock 26th IEEE Conference on Foundations of Computer Science (Portland,
  OR, 1985).
\newblock \href {https://doi.org/10.1016/0022-0000(88)90046-3}
  {\path{doi:10.1016/0022-0000(88)90046-3}}.

\bibitem[KNY18]{KMY18}
Ilan Komargodski, Moni Naor, and Eylon Yogev.
\newblock Collision resistant hashing for paranoids: dealing with multiple
  collisions.
\newblock In {\em Advances in cryptology---{EUROCRYPT} 2018. {P}art {II}},
  volume 10821 of {\em Lecture Notes in Comput. Sci.}, pages 162--194.
  Springer, Cham, 2018.
\newblock URL: \url{https://doi.org/10.1007/978-3-319-78375-8_6}, \href
  {https://doi.org/10.1007/978-3-319-78375-8\_6}
  {\path{doi:10.1007/978-3-319-78375-8\_6}}.

\bibitem[KNY19]{Komargodski2017}
Ilan Komargodski, Moni Naor, and Eylon Yogev.
\newblock White-box vs. black-box complexity of search problems: Ramsey and
  graph property testing.
\newblock {\em J. ACM}, 66(5), jul 2019.
\newblock \href {https://doi.org/10.1145/3341106} {\path{doi:10.1145/3341106}}.

\bibitem[KoNT11]{KLNT11}
Leszek~Aleksander Ko\l~odziejczyk, Phuong Nguyen, and Neil Thapen.
\newblock The provably total {NP} search problems of weak second order bounded
  arithmetic.
\newblock {\em Ann. Pure Appl. Logic}, 162(6):419--446, 2011.
\newblock \href {https://doi.org/10.1016/j.apal.2010.12.002}
  {\path{doi:10.1016/j.apal.2010.12.002}}.

\bibitem[KoT22]{Thapen22}
Leszek~Aleksander Ko\l~odziejczyk and Neil Thapen.
\newblock Approximate counting and {NP} search problems.
\newblock {\em J. Math. Log.}, 22(3):Paper No. 2250012, 31, 2022.
\newblock \href {https://doi.org/10.1142/S021906132250012X}
  {\path{doi:10.1142/S021906132250012X}}.

\bibitem[Kra05]{Kra05}
Jan Kraj{\'i}ček.
\newblock {Structured pigeonhole principle, search problems and hard
  tautologies}.
\newblock {\em Journal of Symbolic Logic}, 70(2):619 -- 630, 2005.
\newblock \href {https://doi.org/10.2178/jsl/1120224731}
  {\path{doi:10.2178/jsl/1120224731}}.

\bibitem[Li23]{Li2023}
Xin Li.
\newblock Two source extractors for asymptotically optimal entropy, and (many)
  more.
\newblock {\em CoRR}, abs/2303.06802, 2023.
\newblock \href {http://arxiv.org/abs/2303.06802} {\path{arXiv:2303.06802}},
  \href {https://doi.org/10.48550/arXiv.2303.06802}
  {\path{doi:10.48550/arXiv.2303.06802}}.

\bibitem[Li24]{Li23tfap}
Jiawei Li.
\newblock Total {NP} search problems with abundant solutions.
\newblock In {\em 15th {I}nnovations in {T}heoretical {C}omputer {S}cience
  {C}onference}, volume 287 of {\em LIPIcs. Leibniz Int. Proc. Inform.}, pages
  Art. No. 75, 23. Schloss Dagstuhl. Leibniz-Zent. Inform., Wadern, 2024.
\newblock \href {https://doi.org/10.4230/lipics.itcs.2024.75}
  {\path{doi:10.4230/lipics.itcs.2024.75}}.

\bibitem[LZ19]{Liu19}
Qipeng Liu and Mark Zhandry.
\newblock On finding quantum multi-collisions.
\newblock In {\em Advances in cryptology---{EUROCRYPT} 2019. {P}art {III}},
  volume 11478 of {\em Lecture Notes in Comput. Sci.}, pages 189--218.
  Springer, Cham, 2019.
\newblock \href {https://doi.org/10.1007/978-3-030-17659-4}
  {\path{doi:10.1007/978-3-030-17659-4}}.

\bibitem[Mor01]{Morioka01}
Tsuyoshi Morioka.
\newblock Classification of search problems and their definability in bounded
  arithmetic.
\newblock {\em Electron. Colloquium Comput. Complex.}, {TR01-082}, 2001.
\newblock URL:
  \url{https://eccc.weizmann.ac.il/eccc-reports/2001/TR01-082/index.html},
  \href {http://arxiv.org/abs/TR01-082} {\path{arXiv:TR01-082}}.

\bibitem[MPW00]{Maciel00}
Alexis Maciel, Toniann Pitassi, and Alan~R. Woods.
\newblock A new proof of the weak pigeonhole principle.
\newblock In {\em Proceedings of the {T}hirty-{S}econd {A}nnual {ACM}
  {S}ymposium on {T}heory of {C}omputing}, pages 368--377. ACM, New York, 2000.
\newblock \href {https://doi.org/10.1145/335305.335348}
  {\path{doi:10.1145/335305.335348}}.

\bibitem[M{\"{u}}l21]{Muller21}
Moritz M{\"{u}}ller.
\newblock Typical forcings, {NP} search problems and an extension of a theorem
  of riis.
\newblock {\em Ann. Pure Appl. Log.}, 172(4):102930, 2021.
\newblock URL: \url{https://doi.org/10.1016/j.apal.2020.102930}, \href
  {https://doi.org/10.1016/J.APAL.2020.102930}
  {\path{doi:10.1016/J.APAL.2020.102930}}.

\bibitem[Pap94]{Pap94}
Christos~H. Papadimitriou.
\newblock On the complexity of the parity argument and other inefficient proofs
  of existence.
\newblock volume~48, pages 498--532. 1994.
\newblock 31st Annual Symposium on Foundations of Computer Science (FOCS) (St.
  Louis, MO, 1990).
\newblock \href {https://doi.org/10.1016/S0022-0000(05)80063-7}
  {\path{doi:10.1016/S0022-0000(05)80063-7}}.

\bibitem[PPY23]{Pasarkar2023}
Amol Pasarkar, Christos~H. Papadimitriou, and Mihalis Yannakakis.
\newblock Extremal combinatorics, {I}terated {P}igeonhole {A}rguments and
  {G}eneralizations of {PPP}.
\newblock In Yael~Tauman Kalai, editor, {\em 14th Innovations in Theoretical
  Computer Science Conference, {ITCS} 2023}, volume 251 of {\em LIPIcs}, pages
  88:1--88:20. Schloss Dagstuhl - Leibniz-Zentrum f{\"{u}}r Informatik, 2023.
\newblock \href {https://doi.org/10.4230/LIPIcs.ITCS.2023.88}
  {\path{doi:10.4230/LIPIcs.ITCS.2023.88}}.

\bibitem[Ram29]{Ramsey1929}
F.~P. Ramsey.
\newblock On a {P}roblem of {F}ormal {L}ogic.
\newblock {\em Proc. London Math. Soc. (2)}, 30(4):264--286, 1929.
\newblock \href {https://doi.org/10.1112/plms/s2-30.1.264}
  {\path{doi:10.1112/plms/s2-30.1.264}}.

\bibitem[RV22]{RN22}
Ron~D. Rothblum and Prashant~Nalini Vasudevan.
\newblock Collision-resistance from multi-collision-resistance.
\newblock In {\em Advances in cryptology---{CRYPTO} 2022. {P}art {III}}, volume
  13509 of {\em Lecture Notes in Comput. Sci.}, pages 503--529. Springer, Cham,
  [2022] \copyright 2022.
\newblock URL: \url{https://doi.org/10.1007/978-3-031-15982-4_17}, \href
  {https://doi.org/10.1007/978-3-031-15982-4\_17}
  {\path{doi:10.1007/978-3-031-15982-4\_17}}.

\bibitem[ST11]{Skelley11}
Alan Skelley and Neil Thapen.
\newblock The provably total search problems of bounded arithmetic.
\newblock {\em Proc. Lond. Math. Soc. (3)}, 103(1):106--138, 2011.
\newblock \href {https://doi.org/10.1112/plms/pdq044}
  {\path{doi:10.1112/plms/pdq044}}.

\bibitem[Tha02]{Thapen02}
Neil Thapen.
\newblock A model-theoretic characterization of the weak pigeonhole principle.
\newblock {\em Ann. Pure Appl. Logic}, 118(1-2):175--195, 2002.
\newblock \href {https://doi.org/10.1016/S0168-0072(02)00038-6}
  {\path{doi:10.1016/S0168-0072(02)00038-6}}.

\bibitem[Vad12]{Vadhan12psr}
Salil~P. Vadhan.
\newblock Pseudorandomness.
\newblock {\em Found. Trends Theor. Comput. Sci.}, 7(1-3):1--336, 2012.
\newblock \href {https://doi.org/10.1561/0400000010}
  {\path{doi:10.1561/0400000010}}.

\bibitem[YZ22]{Yamakawa2022}
Takashi Yamakawa and Mark Zhandry.
\newblock Verifiable quantum advantage without structure.
\newblock In {\em 63rd {IEEE} Annual Symposium on Foundations of Computer
  Science, {FOCS} 2022, Denver, CO, USA, October 31 - November 3, 2022}, pages
  69--74. {IEEE}, 2022.
\newblock \href {https://doi.org/10.1109/FOCS54457.2022.00014}
  {\path{doi:10.1109/FOCS54457.2022.00014}}.

\end{thebibliography}

\appendix

\end{document}